\newlength{\abstractwidth}
\renewcommand{\title}[1]{\vbox{\center\bf{\Large{#1}}}\vspace{5mm}}
\renewcommand{\author}[1]{\vbox{\center#1}\vspace{5mm}}
\newcommand{\address}[1]{\vbox{\center\em#1}}
\newcommand{\email}[1]{\vbox{\center\tt#1}\vspace{5mm}}
\newcommand{\sbreak}{
    \begin{center}
        $\blacklozenge$$\blacklozenge$$\blacklozenge$$\blacklozenge$
    \end{center}
}
\newcommand{\be}{\begin{equation}}
\newcommand{\ee}{\end{equation}}
\newcommand{\bi}{\begin{itemize}}
\newcommand{\ei}{\end{itemize}}
\newcommand{\tb}[1]{\textbf{#1}}
\newcommand{\mc}[1]{\mathcal{#1}}
\newcommand{\ket}[1]{| #1 \rangle }
\newcommand{\bra}[1]{\langle #1 |}
\newcommand{\trr}[1]{\text{tr}\,\{ #1 \}}
\newcommand{\tr}{\text{tr}}
\newcommand{\T}[1]{\langle #1 \rangle}
\newcommand{\Fk}{F_{\mathcal{E}}^{(k)}}
\theoremstyle{plain}
\newtheorem{theorem}{Theorem}
\theoremstyle{plain}
\newtheorem{lemma}{Lemma}
\theoremstyle{plain}
\theoremstyle{plain}
\theoremstyle{remark}
\newtheorem{conjecture}{Conjecture}
\theoremstyle{definition}
\theoremstyle{definition}
\theoremstyle{definition}
\theoremstyle{definition}
\begin{document}

\begin{titlepage}

\begin{center}
\hfill \\
\hfill \\
\vskip .5cm

\title{Chaos and complexity by design}
\author{Daniel A. Roberts${}^{a,b}$ and Beni Yoshida${}^{c}$}
\address{$^{a}$ Center for Theoretical Physics {\it and} \\  Department of Physics, Massachusetts Institute of Technology \\ Cambridge, Massachusetts 02139, USA

\vspace{10pt}

$^{b}$ School of Natural Sciences, Institute for Advanced Study,\\ Princeton, NJ 08540, USA

\vspace{10pt}

$^{c}$ Perimeter Institute for Theoretical Physics,\\ Waterloo, Ontario N2L 2Y5, Canada
}

\email{roberts@ias.edu, byoshida@perimeterinstitute.ca}

\end{center}

\begin{abstract}
We study the relationship between quantum chaos and pseudorandomness by developing probes of unitary design. A natural probe of randomness is the ``frame potential,'' which is minimized by unitary $k$-designs and measures the $2$-norm distance between the Haar random unitary ensemble and another ensemble. A natural probe of quantum chaos is out-of-time-order (OTO) four-point correlation functions. We show that the norm squared of a generalization of out-of-time-order $2k$-point correlators is proportional to the $k$th frame potential, providing a quantitative connection between chaos and pseudorandomness. Additionally, we prove that these $2k$-point correlators for Pauli operators completely determine the $k$-fold channel of an ensemble of unitary operators. Finally, we use a counting argument to obtain a lower bound on the quantum circuit complexity in terms of the frame potential. This provides a direct link between chaos, complexity, and randomness. 
\end{abstract}

\end{titlepage}


\tableofcontents
\baselineskip=17.63pt

\section{Introduction}

Random unitary operators have often been used to approximate chaotic dynamics. Notably, in the context of black holes Hayden and Preskill used a model of random dynamics to show such systems can be efficient information scramblers; initially localized information will quickly thermalize and become thoroughly mixed across the entire system \cite{Hayden07}.  It was later conjectured \cite{Sekino08} and then proven \cite{Shenker:2013pqa,Maldacena:2015waa} that black holes are the fastest scramblers in nature, suggesting their dynamics must have much in common with random unitary evolution. Such ``scrambling'' is a byproduct of strongly-coupled chaotic dynamics \cite{Lashkari13,Almheiri13b,Hosur:2015ylk}, and so the work of \cite{Hayden07} suggests there should be strong quantitative connection between such chaos and pseudorandomness.\footnote{
    Scrambling has a close connection to the decoupling theorem, see e.g.~\cite{Berta, Brown15}. Also, see \cite{Chamon14,Hayden:2016cfa,Nahum:2016muy} for studies connecting randomness to entanglement.
}

The connection between pseudorandomness and chaotic dynamics can also be understood at the level of operators. For example, consider $W$ a local operator of low weight (e.g. a Pauli operator acting on a single spin). With a chaotic Hamiltonian $H$, the operator $W(t)=e^{iHt} W e^{-iHt}$ will be a complicated nonlocal operator that has an expansion as a sum of products of many local operators with an exponential number of terms each with a pseudorandom coefficient \cite{Roberts:2014isa}. We can gain intuition for this by considering the Baker-Campbell-Hausdorff expansion of $W(t)$
\be
   W(t) = \sum_{j=0}^\infty \frac{(it)^j}{j!}\underbrace{[H, \dots[H}_j,W\underbrace{] \dots]}_j. \label{eq:BCH}
\ee
If $H$ is $q$-local and sufficiently ``generic'' to contain all possible $q$-local interactions, then the $j$th term will consist of a sum of roughly $\sim (n/q)^{qj}$ terms each of weight ranging from $1$ to $\sim j(q-1)$, where we assume the system consists of $n$ spins so that the Hilbert space is of dimension $d=2^n$: 
\begin{itemize}
    \item At roughly $j \sim n / (q-1)$,  there will be many terms in the sum of weight $n$. These terms are delocalized over the entire system. For a system without spatial locality, the relationship between time $t$ and when the $j$th term becomes $O(1)$ is roughly $t \sim \log j$. The timescale $t \sim O(\log n)$ for the operator to cover the entire system is indicative of fast-scrambling behavior.
    \item At around $j \sim 2n / q \log(n/q)$, the total number of terms will reach $2^{2n}$, equal to the total number of orthogonal linear operators acting on the Hilbert space.\footnote{The number of terms is actually not a well defined quantity, since it can change under local rotations of spin, and should possibly instead consider the minimum number of terms under all local changes of basis. For chaotic systems, the distinction should not be important, since there will not exist a local change of basis where the number of terms drastically simplifies. We thank Douglas Stanford who thanks Juan Maldacena for raising this point.} Even after the operator covers the entire system, it continues to spread over the unitary group (though possibly only until a time roughly $O(\log n) +$ a constant). 
\end{itemize}
Furthermore, the coefficient of any given term will be incredibly complicated, depending on the details of the path through the interaction graph and the time. 
Over time, $W(t)$ should cover the entire unitary group (possibly quotiented by a group set by the symmetries of the Hamiltonian). At sufficiently large $t$, one might even suspect that for many purposes $W(t)$ can be approximated by a random operator $\tilde{W}\equiv U^\dagger W U$, with $U$ sampled randomly from the unitary group.\footnote{Of course, for a less generic and sparser Hamiltonian e.g. a free system, the expansion of $W(t)$ will organize itself in a way such that the commutator in Eq.~\eqref{eq:BCH} does not produce an exponential number of terms. Said another way, the terms in the expansion of $W(t)$ will only cover a small subset of the unitary group, and the assumption of uniform randomness will not hold.} If this is true, then we would say that $W(t)$ behaves pseudorandomly.

\subsubsection*{Chaos}

This pattern of growth of $W(t)$ can be measured by a second local operator $V$. For example, the group commutator of $W(t)$ with $V$, given by $W(t)^\dagger\, V^\dagger \, W(t)\, V$, measures the effect of the small perturbation $V$ on a later measurement of $W$. In other words, it is a measure of the butterfly effect and the strength of chaos:
\begin{itemize}
    \item If $W(t)$ is of low weight and few terms, then $W(t)$ and $V$ approximately commute $[W(t),V]\approx 0$, and the operator $W(t)^\dagger\, V^\dagger \, W(t)\, V$ is close to the identity. 
    \item If instead the dynamics are strongly chaotic, $W(t)$ will grow to eventually have a large commutator with all other local operators in the system (in fact, just about all other operators), and so $W(t)^\dagger\, V^\dagger \, W(t)\, V$ will be nearly random and have a small expectation in most states.
\end{itemize}
Thus, the decay of out-of-time-order (OTO) four-point functions of the form
\be
    \langle W(t)^\dagger\, V^\dagger \, W(t)\, V \rangle = \langle U(t)^\dagger W^\dagger U(t) ~ V^\dagger ~ U(t)^\dagger W U(t) ~ V \rangle, \label{oto-4pt-def}
\ee 
can act as a simple diagnostic of quantum chaos, where 
$U(t)=e^{-iHt}$ is the unitary time evolution operator, and the correlator is usually evaluated on the thermal state $\langle \cdot \rangle \equiv \tr \, \{ e^{-\beta H} \, \cdot \, \} / \tr \, e^{-\beta H}$ \cite{Larkin:1969abc,Almheiri13b,Shenker:2013pqa,Kitaev:2014t1,Maldacena:2015waa}.\footnote{Usually $W, V$ are taken to be Hermitian, but here we will more generally allow them to be unitary.} For further discussion, please see a selection (but by all means not a complete set) of recent work on out-of-time-order four-point functions and chaos \cite{Shenker:2013pqa,Shenker:2013yza,Roberts:2014isa,Roberts:2014ifa,Kitaev:2014t1,Shenker:2014cwa,Maldacena:2015waa,Hosur:2015ylk,Stanford:2015owe,Fitzpatrick:2016thx,Gu:2016hoy,Caputa:2016tgt,Swingle:2016var,Perlmutter:2016pkf,Blake:2016wvh,Roberts:2016wdl,Blake:2016sud,Swingle:2016jdj,Huang:2016knw,fan2016out,Halpern:2016zcm}.

For sufficiently chaotic systems and sufficiently large times, the correlators Eq.~\eqref{oto-4pt-def} will reach a floor value equivalent to the substitution, $W(t) \to U^\dagger W U$ with $U$ chosen randomly. 
Furthermore, it can be shown \cite{Hosur:2015ylk} that the decay of correlators Eq.~\eqref{oto-4pt-def} implies the sort of information-theoretic scrambling studied by Hayden and Preskill in \cite{Hayden07}. This explains why the random dynamics model of \cite{Hayden07} was such a good approximation for strongly-chaotic systems, such as black holes. However, are out-of-time-order four-point functions Eq.~\eqref{oto-4pt-def} actually a sufficient diagnostic of chaos?

In \cite{Hayden07} the authors did not actually require the dynamics to be a uniformly random unitary operator sampled from the Haar measure on the unitary group. Instead, it would have been sufficient to sample from a simpler ensemble of operators that could reproduce only a few moments of the larger Haar ensemble.\footnote{Inspection of Eq.~\eqref{oto-4pt-def} suggests that only two moments are required since there are only two copies of $U$ and two copies of $U^\dagger$ in the correlator. We will explain this more carefully in the rest of the paper.} Of course, there may be other finer-grained information theoretic properties of a system that are dependent on higher moments and would require a larger ensemble to replicate the statistics of the Haar random dynamics. If random dynamics is a valid approximation for computing some, but not all, of these finer-grained quantities then they can represent a measure of the degree of pseudorandomness of the underlying dynamics. In this paper, we will make some progress in developing some of these finer-grained quantities and therefore connect measures of chaos to measures of randomness.

\subsubsection*{Unitary design}

The extent to which an ensemble of operators behaves like the uniform distribution can be quantified by the notion of unitary $k$-designs \cite{DiVincenzo02, Emerson05, Ambainis07, Gross07, Dankert09}.\footnote{N.B. in the literature, these are often referred to as unitary $t$-designs. Here, we will reserve $t$ for time. For recent work on unitary designs, see~\cite{Emerson03, Emerson05, Brown10, Brown15, Hayden07, Harrow09, Knill08, Brandao12, Kueng15, Webb15, Low_thesis, Scott08, Zhu15, Collins16,Nakata:2016blv}.} A unitary $k$-design is a subset of the unitary group that replicates the statistics of at least $k$ moments of the distribution. Consider a finite-dimensional Hilbert space $\mathcal{H}^{\otimes k}=(\mathbb{C}^{d})^{\otimes k}$ consisting of $k$ copies of $\mathcal{H}=\mathbb{C}^d$. Given an ensemble of unitary operators $\mathcal{E}=\{p_{j},U_{j}\}$ acting on $\mathcal{H}$ with probability distribution $p_{j}$, the ensemble $\mathcal{E}$ is a unitary $k$-design if and only if
    \begin{align}
        \sum_{j}p_{j} (\underbrace{U_{j}\otimes \cdots \otimes U_{j}}_{k}) \rho (\underbrace{U_{j}^{\dagger}\otimes \cdots \otimes U_{j}^{\dagger}}_{k}) = \int_{\text{Haar}}dU(\underbrace{U\otimes \cdots \otimes U}_{k}) \rho (\underbrace{U^{\dagger}\otimes \cdots \otimes U^{\dagger}}_{k}), 
    \end{align}
for all quantum states $\rho$ in $\mathcal{H}^{\otimes k}$. Therefore, whether or not a given ensemble $\mathcal{E}$ forms at most a $k$-design is a fine-grained notion of the randomness of the ensemble.\footnote{With this definition, a $k+1$-design is automatically a $k$-design.}

Inspection of Eq.~\eqref{oto-4pt-def} suggests that a $2$-design is sufficient for OTO four-point functions to decay. This was also the requirement for a system to appear scrambled \cite{Page93,Hayden07,Sekino08}. However, the four-point functions Eq.~\eqref{oto-4pt-def} are not the only observables related to chaos and the butterfly effect. In an attempt to understand the geometry and interior of a typical microstate of a holographic black hole, Shenker and Stanford constructed a series of black hole geometries by making multiple perturbations of an essentially maximally entangled state \cite{Shenker:2013yza}. While they were unfortunately unsuccessful at constructing a generic microstate, studying correlation functions in these geometries leads one to a family of $2k$-point out-of-time-order correlation functions of the form
    \be
        \T{\mc{W}^\dagger \, V(0)^\dagger \, \mc{W} \, V(0) } = \T{W_1(t_1)^\dagger \cdots W_{k-1}(t_{k-1})^\dagger ~ V(0)^\dagger ~ W_{k-1}(t_{k-1}) \cdots W_1(t_1) ~ V(0) }, \label{multiple-shocks-correlator}
    \ee
where $V(0)$ and the $W_j(0)$ are local operators, and $\mc{W} \equiv W_{k-1}(t_{k-1}) \cdots W_1(t_1)$ is a composite operator that lets us understand how the correlator is organized. The existence of these correlators begs the question: do these contain any additional information about the system or are they redundant with the four-point functions? The fact that the correlators Eq.~\eqref{multiple-shocks-correlator} involve many more copies of unitary time evolution, suggests a relationship between higher-point functions and unitary design. In fact, we will show that a generalization of the OTO higher-point correlators Eq.~\eqref{multiple-shocks-correlator} are probes of unitary design.

\subsubsection*{Complexity}

Probes of chaos and unitary design can only indirectly address a sharp question of recent interest in holography: the existence and growth of the black hole interior via the time evolution of the state. To be precise, one cannot make use of unitary designs since evolution by a time-independent Hamiltonian does not define any ensemble. However, even in this setting we still expect that higher-point OTO should be able to see fine-grained properties of time evolved states that are not captured by four-point OTO correlators.

One fine-grained quantity is the quantum circuit complexity of a quantum state $|\psi(t)\rangle = U(t)|\psi(0)\rangle$. Consider a simple initial state, such as the product state $|\psi(0)\rangle= |0\rangle^{\otimes n}$, undergoing chaotic time evolution. After a short time thermalization time of $O(1)$, the system will evolve to an equilibrium in which local quantities will reach their thermodynamic values. Next, after the scrambling time of $O(\log n)$, the initial state $|\psi(0)\rangle$ will be forgotten: the information will be distributed in such a way that measurements of even a large number of collections of local quantities of $|\psi(t)\rangle$ will not reveal the initial state. However, even after the scrambling time, the quantum circuit complexity of the time-evolved state $|\psi(t)\rangle$, as quantified by the number of elementary quantum gates necessary to reach it from the initial product state, will continue to evolve. In fact, it is expected to keep growing linearly in time until it saturates at a time exponential in the system size $e^{O(n)}$ \cite{Knill95,Susskind:2015toa}.

\sbreak

We hope this presents an intuitive picture that chaos, pseudorandomness, and quantum circuit complexity should be related.  
To that end, having first established a connection between higher-point out-of-time-order correlators and pseudorandomness, we will next connect the randomness of an ensemble to computational complexity. Finally, we will use correlation functions to probe pseudorandomness by comparing different random averages to expectations from time evolution.

Below, we will summarize our main results, deferring the technical statements to the body and appendices.

\subsection*{Main results}

We will focus on a particular form of $2k$-point correlation functions evaluated on the maximally mixed state $\rho=\frac{1}{d}I$
    \be
        \langle A_{1} \, U^{\dagger}B_1U \cdots A_{k} \, U^{\dagger}B_kU  \rangle := \frac{1}{2^n}\tr \, \{ A_{1}\, U^{\dagger}B_1U \cdots A_{k}\, U^{\dagger}B_kU \},  \label{OCO-correlator}
    \ee
where any of the $A_{j}, B_{j}$ may be a product Pauli operators that act on a single spin.\footnote{To be clear, this means that the operators we are correlating are not necessarily simple or local.} Note that each of the $B_j$ is conjugated by the same unitary $U$ (which is similar to picking all the time arguments in Eq.~\eqref{multiple-shocks-correlator} to be either $0$ or $t$). Furthermore, $U$ will not necessarily represent Hamiltonian time evolution, and instead we will let $U$ be sampled from some ensemble.\footnote{As a result, these correlators are not really out-of-\emph{time}-order, since there may not be a notion of time. Instead, they are probably more accurately called \emph{out-of-complexity-order} (OCO) since we might generalize the notion of time ordering to complexity ordering, where we put unitaries of smaller complexity to the right of unitaries of larger complexity. In order to (hopefully) avoid confusion, we will continue to call the $2k$-point functions in Eq.~\eqref{OCO-correlator} out-of-time-order, despite there not necessarily being a notion of time.} From this point forward, we will use the notation
\be
\tilde{B} \equiv U^{\dagger}BU,
\ee
to simplify expressions involving unitary conjugation. Therefore, we can represent the ensemble average of OTO $2k$-point correlation functions as
\begin{align}
\langle A_{1}\tilde{B_{1}}\cdots A_{k}\tilde{B_{k}} \rangle_{\mathcal{E}} := \int_{\mathcal{E}} dU \langle A_{1}\tilde{B_{1}}\cdots A_{k}\tilde{B_{k}} \rangle,
\end{align}
where the integral is with respect to the probability distribution in an ensemble of unitary operators $\mathcal{E}=\{p_{j},U_{j}\}$. Finally, the $k$-fold channel over the ensemble $\mathcal{E}$ is 
\begin{align}
\Phi_{\mathcal{E}}^{(k)}(\cdot) = \int_{\mathcal{E}} dU (U_{j}\otimes \cdots \otimes U_{j}) (\cdot )(U_{j}^{\dagger}\otimes \cdots \otimes U_{j}^{\dagger}),
\end{align}
which is a superoperator. 

\subsubsection*{Chaos and $k$-designs}

First,  we will prove a theorem stating that a particular set of $2k$-point OTO correlators, averaged over an ensemble $\mathcal{E}$, is in a one-to-one correspondence with the $k$-fold channel~$\Phi_{\mathcal{E}}^{(k)}$ 
\begin{align}
\text{$2k$-point OTO correlators}\ \leftrightarrow \ \text{$k$-fold channel $\Phi_{\mathcal{E}}^{(k)}$},
\end{align}
and we provide a simple formula to convert from one to the other. Such an explicit relation between OTO correlators and the $k$-fold channel may have practical and experimental applications such as statistical testing (e.g. a quantum analog of the $\chi^2$-test) and randomized bench marking~\cite{Knill08}.

Next, we prove that generic ``smallness'' of $2k$-point OTO correlators implies that the ensemble $\mathcal{E}$ is close to $k$-design. We will make this statement precise by relating OTO correlators to a useful quantity known as the frame potential 
\begin{align}
F_{\mathcal{E}}^{(k)}=\frac{1}{|\mathcal{E}|^{2}}\sum_{U,V\in\mathcal{E}}\big|\tr\, \{ U^{\dagger}V \} \big|^{2k}.
\end{align}
This quantity, first introduced in \cite{Scott08}, measures the ($2$-norm) distance between $\Phi_{\mathcal{E}}^{(k)}(\cdot)$ and $\Phi_{\text{Haar}}^{(k)}(\cdot)$, and has been shown to be minimized if and only if the ensemble $\mathcal{E}$ is $k$-design. We will derive the following formula:
\begin{align}
\text{Average of $|\text{$2k$-point OTO correlator}|^{2}$}\ \propto \ \text{$k$th frame potential $F_{\mathcal{E}}^{(k)}$},
\end{align}
which shows that $2k$-point OTO correlators $\langle A_{1}\tilde{B_{1}}\cdots A_{k}\tilde{B_{k}} \rangle_{\mathcal{E}}$ are measures of whether an ensemble $\mathcal{E}$ is a unitary $k$-design.
Thus, the decay of OTO correlators can be used to quantify an increase in pseudorandomness. 

\subsubsection*{Chaos, randomness, and complexity}

We prove a lower bound on quantum circuit complexity needed to generate an ensemble of unitary operators $\mathcal{E}$
\begin{align}
\text{Complexity of $\mathcal{E}$}\ \geq \ \frac{2kn \log(2) - \log F_{\mathcal{E}}^{(k)} }{\log( \text{choices})}.\label{intro-complexity-lower-bound}
\end{align}
This bound is actually given by a rather simple counting argument. The denominator should be thought of as (the log of) the number of choices made at each step in generating the circuit.  For instance, if we have a set $\mc{G}$ of cardinality $g$ of $q$-qubit quantum gates available and at each step randomly select $q$ qubits out of $n$ total and select one of the gates in $\mc{G}$ to apply, then we would make $g\binom{n}{q}$ choices at each step. Recalling our result relating OTO correlators and the frame potential, this result implies that generic smallness of OTO correlators leads to higher quantum circuit complexity. This is a direct and quantitative link between chaos and complexity. 

However, we caution the reader that in many cases Eq.~\eqref{intro-complexity-lower-bound} may not be a very tight lower bound. We will provide some discussion of this point as well as a few examples, however further work is most likely required to better understand the utility of this bound.

\subsubsection*{Haar vs. simpler ensemble averages}
Finally, we present calculations of the Haar average of some higher-point OTO correlators and compare them to averages in simpler ensembles. These results suggest that the floor value of OTO correlators of local operators might be good diagnostics of pseudorandomness.

For $4$-point OTO correlators, we find
\begin{equation}
\begin{split}
&\langle A\tilde{B}C\tilde{D} \rangle_{\text{Haar}} = \langle AC \rangle \langle B \rangle\langle D \rangle + \langle A \rangle\langle C \rangle \langle BD \rangle -  \langle A\rangle \langle C \rangle \langle B \rangle\langle D \rangle - \frac{1}{d^2-1}\langle\!\langle AC\rangle\!\rangle \langle\!\langle BD\rangle\!\rangle,
\end{split} \label{intro-4p-haar-average}
\end{equation}
where $\langle\!\langle AC\rangle\!\rangle$ represents a connected correlator and $d=2^{n}$ is the total number of states. In contrast, if the $U$ are averaged over the Pauli operators (which form a $1$-design but not a $2$-design), we find
\begin{align}
\langle A\tilde{B}C\tilde{D} \rangle_{\text{Pauli}}&= 
\langle AC \rangle \langle B D \rangle. \label{intro-4p-pauli-average}
\end{align}
We will present intuitive explanations on this difference from the viewpoint of local thermal dissipations vs. global thermalization or scrambling. 

For $8$-point OTO correlators with Pauli operators $A,B,C, D$, we compute averages over the unitary group and the Clifford group (which form a $3$-design on qubits but not a $4$-design)
\begin{equation}
\begin{split}
&\langle A\tilde{B}C\tilde{D}A^{\dagger}  \tilde{D}^{\dagger} C^{\dagger}\tilde{B}^{\dagger}\rangle_{\text{Haar}} \sim \frac{1}{d^4},\\
&\langle A\tilde{B}C\tilde{D}A^{\dagger}  \tilde{D}^{\dagger} C^{\dagger}\tilde{B}^{\dagger}\rangle_{\text{Clifford}} \sim \frac{1}{d^2}.
\end{split} \label{intro-8p-averages}
\end{equation}
This suggests that forming a higher $k$-design leads to a lower value of the correlator.

The results Eq.~\eqref{intro-4p-haar-average}-\eqref{intro-8p-averages} are exact for any choice of operators. (The extended results for the correlators in Eq.~\eqref{intro-8p-averages} is presented in Appendix~\ref{sec:8-pt-functions}.) However, for a particular ordering of OTO $4m$-point functions where we average over choices of operators, we will also show that a Haar averaging over the unitary group scales as
\be
\textrm{OTO}^{(4m)} \sim \frac{1}{d^{2m}}.\label{intro-4m-averages}
\ee
This result hints that these correlators continue to be probes of increasing pseudorandomnes.

\subsection*{Organization of the paper}
For convenience of the reader, we included a(n almost) self-contained introduction to Haar random unitary operators and unitary design in \S\ref{sec:review}. In \S\ref{sec:OTO_channel}, we establish a formal connection between chaos and unitary design by proving the theorems mentioned above. In \S\ref{sec:complexity-bound}, we connect complexity to unitary design by proving the complexity lower bound \eqref{intro-complexity-lower-bound}.
In \S\ref{sec:haar-averages}, we include the explicit calculations of $2$-point and $4$-point functions averaged over different ensembles and discuss how these averages relate to expectations from time evolution with chaotic Hamiltonians. We also discuss results and expectations for higher-point functions.
We conclude in \S\ref{sec:discussion} with an extended discussion of these results, their relevance for physical situations, and outline some future research directions.

Despite page counting to the contrary, this is actually a short paper. A knowledgeable reader may learn our results by simple reading \S\ref{sec:OTO_channel}, \S\ref{sec:complexity-bound} and \S\ref{sec:haar-averages}. On the other hand, a large number of extended calculations and digressions are relegated to the Appendices: 
\begin{itemize}
\item In \S\ref{app:proof}, we collect some proofs that we felt interrupted the flow of the main discussion. 
\item In \S\ref{sec:appendix:orthogonal}, we discuss the number of nearly orthogonal states in large-dimensional Hilbert spaces.
\item In \S\ref{sec:complexity-appendix}, we extend our complexity lower bound to minimum circuit depth by considering gates that may be applied in parallel. We also derive a bound on early-time complexity for evolution with an ensemble of Hamiltonians.
\item In \S\ref{sec:appendix-averages}, we hide the details of our $8$-point functions Haar averages and also derive the ${}\sim d^{-2m}$ scaling of Haar averages of certain $4m$-point functions.
\item In \S\ref{sec:sub-space-randomization}, we provide a generalization of the frame potential that equals an average of the square of OTO correlators for arbitrary states rather than just the maximally mixed state.
\item Finally, in \S\ref{sec:more-chaos} we prove some extended results relating to our earlier work \cite{Hosur:2015ylk} that are somewhat outside the main focus of the current paper.
\end{itemize}

\section{Measures of Haar}\label{sec:review}

The goal of this section is to provide a review of the theory of Haar random unitary operators and unitary design in a self-contained manner. The presentation of this section owes a lot to a recent paper by Webb~\cite{Webb15} as well as a course note by Kitaev \cite{Kitaev-Haar}.\footnote{We also would like to highlight a Master's thesis by Yinzheng Gu, which considers a certain matrix multiplication problem with conjugations by Haar random unitary operators. Although not directly relevant to the work, these are actually out-of-time-order correlation functions in a disguised form~\cite{gu2013moments}.}

\subsection*{Haar random unitaries}

\subsubsection*{Schur-Weyl duality}
Consider a finite-dimensional Hilbert space $\mathcal{H}^{\otimes k}=(\mathbb{C}^{d})^{\otimes k}$ consisting of $k$ copies of $\mathcal{H}=\mathbb{C}^d$. A permutation operator $W_{\pi}$ with a permutation $\pi=\pi(1)\ldots \pi (k)$ is defined as follows
\begin{align}
W_{\pi} |a_{1},\ldots,a_{k}  \rangle = |a_{\pi(1)},\ldots,a_{\pi(k)}  \rangle,
\end{align}
and $W_{\pi}(A_{1}\otimes \cdots \otimes A_{k})W_{\pi}^{-1} = A_{\pi(1)}\otimes \cdots \otimes A_{\pi(k)}$. If $\pi$ is a cyclic permutation ($\pi(j)=j+1$), then $W_{\text{cyc}}$ acts as follows
\begin{align}
\includegraphics[width=0.35\linewidth]{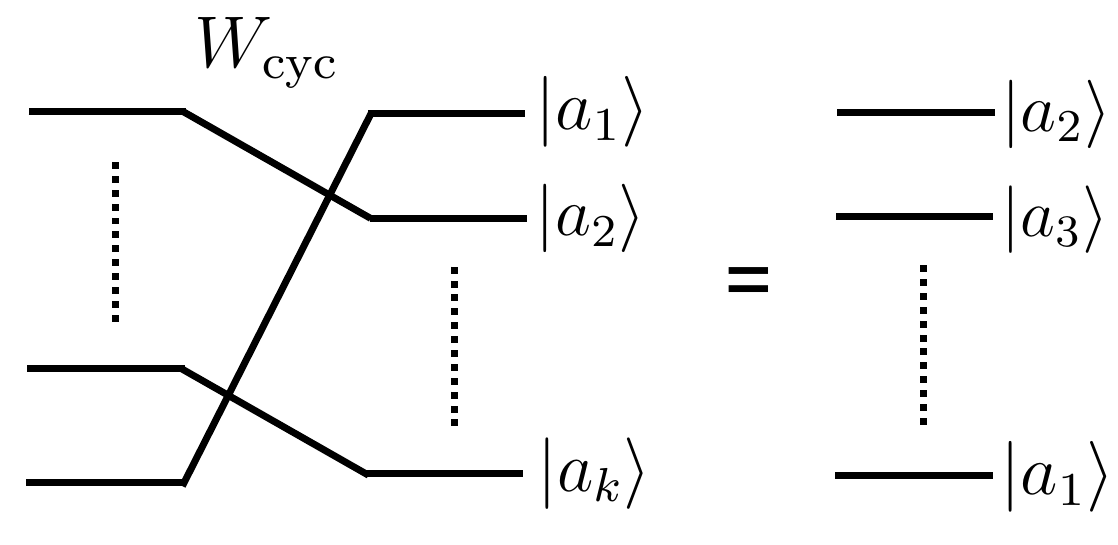}.
\end{align}

\begin{theorem}\emph{\tb{[Schur-Weyl duality]}}
Let $L(\mathcal{H}^{\otimes k})$ be the algebra of all the operators acting on $\mathcal{H}^{\otimes k}$. Let $U(\mathcal{H})$ be the unitary group on $\mathcal{H}$. An operator $A \in L(\mathcal{H}^{\otimes k})$ commutes with all operators $V^{\otimes k}$ with $V\in U(\mathcal{H})$ if and only if $A$ is a linear combination of permutation operators $W_{\pi}$
\begin{align}
[A,V^{\otimes k}]=0, ~ \forall V \ \Leftrightarrow \ A = \sum_{\pi\in S_k} c_{\pi}\cdot W_{\pi}.\label{Schur-Weyl-duality} 
\end{align}
\end{theorem}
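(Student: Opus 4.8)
The plan is to prove the two implications in \eqref{Schur-Weyl-duality} separately; the backward direction is immediate and the forward direction is the substantial one. For the backward direction, if $A=\sum_{\pi\in S_k}c_\pi W_\pi$, then since all tensor factors of $V^{\otimes k}$ coincide we have $W_\pi V^{\otimes k} W_\pi^{-1}=V^{\otimes k}$ for every $V\in U(\mathcal{H})$, hence each $W_\pi$ commutes with $V^{\otimes k}$ and so does $A$.

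For the forward direction, suppose $[A,V^{\otimes k}]=0$ for all $V\in U(\mathcal{H})$, and proceed in three steps. \emph{Step 1 (from unitaries to all matrices).} Consider $F(M):=M^{\otimes k}A-AM^{\otimes k}$ for $M\in\mathbb{C}^{d\times d}$. Each matrix entry of $F(M)$ is a polynomial in the entries of $M$ with no complex conjugates, i.e.\ a holomorphic polynomial, and it vanishes on the unitary group $U(d)$. A holomorphic polynomial $p$ vanishing on $U(d)$ vanishes identically: for any $M_0\in GL(d,\mathbb{C})$ with polar decomposition $M_0=U_0 e^{H}$ ($U_0$ unitary, $H$ Hermitian), the entire function $z\mapsto p(U_0 e^{zH})$ vanishes on the imaginary axis, hence everywhere, so $p(M_0)=0$; since $GL(d,\mathbb{C})$ is dense in $\mathbb{C}^{d\times d}$, $p\equiv 0$. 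Therefore $[A,M^{\otimes k}]=0$ for every $M\in L(\mathcal{H})$.

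\emph{Step 2 (polarization).} Substituting $M=\sum_{i=1}^{k}t_i M_i$ and extracting the coefficient of the monomial $t_1 t_2\cdots t_k$ in the identity $[A,M^{\otimes k}]=0$ shows that $A$ commutes with $\sum_{\sigma\in S_k}M_{\sigma(1)}\otimes\cdots\otimes M_{\sigma(k)}$ for all $M_1,\dots,M_k\in L(\mathcal{H})$. These symmetrized products span the symmetric subspace $\mathrm{Sym}^k\!\big(L(\mathcal{H})\big)$ of $L(\mathcal{H})^{\otimes k}\cong L(\mathcal{H}^{\otimes k})$ (equivalently, the pure powers $M^{\otimes k}$ already span $\mathrm{Sym}^k(L(\mathcal{H}))$), so $A$ lies in the commutant of the unital $\ast$-subalgebra $\mathcal{S}:=\mathrm{Sym}^k(L(\mathcal{H}))$. \emph{Step 3 (identifying the commutant).} Let $\mathcal{B}:=\mathrm{span}\{W_\pi:\pi\in S_k\}$, a unital $\ast$-subalgebra of $L(\mathcal{H}^{\otimes k})$ since $W_e=I$, $W_\pi W_\sigma=W_{\pi\sigma}$, $W_\pi^\dagger=W_{\pi^{-1}}$. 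Its commutant is $\mathcal{B}'=\{X:W_\pi X W_\pi^{-1}=X\ \forall\pi\}$, and $X\mapsto W_\pi X W_\pi^{-1}$ is precisely the $S_k$-action permuting the tensor factors of $L(\mathcal{H})^{\otimes k}$, whose fixed subspace is $\mathcal{S}$. Hence $\mathcal{B}'=\mathcal{S}$, and by the double commutant theorem (valid for unital $\ast$-subalgebras of a finite-dimensional matrix algebra) $\mathcal{S}'=\mathcal{B}''=\mathcal{B}$. Since $A\in\mathcal{S}'$, we conclude $A\in\mathcal{B}=\mathrm{span}\{W_\pi\}$.

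The main obstacle is Step 1: the analytic continuation from $U(d)$ to all of $L(\mathcal{H})$ is where the argument genuinely uses that the ground field is $\mathbb{C}$, and it is the step most easily glossed over. The remaining ingredients are bookkeeping together with two standard facts — that pure powers span the symmetric subspace, and the double commutant theorem — but one should be careful to make explicit the identification $L(\mathcal{H}^{\otimes k})\cong L(\mathcal{H})^{\otimes k}$ and to check that conjugation by $W_\pi$ realizes the permutation action on the tensor factors.
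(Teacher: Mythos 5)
Your proof is correct, and it follows the route the paper only gestures at: the paper notes the $\Leftarrow$ direction is immediate and simply says the $\Rightarrow$ direction ``relies on Von Neumann's double commutant theorem'' without supplying details. You complete that sketch, and the genuinely non-routine ingredient you add is Step~1 --- the analytic-continuation argument showing that a holomorphic polynomial vanishing on $U(d)$ vanishes on all of $L(\mathcal{H})$, which is needed before the double-commutant machinery (applied to the symmetric subalgebra $\mathcal{S}=\mathrm{Sym}^k(L(\mathcal{H}))$ and its commutant $\mathcal{B}=\mathrm{span}\{W_\pi\}$) can be invoked; the polarization and the identification of $\mathcal{B}'$ with $\mathcal{S}$ are then checked correctly.
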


When an operator $A$ is a linear combination of permutation operators, it is clear that $A$ commutes with $V^{\otimes k}$ ($\Leftarrow$). A difficult part is to prove the converse ($\Rightarrow$), which relies on Von Neumann's double commutant theorem. 

\subsubsection*{Pauli operators}
Pauli operators for $\mathbb{C}^d$ (i.e. $d$-state spins or qudits) are defined by 
\begin{align}
X|j\rangle = |j+1\rangle, \qquad Z|j\rangle = \omega^j|j\rangle, \label{eq:pauli-def}
\end{align}
where $\omega \equiv e^{2\pi i/d}$. We note that Eq.~\eqref{eq:pauli-def} implies $ZX = \omega XZ$ and $X^d = Z^d = I$, and that for $d>2$, the Pauli operators are unitary and traceless, but not Hermitian.

The Pauli group is $\tilde{\mathcal{P}}=\langle \tilde{\omega}I,X,Z \rangle$, where $\tilde{\omega}=\omega$ for odd $d$, and $\tilde{\omega}=e^{\pi/d}$ for even $d$. Since we are usually uninterested in global phases, we will consider the quotient of the group
\begin{align}
\mathcal{P} = \tilde{\mathcal{P}} \backslash \langle \tilde{\omega} I \rangle.
\end{align}
There are $d^2$ (representative) Pauli operators in $\mathcal{P}$. When the Hilbert space is built up from the space of $n$ qubits, we will denote the Pauli group by $\mathcal{P}_{n}$. For such systems, (the representatives of) $\mathcal{P}_{n}$ consist of tensor products of qubit Pauli operators, such as $X \otimes Y \otimes I \otimes Z \otimes \cdots $ without any global phases.

The Pauli operators provide a basis for the space of linear operators acting on the Hilbert space. They are orthogonal, $\tr \, \{P_{i}^{\dagger}P_{j} \}=d\delta_{ij}$ for $P_{i},P_{j}\in \mathcal{P}$, and therefore we can expand any operator $A$ acting on $\mathcal{H}$ as
\begin{align}
A = \sum_{j} a_{j} P_{j},\qquad a_{j}=\frac{1}{d}\tr \, \{P_{j}^{\dagger}A\}.
\end{align}
With this property, the cyclic permutation operator $W_{\text{cyc}}$ on $\mathcal{H}^{\otimes k}$ can be decomposed as
\begin{align}
W_{\text{cyc}} = \frac{1}{d^{k-1}}\sum_{P_{1},\ldots,P_{k-1}\in \mathcal{P}} P_{1}\otimes P_{2} \otimes \cdots P_{k-1}\otimes Q^{\dagger}, 
\qquad Q = P_{1}P_{2}\cdots P_{k-1}, \label{eq:pauli-decomposition-of-cyc}
\end{align}
where the sum is over $k-1$ copies of $\mathcal{P}$.

The case of $k=2$ is particularly important, giving an operator that swaps two subsystems. Explicitly, we have $\text{SWAP}=\frac{1}{d}\sum_{P} P\otimes P^{\dagger}$, or graphically (up to a multiplicative factor)
\begin{align}
\includegraphics[width=0.35\linewidth]{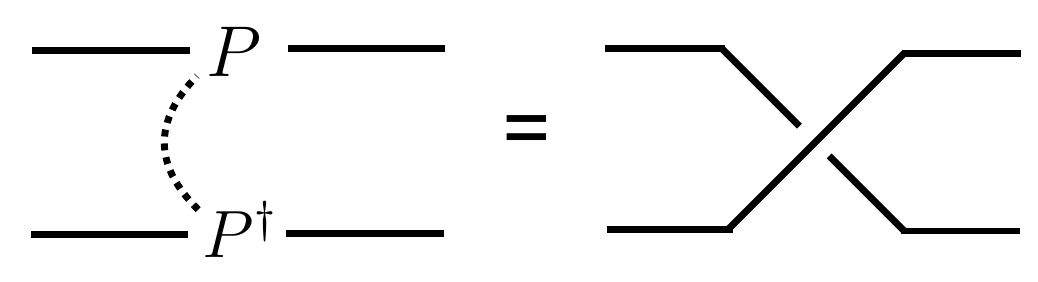}.\label{eq:fig_SWAP}
\end{align}
Here and in what follows, a dotted line represents an average over all Pauli operators. For example, let us consider the Pauli channel
\begin{align}
\frac{1}{d^2}\sum_{P \in \mathcal{P}}P^{\dagger} A P = \frac{1}{d}\, \tr  \{A \},\label{eq:Pauli_twirl}
\end{align}
where $A$ is any operator on the system.
This equation can be derived graphically by applying Eq.~(\ref{eq:fig_SWAP})
\begin{align}
\includegraphics[width=0.50\linewidth]{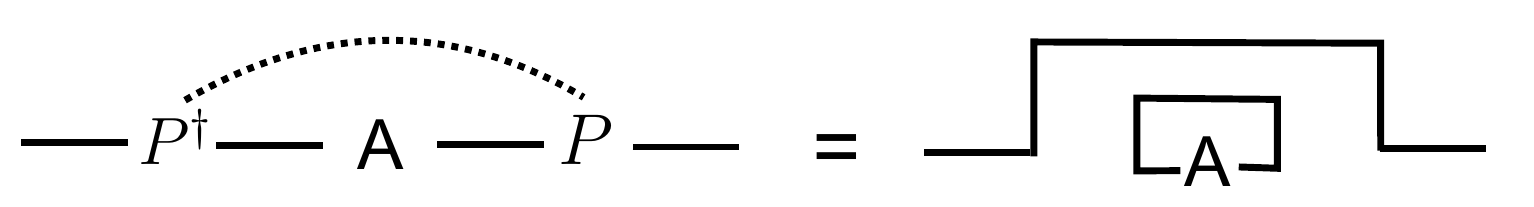}.
\end{align}

\subsubsection*{$k$-fold channel}
Let $A$ be an operator acting on $\mathcal{H}^{\otimes k}$. The $k$-fold channel of $A$ with respect to the unitary group is defined as
\begin{align}
\Phi_{\text{Haar}}^{(k)}(A) := \int_{\text{Haar}} (U^{\otimes k})^{\dagger} A \, U^{\otimes k}  dU,
\end{align}
where the integral is taken over the Haar measure. Note, this is sometimes referred to as the $k$-fold twirl of $A$. The Haar measure is the unique probability measure on the unitary group that is both left-invariant and right-invariant~\cite{Watrous}
\begin{align}
\int_{\text{Haar}} dU = 1,\qquad \int_{\text{Haar}} f(VU)dU=\int_{\text{Haar}} f(UV)dU = \int_{\text{Haar}} f(U)dU, \label{eq:def_Haar}
\end{align} 
for all $V\in U(\mathcal{H})$, where $f$ is an arbitrary function. If we take $f(U)= (U^{\otimes k})^{\dagger}A\,U^{\otimes k}$, then we can show that the twirl of $A$ is invariant under $k$-fold unitary conjugation, 
\begin{align}
(V^{\otimes k})^{\dagger}\big(\Phi^{(k)}_{\text{Haar}}(A)\big) V^{\otimes k}  &= \int_{\text{Haar}} f(UV)dU = \Phi^{(k)}_{\text{Haar}}(A),\label{eq:duality1}
\end{align}
and that twirl of the $k$-fold unitary conjugation of $A$ equals the twirl of $A$
\begin{align}
\Phi^{(k)}_{\text{Haar}}\big((V^{\otimes k})^{\dagger}A V^{\otimes k}\big)&=\int_{\text{Haar}} f(VU)dU = \Phi^{(k)}_{\text{Haar}}(A),\label{eq:duality2}
\end{align}
where for Eq.~\eqref{eq:duality1} we used the right-invariance property, and for Eq.~\eqref{eq:duality2} we used left-invariance.

\subsubsection*{Weingarten function}
The content of Eq.~(\ref{eq:duality1}) is that $\Phi^{(k)}_{\text{Haar}}(A)$ commutes with all operators $V^{\otimes k}$. Thus, we may use the Schur-Weyl duality Eq.~\eqref{Schur-Weyl-duality} to rewrite it as
\begin{align}
\Phi^{(k)}_{\text{Haar}}(A) = \sum_{\pi \in S_k} W_{\pi} \cdot u_{\pi}(A).
\end{align}
Here, $S_k$ is the permutation group, and $u_{\pi}(A)$ is some linear function of $A$. Since $u_{\pi}(A)$ is a linear function, it can be written as
\begin{align}
u_{\pi}(A) = \tr \{ C_{\pi} A\},
\end{align}
for some operators $C_{\pi}$. From Eq.~(\ref{eq:duality2}), we find that $C_{\pi}$ commutes with all operators $V^{\otimes k}$. Then again, by the Schur-Weyl duality, we have
\begin{align}
\Phi^{(k)}_{\text{Haar}}(A) =\sum_{\pi,\sigma\in S_k} c_{\pi,\sigma} W_{\pi}\, \tr \{ W_{\sigma}A \}.
\end{align}
The coefficients $c_{\pi,\sigma}$ are called the Weingarten matrix~\cite{Collins03}. Since $\Phi^{(k)}_{\text{Haar}}(W_{\lambda})=W_{\lambda}$, we have $W_{\lambda} = \sum_{\pi,\sigma} c_{\pi,\sigma}W_{\pi} \, \tr \{W_{\sigma}W_{\lambda} \}$. Recalling that $\tr \{ W_{\sigma}W_{\lambda}\}=d^{\# \text{cycles}(\sigma\lambda)}$, we have
\begin{align}
\delta_{\pi,\lambda} = \sum_{\sigma\in S_k}c_{\pi,\sigma} Q_{\sigma,\lambda},\qquad Q_{\sigma,\lambda}:=d^{\# \text{cycles}(\sigma\lambda)}.
\end{align}
So finally, we find
\begin{align}
\Phi^{(k)}_{\text{Haar}}(A) = \sum_{\pi,\sigma\in S_k} (Q^{-1})_{\pi,\sigma}  W_{\pi} \cdot \tr \{W_{\sigma}A\}. \label{eq:Weingarten_decomposition}
\end{align}
Here, we assumed the presence of the inverse $Q^{-1}$ which is guaranteed for $k \leq d$.

\subsection*{Examples}
For $k=1$, $Q_{I,I}=d$, so one has
\begin{align}
\Phi^{(1)}_{\text{Haar}}(A) = \frac{1}{d}  \tr\{A\}.
\end{align}
For $k=2$, so one has
\begin{align}
Q = \left(
\begin{array}{cc}
Q_{I,I}, Q_{I,S} \\
Q_{S,I} , Q_{S,S}
\end{array}
\right) =
\left(
\begin{array}{cc}
d^2,&d \\
d , &d^2
\end{array}
\right) \quad 
C = \left(
\begin{array}{cc}
C_{I,I}, C_{I,S} \\
C_{S,I} , C_{S,S}
\end{array}
\right) =
\left(
\begin{array}{cc}
\frac{1}{d^2-1} ,&  \frac{-1}{d(d^2-1)} \\
\frac{-1}{d(d^2-1)} , &\frac{1}{d^2-1}
\end{array}
\right).
\end{align}
Explicitly, we have
\begin{align}
\Phi^{(2)}_{\text{Haar}}(A) = \frac{1}{d^2-1} \left( I \ \tr \{A\} + S\ \tr \{ SA\} - \frac{1}{d} S\ \tr \{A\} - \frac{1}{d} I\ \tr \{SA\} \right),
\end{align}
where $S$ is the SWAP operator.

\subsubsection*{Haar random states}
We can also consider the $k$-fold average of a Haar random state. Define a random state by $|\psi\rangle = U|0\rangle$, where $U$ is sampled uniformly from the unitary group. Then, we have
\begin{align}
\int_{\text{Haar}} (|\psi\rangle \langle \psi| )^{\otimes k} d\psi := \Phi^{(k)}_{\text{Haar}}\big((|0\rangle\langle 0 |)^{\otimes k}\big) = \sum_{\pi}c_{\pi}W_{\pi},\label{eq:Haar_state}
\end{align}
for some coefficients $c_{\pi}$.
Since $\int_{\text{Haar}} (|\psi\rangle \langle \psi| )^{\otimes k}$ commutes with $V^{\otimes k}$, we again decomposed it with permutation operators. Furthermore, one has
\begin{align}
W_{\pi}\int_{\text{Haar}} (|\psi\rangle \langle \psi| )^{\otimes k}=\int_{\text{Haar}} (|\psi\rangle \langle \psi| )^{\otimes k}, \qquad \forall \pi \in S_{k}
\end{align}
which implies $c_{\pi}=c$ for all $\pi$. By taking the trace in Eq.~(\ref{eq:Haar_state}), we have
\begin{align}
c_{\pi}= \frac{1}{\sum_{\pi\in S_k} d^{\text{\#cycles($\pi$)}}} = \frac{k!}{
\Big(\begin{array}{c}
k+d-1 \\
k
\end{array}\Big)
}.
\end{align}
Defining the projector onto the symmetric subspace as $\Pi_{\text{sym}} = \frac{1}{k!}\sum_{\pi\in S_k}W_{\pi}$, one has
\begin{align}
\int_{\text{Haar}} (|\psi\rangle \langle \psi| )^{\otimes k} d\psi = \frac{\Pi_{\text{sym}}}{
\Big(\begin{array}{c}
k+d-1 \\
k
\end{array}\Big) \label{haar-random-states-equation}
}.
\end{align}

\subsubsection*{Frame potential} In this paper, we will be particularly interested in the following quantity
\begin{align}
F_{\text{Haar}}^{(k)}=\iint dU dV\big|\tr\, \{ U^{\dagger}V \} \big|^{2k}.
\end{align}
Using Eq.~\eqref{eq:Weingarten_decomposition}, we can rewrite this
\begin{equation}
\begin{split}
F_{\text{Haar}}^{(k)} &= \sum_{\pi_{1},\pi_{2},\pi_{3},\pi_{4}\in S_{k}} \tr \{ W_{\pi_{1}}W_{\pi_{2}} \} (Q^{-1})_{\pi_{2},\pi_{3}} \tr \{ W_{\pi_{3}}W_{\pi_{4}} \} (Q^{-1})_{\pi_{4},\pi_{1}},  \\
&= \sum_{\pi_{1},\pi_{2},\pi_{3},\pi_{4}\in S_{k}}Q_{\pi_{1},\pi_{2}} (Q^{-1})_{\pi_{2},\pi_{3}}Q_{\pi_{3},\pi_{4}}(Q^{-1})_{\pi_{4},\pi_{1}} = k!,
\end{split}
\end{equation}
for $k\leq d$, where we used $C=Q^{-1}$ and $|S_{k}|=k!$.

\subsection*{Unitary design}

Consider an ensemble of unitary operators $\mathcal{E}=\{p_{j},U_{j}\}$ where $p_{j}$ are probability distributions such that $\sum_{j}p_{j}=1$, and $U_{j}$ are unitary operators. The action of the $k$-fold channel with respect to the ensemble $\mathcal{E}$ is given by 
\begin{align}
\Phi^{(k)}_{\mathcal{E}}(A) := \sum_{j} p_{j}   (U_{j}^{\otimes k})^{\dagger}A (U_{j})^{\otimes k },
\end{align}
or for continuous distributions
\begin{align}
\Phi^{(k)}_{\mathcal{E}}(A) := \int_{\mathcal{E}} dU   (U_{j}^{\otimes k})^{\dagger}A (U_{j})^{\otimes k }.
\end{align}
The ensemble $\mathcal{E}$ is a unitary $k$-design if and only if $ \Phi^{(k)}_{\mathcal{E}}(A)=\Phi^{(k)}_{\text{Haar}}(A)$ for all $A$. Intuitively, a unitary $k$-design is as random as the Haar ensemble up to the $k$th moment. That is, a unitary $k$-design is an ensemble which satisfies the definition of the Haar measure in Eq.~(\ref{eq:def_Haar}) when $f(U)$ contains up to $k$th powers of $U$ and $U^{\dagger}$ (i.e. balanced monomials of degree at most $k$). By this definition, if an ensemble is $k$-design, then it is also $k-1$-design. However, the converse is not true in general.

It is convenient to write the above definition of $k$-design in terms of Pauli operators. An ensemble $\mathcal{E}$ is $k$-design if and only if $ \Phi^{(k)}_{\mathcal{E}}(P)=\Phi^{(k)}_{\text{Haar}}(P)$ for all Pauli operators $P\in (\mathcal{P})^{\otimes k}$, since the Pauli operators are the basis of the operator space. Furthermore, for an arbitrary ensemble $\mathcal{E}$
\begin{align}
\Phi_{\text{Haar}}\left( \Phi_{\mathcal{E}}\big( A \big) \right) =\Phi_{\text{Haar}}\left( A\right), \qquad \forall \mathcal{E},\label{eq:ensemble-followed-by-Haar}
\end{align}
due to the left/right-invariance of the Haar measure. By using Eq.~\eqref{eq:ensemble-followed-by-Haar}, we can derive the following useful criteria for $k$-designs~\cite{Webb15}
\begin{align}
\text{$\mathcal{E}$ is $k$-design} \quad \Leftrightarrow \quad \text{$\Phi_{\mathcal{E}}(P)$ is a linear combination of $W_{\pi}$ for all $P\in (\mathcal{P})^{\otimes k}$.} \label{eq:criteria}
\end{align}
To make use of this, we will look at some illustrative examples. 

\subsubsection*{Pauli is a $1$-design}
The Pauli operators form a unitary $1$-design. In fact, we have already shown this in with Eq.~\eqref{eq:Pauli_twirl}. We have shown that an average over Pauli operators gives $\frac{1}{d^2}\sum_{P\in \mathcal{P}}P^{\dagger} A P = \frac{1}{d}\, \tr\{A\}$, and the Haar random channel gives $\Phi^{(1)}_{\text{Haar}}(A)=\frac{1}{d}\, \tr\{A\}$, so therefore
\begin{align}
\frac{1}{d^2}\sum_{P\in \mathcal{P}}P^{\dagger} \rho P=\Phi^{(1)}_{\text{Haar}}(\rho),
\end{align}
for all $\rho$. For example, if $d=2$ and $A=X$ (Pauli $X$ operator), then we have
\begin{align}
\frac{1}{4}(IXI + XXX + YXY + ZXZ) = X + X - X - X =0,
\end{align}
which is consistent with $\tr \{X \}=0$. 

Thus, an average over Pauli operators is equivalent to talking a trace. Since Pauli operators can be written in a tensor product form: $P_{1}\otimes P_{2}\otimes \ldots \otimes P_{n}$ for a system of $n$ qubits, they do not create entanglement between different qubits and do not scramble. Instead, they can only mix quantum information locally. This implies some kind of relationship between $1$-designs and local thermalization that we will revisit in the discussion \S\ref{sec:discussion}.

\subsubsection*{Clifford is a $2$-design}
The Clifford operators form a unitary $2$-design. The Clifford group $\mathcal{C}_{n}$ is a group of unitary operators acting on a system of $n$ qubits that transform a Pauli operator into another Pauli operator
\begin{align}
C^{\dagger} P C = Q, \qquad P,Q\in \tilde{\mathcal{P}}, \quad C\in \mathcal{C}_{n}.
\end{align}
Clearly, Pauli operators are Clifford operators, since $PQP = e^{i\theta} Q$ for any pairs of Pauli operators $P,Q$: Pauli operators transform a Pauli operator to itself up to a global phase. However, non-trivial Clifford operators are those which transform a Pauli operator into a different Pauli operator. An example of such an operator is the Control-$Z$ gate
\begin{align}
\text{C$Z$}|a,b\rangle = (-1)^{ab}|a,b\rangle, \qquad a,b =0,1,
\end{align}
where summation is modulo $2$. The conjugation of a Pauli operator with a Control-$Z$ gate is as follows
\begin{align}
X\otimes I \rightarrow X \otimes Z, \qquad Z\otimes I \rightarrow Z \otimes I, 
\qquad
I\otimes X \rightarrow Z \otimes X, \qquad I\otimes Z \rightarrow I \otimes Z. \label{eq:control-z}
\end{align}

Let us prove that the Clifford group is $2$-design by using Eq.~(\ref{eq:criteria})~\cite{Webb15}. For qubit Pauli operators of the form $P\otimes P$ ($P\not=I$), the action of the Clifford $2$-fold channel is
\begin{align}
\sum_{C\in \mathcal{C}_{n} } (C^{\dagger}\otimes C^{\dagger}) P\otimes P (C\otimes C) \propto \sum_{Q\in \mathcal{P}_{n}, Q\not=I }Q\otimes Q,
\end{align}
because a random Clifford operator will transform $P$ into some other non-identity Pauli operator. Recalling the definition of the swap operator $\text{SWAP}=\frac{1}{d}\sum_{P\in \mathcal{P}_{n}} P\otimes P$, the RHS is a linear combination of $I\otimes I$ and $\text{SWAP}$. On the other hand, for other Pauli operators $P\otimes Q$ with $P\not=Q$, the action of the channel is
\begin{align}
\sum_{C\in \mathcal{C}_{n} } (C^{\dagger}\otimes C^{\dagger}) P\otimes Q (C\otimes C) =0.
\end{align}
This can be seen by rewriting the sum as 
\be
\frac{1}{2}\sum_{C\in \mathcal{C}_{n} } (C^{\dagger}\otimes C^{\dagger}) P\otimes Q (C\otimes C) + \frac{1}{2}\sum_{RC\in \mathcal{C}_{n} } ( C^\dagger R^{\dagger}\otimes C^\dagger R^{\dagger}) P\otimes Q (RC\otimes RC), 
\ee
since the Clifford group is invariant under element-wise multiplication by a Pauli $R$. Since we assumed the Pauli operators are different we have $PQ \neq I$, and thus we can pick $R$ such that it anti-commutes with $PQ$. This implies $R^\dagger P R \otimes R^\dagger Q R = - P \otimes Q$, and therefore the two terms cancel.
Finally, $I\otimes I$ remains invariant. Thus, since the action of the channel on Pauli operators gives a linear combination of permutation operators, by Eq.~\eqref{eq:criteria} the Clifford group is a unitary $2$-design. 

Note that Clifford operators do not have a tensor product form, in general. This means that unlike evolution restricted to Pauli operators, they can change the sizes of an operator as seen in the Control-$Z$ gate example Eq.~\eqref{eq:control-z}. In other words they can grow local operators into a global operators, indicative of the butterfly effect. 

In fact, Clifford operators can prepare a large class of interesting quantum states called the stabilizer states. Let $|\psi_{0}\rangle=|0\rangle^{\otimes n}$ be an initial product state. This state satisfies $Z_{j}|\psi_{0}\rangle = |\psi_{0}\rangle$. Let $U$ be an arbitrary Clifford operator and consider $|\psi\rangle = U |\psi_{0}\rangle$. This state $|\psi\rangle$ satisfies the following
\begin{align}
S_{j}|\psi\rangle = |\psi\rangle, \qquad S_{j}=UZ_{j}U^{\dagger}.
\end{align}
By definition, $S_{j}$ are Pauli operators and will commute with each other. A quantum state that can be represented by a set of commuting Pauli operators $S_{j}$ is called a stabilizer state. Examples of stabilizer states include ground states of the toric code  and the perfect tensors used in construction of holographic quantum error-correcting codes \cite{Pastawski15b}. The upshot is that Clifford operators can create a global entanglement and can scramble quantum information. We will return to this point again in the discussion \S\ref{sec:discussion}.

\subsubsection*{? is a higher-design}
Currently there is no known method of constructing an ensemble which forms an exact $k$-design for $k\geq 4$ in a way which generalizes to large $d$. Instead, there are several constructions for preparing approximate $k$-design in an efficient manner~\cite{Brandao12,Nakata:2016blv}.

\section{Measures of chaos and design}\label{sec:OTO_channel}

In this section, we show that $2k$-point OTO correlators are probes of $k$-unitary designs. We will focus on a Hilbert space $\mathcal{H}=\mathbb{C}^d$ with $2k$-point correlators of the following form
\begin{align}
\big\langle A_{1}\tilde{B_{1}}\cdots A_{k}\tilde{B_{k}} \big\rangle := \frac{1}{d}\tr\, \{A_{1}\tilde{B_{1}}\cdots A_{k}\tilde{B_{k}}\},\label{eq-form-of-correlator}
\end{align}
where $\tilde{B_{j}}=U^{\dagger}B_{j}U$. We can think of this as a correlator evaluated in a maximally mixed or  infinite temperature state $\rho = \frac{1}{d}I$. The trace can be rewritten as
\begin{align}
\tr \,\big\{ (A_{1} \otimes \cdots \otimes A_{k}) (U^{\dagger}\otimes \cdots\otimes U^{\dagger})
(B_{1} \otimes \cdots \otimes B_{k})(U\otimes \cdots\otimes U)
 W_{\pi_{\text{cyc}}} \big\},\label{eq:cyclic}
\end{align}
by considering an enlarged Hilbert space $\mathcal{H}^{\otimes k}$ that consists of $k$ copies of the original Hilbert space $\mathcal{H}$, where $W_{\pi_{\text{cyc}}}$ represents a cyclic permutation operator on $\mathcal{H}^{\otimes k}$. The action of $W_{\text{cyc}}$ is to send the $j$th Hilbert space to the $(j+1)$th Hilbert space (modulo $k$), see Fig.~\ref{fig_k-fold_twirl} for a graphical representation.\footnote{N.B. this trick of using cyclic permutation operators is similar to the method used for computing R\'enyi-$k$ entanglement entropies in quantum field theories via the insertion of twist operators.} 

\begin{figure}[htb!]
\centering
\includegraphics[width=0.40\linewidth]{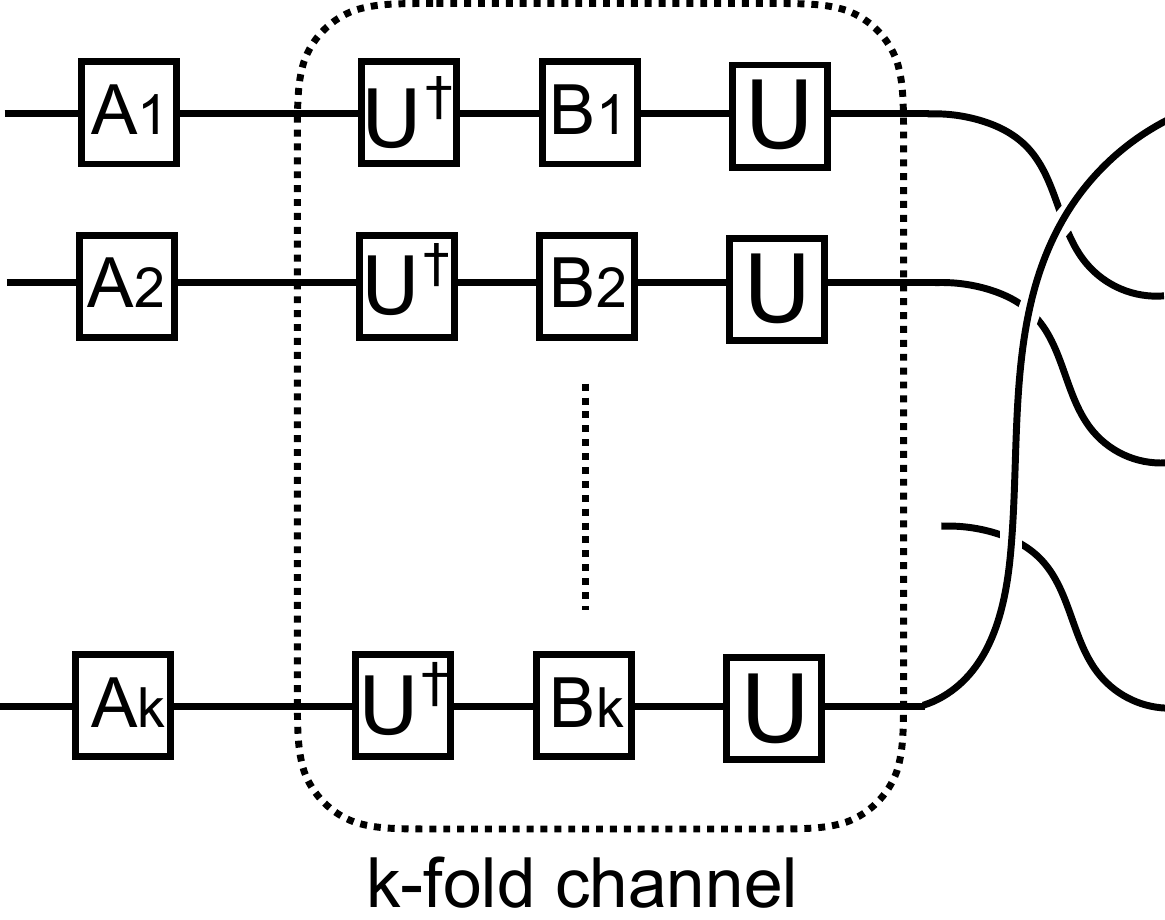}
\caption{Schematic form of the $2k$-point OTO correlation functions Eq.~\eqref{eq-form-of-correlator}, interpreted as a correlation function on the enlarged $k$-copied system. The dotted line diagram surrounds the $k$-fold channel $\Phi_{\mathcal{E}}(B_{1} \otimes \cdots \otimes B_{k})$, which is probed by $A_{1}\otimes \ldots \otimes A_{k}$. (Periodic boundary conditions are implied to take the trace.)
} 
\label{fig_k-fold_twirl}
\end{figure}

Observe that Eq.~(\ref{eq:cyclic}) contains a $k$-fold unitary action 
\be
(U\otimes \cdots\otimes U){}
(B_{1} \otimes \cdots \otimes B_{k})
(U^{\dagger}\otimes \cdots\otimes U^{\dagger}), \notag
\ee 
suggesting correlators of the form Eq.~\eqref{eq-form-of-correlator} have the potential to be sensitive to whether an ensemble is or is not a $k$-design.
Unitary $k$-designs concern the \emph{$k$-fold channel} of an ensemble of unitary operators
\begin{align}
\Phi_{\mathcal{E}}(B_{1} \otimes \cdots \otimes B_{k}) = \int_{\mathcal{E}} dU (U\otimes \cdots\otimes U)
(B_{1} \otimes \cdots \otimes B_{k})
(U^{\dagger}\otimes \cdots\otimes U^{\dagger}),
\end{align} 
where $\mathcal{E}$ is an ensemble of unitary operators.  To further this point, let us consider an average of these correlators over an ensemble of unitary operators $\mathcal{E}$
\begin{align}
\left|\big\langle A_{1}\tilde{B_{1}}\cdots A_{k}\tilde{B_{k}} \big\rangle\right|_{\mathcal{E}} := \frac{1}{d}\int_{\mathcal{E}} dU\, \tr\, \{ A_{1}\tilde{B_{1}}\cdots A_{k}\tilde{B_{k}}\}. \label{eq-form-of-average-oto}
\end{align}
Looking back at Eq.~\eqref{eq:cyclic}, the idea is that $A_{1},\ldots,A_{k}$ operators probe the outcome of $k$-fold channel $\Phi_{\mathcal{E}}(B_{1} \otimes \cdots \otimes B_{k})$. Indeed, the part of Fig.~\ref{fig_k-fold_twirl} surrounded by a dotted line, is $\Phi_{\mathcal{E}}(B_{1} \otimes \cdots \otimes B_{k})$. Below, we will make this intuition precise by proving that this set of OTO correlators Eq.~\eqref{eq-form-of-correlator} completely determine the $k$-fold channel $\Phi^{(k)}_{\mathcal{E}}$.

\subsection{Chaos and $k$-designs}

In this subsection, we prove that $2k$-point OTO correlators completely determines the $k$-fold channel of an ensemble $\mathcal{E}$, denoted by $\Phi_{\mathcal{E}}^{(k)}$. 
To recap, we write the $k$-fold channel of an ensemble $\mathcal{E}=\{p_{j},U_{j}\}$ by 
\begin{align}
\Phi^{(k)}_{\mathcal{E}}(\rho) = \sum_{j} p_{j}  (U_{j}^{\dagger}\otimes \cdots \otimes U_{j}^{\dagger})\rho (U_{j}\otimes \cdots \otimes U_{j}),
\end{align}
where $\rho$ is defined over $k$ copies of the system, $\mathcal{H}^{\otimes k}$. The map is linear, and completely-positive and trace-preserving (CPTP), i.e. a quantum channel. For simplicity of discussion, we assume that the system $\mathcal{H}$ is made of $n$ qubits so that $\mathcal{H}=\mathbb{C}^{d}$ with $d=2^n$. The input density matrix $\rho$ can be expanded by Pauli operators 
\begin{align}
\rho_{\text{in}} = \sum_{B_{1},\ldots,B_{k}}\beta_{B_{1},\ldots,B_{k}}(B_{1}\otimes \cdots \otimes B_{k}),
\end{align}
where $\beta_{B_{1},\ldots,B_{k}}= d^{-k} \, \tr \, \big\{(B_{1}^{\dagger}\otimes \cdots \otimes B_{k}^{\dagger}) \rho\big\}$. The output density matrix is given by
\begin{align}
\rho_{\text{out}} = \sum_{B_{1},\ldots,B_{k}}\beta_{B_{1},\ldots,B_{k}}\Phi_{\mathcal{E}}(B_{1}\otimes \cdots \otimes B_{k}).
\end{align}
For a given Pauli operator $B_{1},\ldots,B_{k}$, we would like to examine $\Phi_{\mathcal{E}}(B_{1}\otimes \cdots \otimes B_{k})$. 

Let us fix Pauli operator $B_{1},\ldots,B_{k}$ for the rest of the argument. Note that the output $\Phi_{\mathcal{E}}(B_{1}\otimes \cdots \otimes B_{n})$ can be also expanded by Pauli operators
\begin{align}
\Phi_{\mathcal{E}}(B_{1}\otimes \cdots \otimes B_{k}) = \sum_{C_{1},\ldots,C_{k}}\gamma_{C_{1},\ldots,C_{k}}(C_{1}\otimes \cdots \otimes C_{k}).\label{eq:expansion}
\end{align}
Since we have fixed $B_{1}\otimes \cdots \otimes B_{k}$, for notational simplicity we have not included $B_{1}, \cdots, B_{k}$ indices from the tensor $\gamma$. In order to characterize the $k$-fold channel, we need to know values of $\gamma_{C_{1},\ldots,C_{n}}$ for a given $B_{1}, \cdots, B_{k}$. We would like to show that we can determine the values of $\gamma_{C_{1},\ldots,C_{n}}$ by knowing a certain collection of OTO correlators. Consider a $2k$-point OTO correlator labeled by the set of $A$ operators, averaged over an ensemble $\mathcal{E}$
\begin{align}
\alpha_{A_{1},\ldots,A_{k}} = \left|\big\langle A_{1} \tilde{B_{1}} \cdots A_{k}\tilde{B_{k}} \big\rangle \right|_{\mathcal{E}}\label{eq:definition},
\end{align}
where as always $\tilde{B_{j}}=U^{\dagger}B_{j}U$ and $A_{1},\ldots,A_{k}$ are Pauli operators. As before, for simplicity of notation we have not included $B_{1},\ldots,B_{k}$ indices on $\alpha$. Now that the notation is setup, the main question is whether one can determine the coefficients $\gamma_{C_{1},\ldots,C_{k}}$ from the numbers $\alpha_{A_{1},\ldots,A_{k}}$.

 Substituting Eq.~\eqref{eq:expansion} into Eq.~\eqref{eq:definition}, we see
\begin{align}
\alpha_{A_{1},\ldots,A_{k}} = \frac{1}{d}\cdot M^{C_{1},\ldots,C_{k}}_{A_{1},\ldots,A_{k}} \gamma_{C_{1},\ldots,C_{k}},
\qquad
M^{C_{1},\ldots,C_{k}}_{A_{1},\ldots,A_{k}} =
 \tr \, \{ A_{1}C_{1}\cdots A_{k}C_{k} \}, \label{eq:normal}
\end{align}
where tensor contractions are implicit following the Einstein summation convention.
This shows we can compute OTO correlators $\alpha_{A_{1},\ldots,A_{k}}$ the coefficients defining the $k$-fold channel $\gamma_{C_{1},\ldots,C_{k}}$. To establish the converse, we must prove that the tensor $M$ is invertible.

\begin{theorem}\label{theorem:OTO_chaos}
Consider the tensor $M^{C_{1},\ldots,C_{k}}_{A_{1},\ldots,A_{k}}$ 
and its conjugate transpose ${M^{\dagger}}_{C_{1},\ldots,C_{k}}^{A_{1},\ldots,A_{k}}$. Then
\begin{align}
\sum_{C_{1},\ldots,C_{k}}{M^{\dagger}}_{C_{1},\ldots,C_{k}}^{A_{1}',\ldots,A_{k}'}M^{C_{1},\ldots,C_{k}}_{A_{1},\ldots,A_{k}} = d^{2k}\cdot  
\delta^{A_{1}'}_{A_{1}}\cdots \delta^{A_{k}'}_{A_{k}},\label{eq:tensor_M}
\end{align}
where $\delta^{P}_{Q}$ is the delta function for Pauli operators $P,Q$
\begin{align}
\delta^{P}_{Q} = 1, \quad (P=Q), \qquad \delta^{P}_{Q} = 0, \quad (P\not=Q).
\end{align}
\end{theorem}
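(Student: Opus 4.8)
The plan is to prove the ``orthogonality'' relation \eqref{eq:tensor_M} directly, by carrying out the $d^2$-fold Pauli sums over $C_1,\dots,C_k$ one index at a time; each such sum will either collapse the two traces into one or peel off a Kronecker delta. First I would make the conjugate transpose explicit: since $M^{C_1,\ldots,C_k}_{A_1,\ldots,A_k}=\tr\{A_1C_1\cdots A_kC_k\}$, its conjugate transpose is $\overline{\tr\{A_1'C_1\cdots A_k'C_k\}}=\tr\{C_k^{\dagger}A_k'^{\dagger}\cdots C_1^{\dagger}A_1'^{\dagger}\}$, so the left-hand side of \eqref{eq:tensor_M} is
\begin{align}
\Sigma=\sum_{C_1,\ldots,C_k}\tr\{A_1C_1\cdots A_kC_k\}\;\tr\{C_k^{\dagger}A_k'^{\dagger}\cdots C_1^{\dagger}A_1'^{\dagger}\}.\notag
\end{align}
Two elementary Pauli identities drive the computation: (i) $\sum_{P\in\mathcal{P}}\tr\{XP\}\,\tr\{P^{\dagger}Y\}=d\,\tr\{XY\}$ for arbitrary operators $X,Y$, which follows by writing the product of traces as a single trace on $\mathcal{H}\otimes\mathcal{H}$ and using $\sum_P P\otimes P^{\dagger}=d\cdot\mathrm{SWAP}$ from \eqref{eq:fig_SWAP}; and (ii) the Pauli twirl $\sum_{P\in\mathcal{P}}PZP^{\dagger}=d\,\tr\{Z\}\,I$ from \eqref{eq:Pauli_twirl}, combined with Pauli orthogonality $\tr\{P^{\dagger}Q\}=d\,\delta^{P}_{Q}$.

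Next I apply (i) to the $C_k$ sum with $X=A_1C_1\cdots A_{k-1}C_{k-1}A_k$ and $Y=A_k'^{\dagger}C_{k-1}^{\dagger}\cdots C_1^{\dagger}A_1'^{\dagger}$, obtaining $d\,\tr\{XY\}$, a single trace in which $C_{k-1}$ now sits in the sandwiched form $C_{k-1}(A_kA_k'^{\dagger})C_{k-1}^{\dagger}$. Applying (ii) to the $C_{k-1}$ sum replaces this by $d\,\tr\{A_kA_k'^{\dagger}\}\,I=d^2\,\delta^{A_k}_{A_k'}\,I$, which simultaneously produces the delta $\delta^{A_k}_{A_k'}$ and merges $A_kA_k'^{\dagger}$ into the identity, thereby bringing $A_{k-1}A_{k-1}'^{\dagger}$ into a fresh sandwich $C_{k-2}(\cdot)C_{k-2}^{\dagger}$. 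Iterating (ii) down through $C_{k-2},\ldots,C_1$ peels off $\delta^{A_{k-1}}_{A_{k-1}'},\ldots,\delta^{A_2}_{A_2'}$ in turn, each carrying a factor $d^2$, and leaves the single trace $\tr\{A_1A_1'^{\dagger}\}=d\,\delta^{A_1}_{A_1'}$. Collecting powers of $d$ — one from step (i), a $d^2$ from each of the $k-1$ twirls, and one from the final trace — gives $\Sigma=d^{2k}\,\delta^{A_1'}_{A_1}\cdots\delta^{A_k'}_{A_k}$, which is \eqref{eq:tensor_M}.

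The only delicate point is the index bookkeeping: one must check that after each collapse the next Pauli $C_j$ genuinely appears in conjugation form $C_j(\cdot)C_j^{\dagger}$ inside the remaining trace, so that (ii) applies cleanly. This is immediate in the diagrammatic language of \S\ref{sec:review}: the contraction $M^{\dagger}M$ closes the alternating $A$- and $C$-legs into a loop, each $C$-sum is a dotted-line SWAP, and repeatedly applying (i) then (ii) simply ``unzips'' the loop one rung at a time, so I would present the argument in that picture to make the bookkeeping self-evident. Once \eqref{eq:tensor_M} is in hand there is no further obstacle: it says $M$ is $d^{k}$ times a unitary, hence invertible with $M^{-1}=d^{-2k}M^{\dagger}$, so inverting \eqref{eq:normal} recovers the channel coefficients $\gamma_{C_1,\ldots,C_k}$ from the averaged OTO correlators $\alpha_{A_1,\ldots,A_k}$, completing the claimed one-to-one correspondence.
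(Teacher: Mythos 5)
Your proof is correct, and the core engine---iterated Pauli twirls peeling off one Kronecker delta per $C_j$ sum---is the same as the paper's. Where you differ is the first step of collapsing the product of two traces into a single trace. The paper does this via a Pauli-specific argument: it observes that each trace is nonzero only when the corresponding product of Paulis is proportional to the identity, restricts the sum to $C_1\cdots C_k\propto P^\dagger$, uses $\tr\{cI\}\tr\{c'I\}=d\,\tr\{cI\cdot c'I\}$ on that restricted set, and then eliminates $C_k$ by cyclicity (legitimate because $C_k$ is then determined by $C_1,\ldots,C_{k-1}$). You instead invoke the unconditional swap-trick identity $\sum_{P}\tr\{XP\}\tr\{P^{\dagger}Y\}=d\,\tr\{XY\}$, which holds for arbitrary operators $X,Y$, so the $C_k$ sum collapses the traces in one stroke with no constraint analysis and no case split. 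This is cleaner: it sidesteps the phase bookkeeping, makes the ``$\Sigma=0$ when $A_1\cdots A_k\not\propto A_1'\cdots A_k'$'' case automatic rather than a separate observation, and turns the whole argument into a uniform composition of two algebraic identities. The remaining difference (you unzip $C_{k-1}$ down to $C_1$, the paper goes $C_1$ up to $C_{k-1}$) is purely cosmetic. One small caveat worth a remark if you wrote this up: for $d>2$ the Paulis are not Hermitian and products of Paulis are proportional to identity only up to phases, so the relation $\tr\{QP\}=d\,\delta^{Q^{\dagger}}_{P}$ needs the ``up to global phase'' quotient convention the paper fixes in \S\ref{sec:review}; for qubits, which is the case the paper specializes to, this is automatic.
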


The proof of this theorem is sort of technical and has been relegated to Appendix~\ref{sec:proof:oto-channnel}. Thus, from the OTO correlators $\alpha_{C_{1},\ldots,C_{k}}$, we can completely determine the $k$-fold channel $\gamma_{A_{1},\ldots,A_{k}}$
\begin{align}
\gamma_{C_{1},\ldots,C_{k}}= \frac{1}{d^{2k-1}} {M^{\dagger}}_{C_{1},\ldots,C_{k}}^{A_{1},\ldots,A_{k}}\alpha_{A_{1},\ldots,A_{k}}.
\end{align}
As an obvious corollary, this means that $2k$-point OTO correlators can measure whether or not an ensemble forms $k$-design.

\subsection{Frame potentials}

In this section we introduce the \emph{frame potential}, a single quantity that can measure whether an ensemble is a $k$-design. Furthermore, we show how the frame potential may be computed from OTO correlators.

Given an ensemble of unitary operators $\mathcal{E}$, the $k$th frame potential is defined by the following double sum \cite{Scott08}
\begin{align}
F_{\mathcal{E}}^{(k)} := \frac{1}{|\mathcal{E}|^{2}}\sum_{U,V\in \mathcal{E}} \left| \tr \{U^{\dagger}V\} \right|^{2k},
\end{align}
where $|\mathcal{E}|$ denotes the cardinality of $\mathcal{E}$.Denote the frame potential for the Haar ensemble as $F^{(k)}_{\text{Haar}}$. Then, the following theorem holds.

\begin{theorem}
For any ensemble $\mathcal{E}$ of unitary operators, 
\begin{align}
F^{(k)}_{\mathcal{E}} \geq F^{(k)}_{\mathrm{Haar}},
\end{align}
with equality if and only if $\mathcal{E}$ is $k$-design.
\end{theorem}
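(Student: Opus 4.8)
The plan is to realize $F^{(k)}_{\mathcal{E}}$ as the squared Hilbert--Schmidt (Frobenius) norm of the superoperator $\Phi^{(k)}_{\mathcal{E}}$ acting on $L(\mathcal{H}^{\otimes k})$, and then to run a Pythagorean-type argument using the fact that $\Phi^{(k)}_{\mathrm{Haar}}$ is an orthogonal projector. First I would fix the orthonormal operator basis $\{e_{\alpha}=P/\sqrt{d^{k}}\}$ of $L(\mathcal{H}^{\otimes k})$ built from Pauli operators, define the matrix elements $(\Phi^{(k)}_{\mathcal{E}})_{\alpha\beta}=\tr\{e_{\alpha}^{\dagger}\,\Phi^{(k)}_{\mathcal{E}}(e_{\beta})\}$, and set $\|\Phi^{(k)}_{\mathcal{E}}\|_{2}^{2}:=\sum_{\alpha,\beta}|(\Phi^{(k)}_{\mathcal{E}})_{\alpha\beta}|^{2}$. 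Writing out the two ensemble averages and collapsing the $\beta$-sum with $\sum_{\beta}\tr\{Xe_{\beta}\}\tr\{Ye_{\beta}^{\dagger}\}=\tr\{XY\}$ and the $\alpha$-sum with the completeness identity $\sum_{\alpha}e_{\alpha}^{\dagger}Xe_{\alpha}=\tr\{X\}\,I$ (the $\mathcal{H}^{\otimes k}$ version of the Pauli twirl Eq.~\eqref{eq:Pauli_twirl}), everything collapses to $\|\Phi^{(k)}_{\mathcal{E}}\|_{2}^{2}=\frac{1}{|\mathcal{E}|^{2}}\sum_{U,V\in\mathcal{E}}\big|\tr\{(V^{\dagger}U)^{\otimes k}\}\big|^{2}$. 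Since $\tr\{(V^{\dagger}U)^{\otimes k}\}=(\tr\{V^{\dagger}U\})^{k}$, this is exactly $F^{(k)}_{\mathcal{E}}$, and the identical computation with the Haar measure gives $F^{(k)}_{\mathrm{Haar}}=\|\Phi^{(k)}_{\mathrm{Haar}}\|_{2}^{2}$.

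Next I would record that $\Phi^{(k)}_{\mathrm{Haar}}$ is an orthogonal projector on $L(\mathcal{H}^{\otimes k})$: idempotence $\Phi^{(k)}_{\mathrm{Haar}}\circ\Phi^{(k)}_{\mathrm{Haar}}=\Phi^{(k)}_{\mathrm{Haar}}$ follows from averaging twice over the Haar measure (the content of Eqs.~\eqref{eq:duality1}--\eqref{eq:duality2}), and self-adjointness with respect to the Hilbert--Schmidt inner product follows from the substitution $U\mapsto U^{\dagger}$, which leaves the Haar measure invariant. As a sanity check this gives $F^{(k)}_{\mathrm{Haar}}=\|\Phi^{(k)}_{\mathrm{Haar}}\|_{2}^{2}=\operatorname{rank}\Phi^{(k)}_{\mathrm{Haar}}$, which by Schur--Weyl Eq.~\eqref{Schur-Weyl-duality} is the number of linearly independent permutation operators, i.e.\ $k!$ for $k\leq d$ — consistent with the Weingarten computation above.

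Then I would expand the manifestly non-negative quantity $\|\Phi^{(k)}_{\mathcal{E}}-\Phi^{(k)}_{\mathrm{Haar}}\|_{2}^{2}$. The diagonal terms give $F^{(k)}_{\mathcal{E}}+F^{(k)}_{\mathrm{Haar}}$. For the cross terms, Eq.~\eqref{eq:ensemble-followed-by-Haar} gives $\Phi^{(k)}_{\mathrm{Haar}}\circ\Phi^{(k)}_{\mathcal{E}}=\Phi^{(k)}_{\mathrm{Haar}}$; applying the same identity to the adjoint ensemble $\mathcal{E}^{\dagger}=\{p_{j},U_{j}^{\dagger}\}$ (whose channel is exactly $(\Phi^{(k)}_{\mathcal{E}})^{\dagger}$) likewise gives $\Phi^{(k)}_{\mathrm{Haar}}\circ(\Phi^{(k)}_{\mathcal{E}})^{\dagger}=\Phi^{(k)}_{\mathrm{Haar}}$, so both cross terms equal $\operatorname{Tr}\Phi^{(k)}_{\mathrm{Haar}}=F^{(k)}_{\mathrm{Haar}}$. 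Hence $\|\Phi^{(k)}_{\mathcal{E}}-\Phi^{(k)}_{\mathrm{Haar}}\|_{2}^{2}=F^{(k)}_{\mathcal{E}}-F^{(k)}_{\mathrm{Haar}}\geq0$, with equality iff $\Phi^{(k)}_{\mathcal{E}}=\Phi^{(k)}_{\mathrm{Haar}}$, which is precisely the definition of $\mathcal{E}$ being a $k$-design.

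The main obstacle is the bookkeeping in the first step: carrying out the double operator-basis contraction while correctly tracking daggers and complex conjugates, so that the $(U,V)$-dependence collapses to $|\tr\{U^{\dagger}V\}|^{2k}$ and not to some other pairing. An equivalent and perhaps cleaner route, which I would fall back on if the index manipulation gets unwieldy, is to pass to the doubled Hilbert space $\mathcal{H}^{\otimes k}\otimes\overline{\mathcal{H}^{\otimes k}}$ and identify $\Phi^{(k)}_{\mathcal{E}}$ with the operator $\widehat{\Phi}_{\mathcal{E}}=\frac{1}{|\mathcal{E}|}\sum_{U\in\mathcal{E}}U^{\otimes k}\otimes\overline{U^{\otimes k}}$; then $\operatorname{Tr}\{\widehat{\Phi}_{\mathcal{E}}^{\dagger}\widehat{\Phi}_{\mathcal{E}}\}=F^{(k)}_{\mathcal{E}}$ is immediate, $\widehat{\Phi}_{\mathrm{Haar}}$ is visibly an orthogonal projector, $\widehat{\Phi}_{\mathrm{Haar}}\widehat{\Phi}_{\mathcal{E}}=\widehat{\Phi}_{\mathrm{Haar}}$ by left-invariance, and the Pythagorean identity is then a one-line manipulation. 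All the remaining ingredients are already established in \S\ref{sec:review}.
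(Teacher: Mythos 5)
Your proposal is correct and is essentially the same argument as the paper's: you expand the non-negative quantity $\|\Phi^{(k)}_{\mathcal{E}}-\Phi^{(k)}_{\mathrm{Haar}}\|_{2}^{2}$ and evaluate the cross terms using Haar invariance, which is exactly what the paper does via $\tr\{S^{\dagger}S\}\geq 0$ with $S=\int_{\mathcal{E}}(U^{\dagger})^{\otimes k}\otimes U^{\otimes k}-\int_{\mathrm{Haar}}(U^{\dagger})^{\otimes k}\otimes U^{\otimes k}$ — indeed, the ``doubled Hilbert space'' route you offer as a fallback is the paper's proof verbatim. The extra scaffolding you add (recasting $F^{(k)}_{\mathcal{E}}$ as a superoperator Frobenius norm and noting that $\Phi^{(k)}_{\mathrm{Haar}}$ is an orthogonal projector of rank $k!$) is a nice conceptual gloss but does not change the underlying computation.
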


The proof of this theorem is very insightful and beautiful, which we reprint from~\cite{Scott08}. 

\begin{proof}
Letting $S=\int_{\mathcal{E}}(U^{\dagger})^{\otimes k}\otimes U^{\otimes k}  - \int_{\text{Haar}}(U^{\dagger})^{\otimes k}\otimes U^{\otimes k}$, we have
\begin{equation}
\begin{split}
0 \leq \tr \{ S^{\dagger}S\} =  &\int_{U\in\mathcal{E}}\int_{V\in\mathcal{E}}dUdV\, |\tr \{U^{\dagger}V \} |^{2k} - 2 \int_{U\in\mathcal{E}}\int_{V\in\text{Haar}}dUdV \, |\tr \{U^{\dagger}V \} |^{2k} \\
&+ \iint_{U,V\in\text{Haar}}dUdV \, |\tr \{ U^{\dagger}V \} |^{2k}.
\end{split}
\end{equation}
The first term is $F^{(k)}_{\mathcal{E}}$, and the third term is $F^{(k)}_{\text{Haar}}$ by definition. The second term is equal to  $- 2F^{(k)}_{\text{Haar}}$ by the left/right-invariance of the Haar measure.  Thus, we see that
\begin{align}
0 \leq F^{(k)}_{\mathcal{E}} - 2F^{(k)}_{\text{Haar}} + F^{(k)}_{\text{Haar}}
=F^{(k)}_{\mathcal{E}} - F^{(k)}_{\text{Haar}},
\end{align}
with equality if and only if $\mathcal{E}$ is $k$-design. 
\end{proof}

Note that we derived the minimal value of the frame potential in \S\ref{sec:review},
\be
F_{\text{Haar}}^{(k)}=k!,
\ee
 which holds for $k \le d$. The frame potential quantifies the $2$-norm distance between the Haar ensemble and the $k$-fold $\mathcal{E}$-channel.\footnote{The distance between two quantum channels is typically measured by the diamond norm, while the frame potential measures the $2$-norm (or more precisely the Frobenius norm) distance between a given ensemble and the Haar ensemble. Importantly, the $2$-norm is weaker than the diamond norm.  For precise statements regarding the different norms and bounds relating them, see~\cite{Low_thesis}.}  Here we show that the frame potential can be expressed as a certain average of OTO correlation functions. 

\begin{theorem}
For any ensemble $\mathcal{E}$ of unitary operators,
\begin{align}
\frac{1}{d^{4k}}\sum_{A_{1},\cdots,B_{1},\cdots}\left|\big\langle A_{1}\tilde{B_{1}}\cdots A_{k}\tilde{B_{k}} \big\rangle_{\mathcal{E}}\right|^{2} = \frac{1}{d^{2(k+1)}}\cdot F^{(k)}_{\mathcal{E}}\label{eq:OTO_frame},
\end{align}
where summations are over all possible Pauli operators.
\end{theorem}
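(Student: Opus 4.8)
The plan is to reduce the left-hand side of \eqref{eq:OTO_frame} to the frame potential in two moves, leaning on Theorem~\ref{theorem:OTO_chaos}. First I would fix Pauli operators $B_1,\dots,B_k$ and use Eq.~\eqref{eq:normal}, which identifies the ensemble-averaged correlator with $\langle A_1\tilde B_1\cdots A_k\tilde B_k\rangle_{\mathcal E}=\tfrac1d\,M^{C_1,\dots,C_k}_{A_1,\dots,A_k}\,\gamma_{C_1,\dots,C_k}$, where the $\gamma_{\vec C}$ are the Pauli coefficients of $\Phi^{(k)}_{\mathcal E}(B_1\otimes\cdots\otimes B_k)$ and $M^{\vec C}_{\vec A}=\tr\{A_1C_1\cdots A_kC_k\}$. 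Since $\vec A$ and $\vec C$ each run over $d^{2k}$ Pauli strings, $M$ is a square matrix, so Theorem~\ref{theorem:OTO_chaos} — together with the fact that a square matrix with $M^\dagger M\propto\mathbb 1$ also has $M M^\dagger\propto\mathbb 1$ — gives the dual contraction $\sum_{\vec A}\overline{M^{\vec C'}_{\vec A}}M^{\vec C}_{\vec A}=d^{2k}\delta_{\vec C,\vec C'}$. Summing $|\langle\cdots\rangle_{\mathcal E}|^2$ over all $\vec A$ then collapses to $d^{2k-2}\sum_{\vec C}|\gamma_{\vec C}|^2$, and Pauli orthogonality on $\mathcal H^{\otimes k}$ ($\tr\{(C_1\otimes\cdots\otimes C_k)^\dagger(C'_1\otimes\cdots\otimes C'_k)\}=d^k\delta_{\vec C,\vec C'}$) turns this into $\sum_{\vec C}|\gamma_{\vec C}|^2=d^{-k}\tr\{\Phi^{(k)}_{\mathcal E}(\mathbf B)^\dagger\Phi^{(k)}_{\mathcal E}(\mathbf B)\}$ with $\mathbf B\equiv B_1\otimes\cdots\otimes B_k$. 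So after the $\vec A$-sum I am left with $d^{k-2}\tr\{\Phi^{(k)}_{\mathcal E}(\mathbf B)^\dagger\Phi^{(k)}_{\mathcal E}(\mathbf B)\}$, and the remaining task is to sum this over all $\vec B$.

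For the $\vec B$-sum I would write $\Phi^{(k)}_{\mathcal E}(\rho)=\int_{\mathcal E}dU\,(U^\dagger)^{\otimes k}\rho\,U^{\otimes k}$, expand the Hilbert--Schmidt norm as a double integral over $U,V\in\mathcal E$, and use cyclicity of the trace to package the unitaries into $\mathcal G\equiv(UV^\dagger)^{\otimes k}$, so the integrand becomes $\tr\{\mathcal G\rho\,\mathcal G^\dagger\rho^\dagger\}$ with $\rho=B_1\otimes\cdots\otimes B_k$. The vectorization identity $\tr\{\mathcal G\rho\,\mathcal G^\dagger\rho^\dagger\}=\tr\{(\mathcal G\otimes\mathcal G^\dagger)\,\mathbb S\,(\rho^\dagger\otimes\rho)\}$, with $\mathbb S$ the swap between two copies of $\mathcal H^{\otimes k}$, then converts the $\vec B$-sum into a Pauli sum: taking the adjoint of Eq.~\eqref{eq:fig_SWAP} gives $\sum_{P\in\mathcal P}P^\dagger\otimes P=d\cdot\mathrm{SWAP}$, hence $\sum_{\vec B}\rho^\dagger\otimes\rho=\sum_{B_1,\dots,B_k}(B_1^\dagger\otimes\cdots\otimes B_k^\dagger)\otimes(B_1\otimes\cdots\otimes B_k)=d^k\mathbb S$. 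Using $\mathbb S^2=\mathbb 1$ and $\tr\mathcal G=(\tr\{UV^\dagger\})^k$, the $\vec B$-sum collapses to $d^k|\tr\{U^\dagger V\}|^{2k}$, so $\sum_{\vec A,\vec B}|\langle\cdots\rangle_{\mathcal E}|^2=d^{k-2}\cdot d^k\iint_{\mathcal E}dU\,dV\,|\tr\{U^\dagger V\}|^{2k}=d^{2k-2}F^{(k)}_{\mathcal E}$; dividing by $d^{4k}$ yields $d^{-2(k+1)}F^{(k)}_{\mathcal E}$, as claimed.

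The hard part is the bookkeeping in the second step: one must check that $\sum_{B_1,\dots,B_k}(B_1\otimes\cdots\otimes B_k)^\dagger\otimes(B_1\otimes\cdots\otimes B_k)$ is precisely $d^k$ times the \emph{global} swap $\mathbb S$ between the two $k$-fold blocks — it is a product of $k$ single-copy swaps, which is exactly the full-block swap — and one must keep track of which unitary carries the adjoint so that $|\tr\mathcal G|^2$ comes out as $|\tr\{U^\dagger V\}|^{2k}$ and not some conjugate. Everything else is Pauli orthogonality and trace cyclicity. A self-contained alternative that avoids invoking Theorem~\ref{theorem:OTO_chaos} is to expand $|\langle\cdots\rangle_{\mathcal E}|^2$ directly as a double $U,V$ integral of a product of two single-copy traces and carry out all $A_j$- and $B_j$-sums simultaneously, each Pauli sum gluing index lines via $\sum_P P_{ij}\overline{P_{kl}}=d\,\delta_{ik}\delta_{jl}$; this assembles $|\tr\{U^\dagger V\}|^{2k}$ in one stroke but is messier to typeset.
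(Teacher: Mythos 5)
Your proof is correct, but it takes a genuinely different route than the paper. The paper's proof is a one-pass graphical computation: it expands $|\langle\cdots\rangle_{\mathcal E}|^2$ as a double sum over $U,V\in\mathcal E$ of a product of two traces, replaces each Pauli average $\frac{1}{d^2}\sum_P P\otimes P^\dagger$ by a SWAP, and then reads off $\tr\{UV^\dagger\}^k\tr\{VU^\dagger\}^k$ together with the right power of $d$ by counting loops. You instead split the computation in two: you first invoke Theorem~\ref{theorem:OTO_chaos} (in the dual form $MM^\dagger=d^{2k}\mathbb 1$, correctly inferred from $M^\dagger M=d^{2k}\mathbb 1$ because $M$ is square) to collapse the $\vec A$-sum to the Hilbert–Schmidt norm $\tr\{\Phi^{(k)}_{\mathcal E}(\mathbf B)^\dagger\Phi^{(k)}_{\mathcal E}(\mathbf B)\}$, and only then run the $\vec B$-sum with the vectorization/swap trick. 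The factor-counting checks out: $d^{2k-2}\cdot d^{-k}\cdot d^k\cdot d^k/d^{4k}=d^{-2(k+1)}$. What your route buys is a cleaner conceptual picture — it exposes the intermediate fact that the $\vec A$-averaged quantity is exactly (a multiple of) the $2$-norm of the $k$-fold channel on the fixed input $\mathbf B$, which is the natural object the frame potential is built from, and it reuses Theorem~\ref{theorem:OTO_chaos} rather than rederiving the Pauli completeness from scratch. The paper's approach is more elementary and self-contained, and you correctly identify it as your "alternative" in the last paragraph. The one spot I'd nail down when writing it up formally is the block-swap claim $\sum_{\vec B}\mathbf B^\dagger\otimes\mathbf B=d^k\mathbb S$, which requires reordering the $2k$ tensor factors from $(1,\dots,k;1',\dots,k')$ to $(1,1';\dots;k,k')$ so that the product of single-copy swaps is seen to equal the full-block swap — you flag this as the "hard part," and indeed it's the step where sloppiness would give a wrong permutation.
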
 

The LHS of the equation is the operator average of the $2$-norm of OTO correlators, and the RHS is the $k$th frame potential up to a constant factor. There are $d^{4k}$ Pauli operators $A_{1},\ldots, A_k, B_{1},\ldots, B_k$, which leads to $1/d^{4k}$. The theorem implies that the quantitative effect of random unitary evolution is to decrease the frame potential, which is equivalent to the decay of  OTO correlators.

\begin{proof}
We take the averages over $A_{1},\cdots,A_k, B_{1},\cdots, B_k$ first. Expanding the LHS gives 
\begin{align}
\frac{1}{d^{4k}}\frac{1}{|\mathcal{E}|^{2}d^2}\sum_{U,V\in \mathcal{E}}\sum_{A_{1},\cdots,B_{1},\cdots}\tr \{A_{1}U^{\dagger}B_{1}U\cdots A_{k}U^{\dagger}B_{k}U \} \cdot\tr \{ V^{\dagger}B_{k}^{\dagger}VA_{k}^{\dagger}\cdots V^{\dagger}B_{1}^{\dagger}VA_{1}^{\dagger}\}.
\end{align}
For $k=2$, this can be depicted graphically as
\begin{align}
\includegraphics[width=0.40\linewidth]{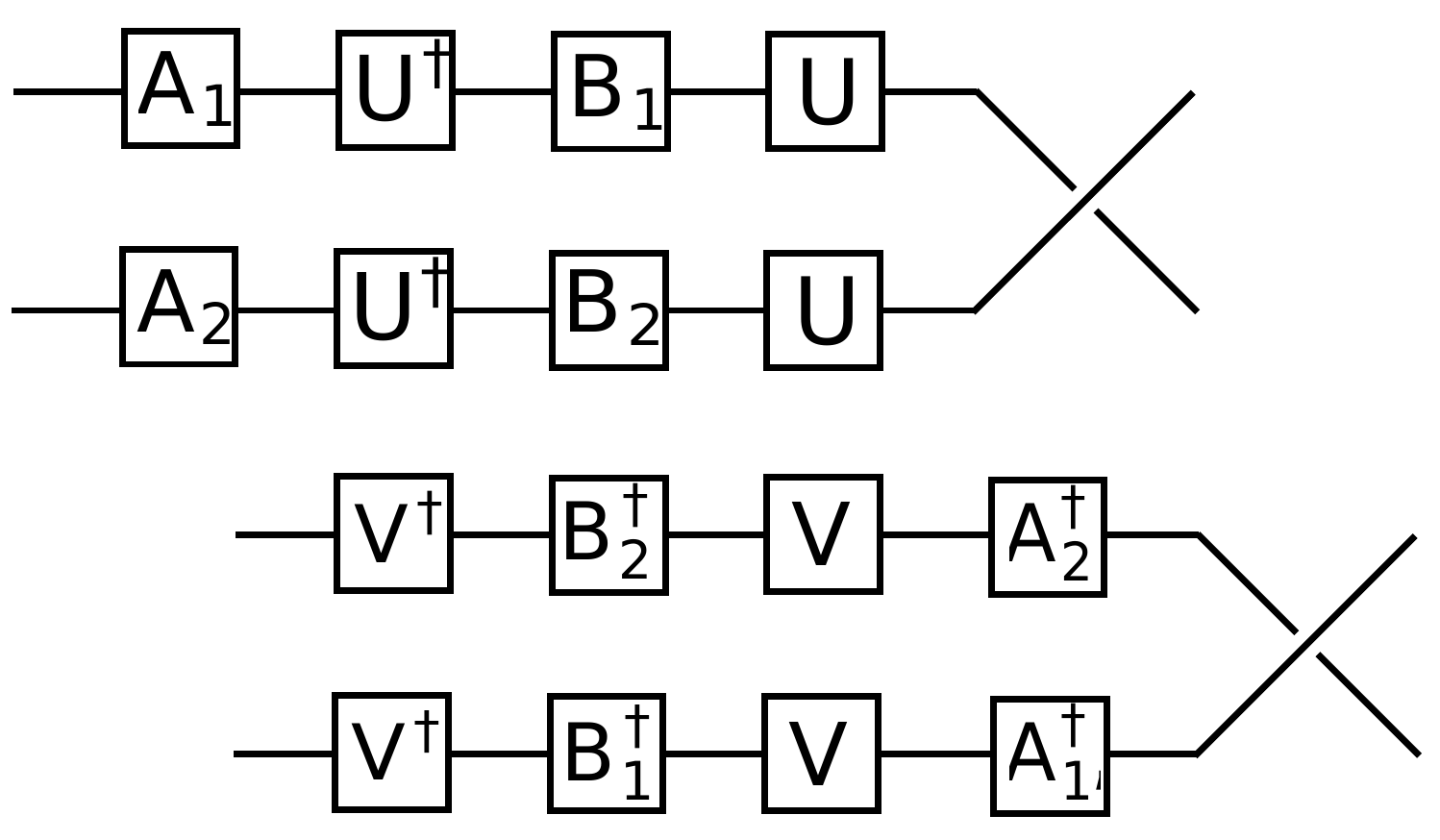}.
\end{align}
Recall that a SWAP operator is given by $\text{SWAP} = \frac{1}{d}\sum_{P} P \otimes P^{\dagger}$
\begin{align}
\includegraphics[width=0.35\linewidth]{fig_SWAP}.
\end{align}
Thus, we replace each average of Pauli operators $A_{1},\cdots,A_k, B_{1},\cdots, B_k$ by SWAP operators. There are $2k$ loops, where $k$ of them contribute to $\tr \{UV^{\dagger}\}$ and the remaining $k$ loops contribute to  $\tr \{ VU^{\dagger} \}$. Keeping track of the number of factors of $d$, we find 
\begin{align}
\frac{1}{d^{2(k+1)}|\mathcal{E}|^2}\sum_{U,V\in\mathcal{E}}\tr \{ UV^{\dagger} \} ^k\cdot \tr \{ VU^{\dagger}\}^k = \frac{1}{d^{2(k+1)}} \cdot F^{(k)}_{\mathcal{E}},
\end{align}
which is the desired result.
\end{proof}

Lastly, we note that we can use the frame potential to lower bound the size of the ensemble. Since all the terms in the sum are positive, by taking the diagonal part of the double sum, we find
\begin{align}
F_{\mathcal{E}}^{(k)} \geq \frac{1}{|\mathcal{E}|^2}\sum_{U,V\in \mathcal{E},\, U=V} \big|\tr \{I \}\big|^{2k}=
 \frac{1}{|\mathcal{E}|}d^{2k}, \label{frame-potential-vs-size}
\end{align}
meaning the cardinality is lower bounded as $|\mathcal{E}| \ge d^{2k} / \Fk$. Using the fact that $F^{(k)}_\mathrm{Haar} = k!$, we can simply bound the size of a $k$-design\footnote{Actually, one can slightly improve this lower bound for a $k$-design, see~\cite{Roy:2009aa, Brandao12}.}
\be
\mathcal{|E|} \ge \frac{d^{2k}}{k!}, \qquad \text{($k$-design)}. \label{eq-cardinality-of-k-design}
\ee

\subsubsection*{Large $k$}

In Appendix~\ref{sec:appendix:orthogonal}, we provided an intuitive way of counting the number of nearly orthogonal states in a high-dimensional vector space $\mathbb{C}^d$ by introducing a precision tolerance $\epsilon$. A similar argument holds for the number of operators; if we can only distinguish operators up to some precision $\epsilon$, then the total number of unitary operators is given by the volume of the unitary group measured in balls of size $\epsilon$, which roughly goes like $\sim \epsilon^{-2^{2n}}$.

The key point is that if we have a tolerance $\epsilon$, then the number of operators is finite even for a continuous ensemble. This means that there is a maximum size to any ensemble that is a subset of the unitary group. Looking at our bound Eq.~\eqref{eq-cardinality-of-k-design}, we see that for $k\sim d$ we begin to reach the maximum size for some $\epsilon$. 

As a corollary, this means that for large $k \gtrsim d$, being a $k$-design implies that the ensemble is an approximate $k+1$-design. In this sense, there are really only $O(d)$ nontrivial moments of the Haar ensemble.

As an explicit example, we will consider the case of $d=2$. Here, for Haar we can compute the frame potential exactly for all $k$ \cite{Scott08}
\begin{align}
F^{(k)}_{\text{Haar}} = \frac{(2k)!}{k ! (k+1)!},
\end{align}
and thus we have the following relationship
\begin{align}
\frac{F_{\text{Haar}}^{(k+1)}}{F_{\text{Haar}}^{(k)}} = 4 \frac{k+1/2}{k+2}.
\end{align}
On the other hand, for any ensemble the frame potential satisfies
\begin{align}
F^{(k+1)} \leq d^2 F^{(k)}=4F^{(k)}.
\end{align}
Therefore, for $d=2$ we see that  if an ensemble is a $k$-design, for $k \gtrsim 2$ it will automatically be close to being a  $k+1$-design.

\subsection{Ensemble vs. time averages}\label{sec:F-time-average}

In this subsection, we consider the time average of the frame potential
to ask whether the ensemble specified by sampling $U(t)=e^{-iHt}$ at different times $t$ will ever form a $k$-design.  In the classical statistical mechanics and chaos literature, this is the question of whether a system is ergodic---whether the average over phase space equals an average over the time evolution of some initial state.

Consider the one-parameter ensemble of unitary matrices defined by evolving with a fixed time-independent Hamiltonian $H$, 
\be
\mathcal{E} = \big\{ e^{-iHt} \big\}_{t=0}^{\infty}.
\ee 
We can compute the frame potential as
\begin{align}
F_{\mathcal{E}}^{(k)} &= \lim_{T\to\infty} \frac{1}{T^2}\int_0^T dt_1 dt_2 \, \Big|\trr{e^{-iH(t_1 - t_2)}}\Big|^{2k}, \\
               &= \lim_{T\to\infty} \frac{1}{T^2}\int_0^T dt_1 dt_2 \, \sum_{\ell_{1}\dots \ell_{k}} \sum_{m_{1}\dots m_{k}} \exp \Big\{ -i (t_2 - t_1) \sum_{i=1}^k E_{\ell_{i}} + i(t_2-t_1)\sum_{j=1}^k E_{m_{j}}  \Big\}. \notag
\end{align}
If the spectrum is chaotic/generic, then the energy levels are all incommensurate. The terms in the sum evaluate to zero unless we have $E_{\ell_{i}} = E_{m_{j}}$ for all possible pairings $(i,j)$
\begin{align}
F_{\mathcal{E}}^{(k)} &= \sum_{\ell_{1}\dots \ell_{k}} \sum_{m_{1}\dots m_{k}} \sum_{ij} \delta_{\ell_{i}m_{j}}, \\
               &= k! \Big(\sum_{\ell} I \Big)^k = k!\,d^k. \notag
\end{align}
This is larger than $F_{\text{Haar}}^{(k)}$ by a factor of $d^k$; the ensemble $\mathcal{E}=\{e^{-iHt}\}$ does not form a $k$-design!
Recall that $F_{\mathcal{E}}=d^{2k}$ for a trivial ensemble $\mathcal{E}=\{I\}$ while $F_{\mathcal{E}}=k!$ for a Haar ensemble. The time average ensemble sits in the middle of these ensembles.

This ``half-randomness''can be understood in the following way. Let us write the Hamiltonian as
\begin{align}
H = \sum_{j} E_{j} |\psi_{j}\rangle \langle \psi_{j}|.
\end{align}
Rotating $H$ by a unitary operator does not affect the frame potential, so we can consider a classical Hamiltonian 
\begin{align}
H' = \sum_{j} E_{j} |j\rangle \langle j|,
\end{align}
with the same frame potential. Even though $H'$ is classical, it has an ability to generate entanglement. Namely, if an initial state is $|+\rangle=\sum_{j}|j\rangle$, then time-evolution will create entanglement. In terms of the original Hamiltonian $H$, we see that the system keeps evolving in a non-trivial manner as long as the initial state is ``generic'' and is different from eigenstates of the Hamiltonian $H$. This arguments suggests that the frame potential, in a time-average sense, can only see the spectrum distribution.

This fact, that Hamiltonian time evolution can never lead to random unitaries, can also be understood in terms of the distribution of the spacing of the eigenvalues.\footnote{This argument has been made by Michael Berry, who related it to Steve Shenker, who related it to Douglas Stanford, who related it to us. As far as we know, it does not otherwise appear in the literature.} The phases of the Haar random unitary operators have Wigner-Dyson statistics; they are repulsed from being degenerate. The eigenvalues of a typical Hamiltonian $H$ also have this property. However, these eigenvalues live on the real line, while the phases of $e^{-iHt}$ live on the circle. In mapping the line to the circle, the eigenvalues of $H$ wrap many times. This means that the difference of neighboring phases of $e^{-iHt}$ will not be repulsed; $e^{-iHt}$ will have Poisson statistics! In this sense, the ensemble formed by sampling $e^{-iHt}$ over time may never become Haar random.\footnote{Relatedly, the space of unitaries with a fixed Hamiltonian is $d$-dimensional, while the space of unitaries is $d^2$-dimensional.} The calculation in the beginning of this subsection provides an explicit calculation of the $2$-norm distance between such an ensemble and $k$-designs and quantifies the degree to which the ensemble average is ``too random'' as compared to the time average.

\section{Measures of complexity}\label{sec:complexity-bound}
For an operator or state, the computational complexity is a measure of the minimum number of basic elementary gates necessary to generate the operator or the state from a simple reference operator (e.g. the identity operator) or reference state (e.g. the product state). With the assumption that the elementary gates are easy to apply, the complexity is a measure of how difficult it is to create the state or operator.

On the other hand, an ensemble contains many different operators with different weights or probabilities. In that case, the computational complexity of the ensemble should be understood as the number of steps it takes to generate the ensemble by probabilistic applications of elementary gates. For instance, to generate the ensemble of Pauli operators, we randomly choose with probability $1/4$ a Pauli operator $I,X,Y,Z$ to apply to a qubit, and then repeat this procedure for all the qubits.

The complexity of an ensemble is related to the complexity of an operator in the following way. If an ensemble can be prepared in $\mathcal{C}$ steps, then all the operators in the ensemble can be generated by applications of at most $\mathcal{C}$ elementary gates. On the other hand, if an ensemble cannot be prepared (or approximated) in $\mathcal{C}$ steps, then---for the sorts of ensembles we are interested in---most of the operators  cannot be generated by applications of $\mathcal{C}$ elementary gates. 
For example, generating the Haar ensemble will take exponential complexity since, on average, individual elements have exponential complexity.

The complexity of the ensemble can be lower bounded in terms of the number of elements or cardinality of the ensemble $|\mathcal{E}|$. If all the elements are represented equally (with uniform probabilities), then clearly at least $\mathcal{E}$ circuits need to be generated from probabilistic applications of the elementary gates. Making use of a fact introduced in the previous section that $\Fk$ provides a lower bound on $|\mathcal{E}|$, here we show that the frame potential provides a lower bound on a circuit complexity of generating $\mathcal{E}$. We will also explain how this bound applies to ensembles that depend continuously on some parameters and thus have a divergent number of elements. 

Two additional bounds that are somewhat outside the scope of the main presentation, one on circuit depth and one on the early-time complexity growth with a disordered ensemble of Hamiltonians, are relegated to Appendix~\ref{sec:complexity-appendix}.

\subsection{Discrete ensembles}\label{sec:complexity-discrete}

Consider a system of $n$ qubits. Let $\mathcal{G}$ denote an elementary gate set that consists of a finite number of two-qubit quantum gates. We denote the number of elementary two-qubit gates by $g:=|\mathcal{G}|$. At each time step we assume that we can implement any of the gates from $\mathcal{G}$. One typically chooses $\mathcal{G}$ so that gates in $\mathcal{G}$ enables a universal quantum computation. A well-known example is
\begin{align}
\mathcal{G} = \{\text{$2$-qubit Clifford}, T\},  
\end{align}
where $T$ is the $\pi/4$ phase shift operator; $T=\text{diag}(1,e^{i\pi/4})$. (Of course, this is not the only choice of elementary gate sets.) 

Our goal is to generate an ensemble of unitary operators $\mathcal{E}$ by sequentially implementing quantum gates from $\mathcal{G}$. Let us denote the necessary number of steps (i.e. the circuit complexity) to generate $\mathcal{E}$ by $\mathcal{C}(\mathcal{E})$. Then one has the following complexity lower bound. 
\begin{theorem}
Let $g$ be the number of distinct two-qubit gates from the elementary gate set. Then the circuit complexity $\mathcal{C}(\mathcal{E})$ to generate an ensemble $\mathcal{E}$ is lower bounded by 
\begin{align}
\mathcal{C}(\mathcal{E}) \geq \frac{\log |\mathcal{E}| }{\log(g n^2)}.\label{eq-for-theorem-complexity-ensemble-size-bound}
\end{align}
\label{theorem-complexity-ensemble-size-bound}
\end{theorem}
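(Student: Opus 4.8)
The plan is to prove Theorem~\ref{theorem-complexity-ensemble-size-bound} by a straightforward counting argument: bound the number of distinct circuits one can build in $\mathcal{C}$ steps and require it to be at least $|\mathcal{E}|$. First I would set up the counting. At each step of the circuit we choose a two-qubit gate from $\mathcal{G}$ (that is $g$ choices) and we choose which ordered pair of qubits (out of $n$) the gate acts on. A crude but sufficient overcount for the number of placements is $n^2$ (it is really $\binom{n}{2}$ up to the gate's own asymmetry, but $n^2$ is a clean upper bound and matches the statement). Hence at each step there are at most $g n^2$ choices, and a circuit of length $\mathcal{C}$ can realize at most $(g n^2)^{\mathcal{C}}$ distinct unitary operators.

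Next I would connect this to the ensemble. If the procedure that generates $\mathcal{E}$ uses at most $\mathcal{C}(\mathcal{E})$ elementary gates, then every unitary appearing in $\mathcal{E}$ is one of these at most $(g n^2)^{\mathcal{C}(\mathcal{E})}$ circuits. In particular $|\mathcal{E}| \le (g n^2)^{\mathcal{C}(\mathcal{E})}$. Taking logarithms gives $\log |\mathcal{E}| \le \mathcal{C}(\mathcal{E}) \log(g n^2)$, which rearranges to the claimed bound
\begin{align}
\mathcal{C}(\mathcal{E}) \ge \frac{\log |\mathcal{E}|}{\log(g n^2)}.
\end{align}
One can then immediately combine this with the cardinality bound $|\mathcal{E}| \ge d^{2k}/F^{(k)}_{\mathcal{E}}$ from Eq.~\eqref{frame-potential-vs-size} and $d = 2^n$ to obtain the frame-potential form of the bound advertised in Eq.~\eqref{intro-complexity-lower-bound}, with $\log(\text{choices}) = \log(g n^2)$ and $2kn\log 2 = \log d^{2k}$.

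The only subtlety---and the place that needs a careful sentence rather than a routine calculation---is the meaning of ``generating an ensemble'' when the circuit itself is chosen probabilistically: at each step the gate and its location may be drawn at random, so a length-$\mathcal{C}$ random process produces a \emph{distribution} over circuits, and we must argue its support still has size at most $(gn^2)^{\mathcal{C}}$. This is true because the support of the process is contained in the set of all length-$\mathcal{C}$ gate sequences, regardless of the probabilities assigned, so the counting is unaffected; I would state this explicitly. A secondary point worth a remark is that the bound as written ignores the possibility of using fewer than $\mathcal{C}(\mathcal{E})$ gates for some elements (padding with identities), and that it counts circuits rather than distinct unitaries---both only make the right-hand side an underestimate of what is achievable, so the inequality direction is safe. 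I expect the main ``obstacle'' is purely expository: making precise what $\mathcal{C}(\mathcal{E})$ means and noting that this is a rather loose bound (as the authors themselves caution), not any real mathematical difficulty.
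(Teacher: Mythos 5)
Your proof is correct and is essentially the same counting argument the paper gives: bound the number of distinct length-$\mathcal{C}$ circuits by $(gn^2)^{\mathcal{C}}$, require this to be at least $|\mathcal{E}|$, and rearrange. The extra remarks you add (on the support of the probabilistic process, padding with identities, and circuits versus distinct unitaries) are sensible expository clarifications but do not change the substance.
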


The proof relies on an elementary counting argument. Arguments along this line of thought have been used commonly in the literature.

\begin{proof}
 At each step, we randomly pick a pair of qubits. Since there are $g$ implementable quantum gates and $\binom{n}{2}$ qubit pairs, there are in total $\simeq g n^2$ choices at each step. 
If this procedure runs for $\mathcal{C}$ steps, the number of unique circuits this procedure can implement is upper bounded by
\begin{align}
\text{$\#$ of circuits}\leq (g n^2)^{\mathcal{C}}.
\end{align}
Since there are $|\mathcal{E}|$ unitary operators in an ensemble $\mathcal{E}$, we must have
\begin{align}
(g n^2)^{\mathcal{C}} \geq |\mathcal{E}|, 
\end{align}
which implies
\begin{align}
\mathcal{C}(\mathcal{E}) \geq \frac{\log |\mathcal{E}| }{\log(g n^2)}. 
\end{align} 
\end{proof}

In Appendix~\ref{sec:appendix:orthogonal}, we provide an intuitive way of counting the number of states in a $d=2^n$-dimensional Hilbert space (which was well known for a long time from \cite{Knill95}). 
As a sanity check on Theorem~\ref{theorem-complexity-ensemble-size-bound}, if we substitute $|\mathcal{E}| \gtrsim 2^{2^n}$ into Eq.~\eqref{eq-for-theorem-complexity-ensemble-size-bound} we see that the complexity 
of most states is exponential in the number of qubits $\mathcal{C} > 2^n \log 2$.

Finally, let us examine the relation between the frame potential and the circuit complexity. Using Eq.~\eqref{frame-potential-vs-size} to trade $|\mathcal{E}|$ for $\Fk$, we find
\begin{align}
\mathcal{C}(\mathcal{E}) \geq \frac{2kn \log(2) - \log F_{\mathcal{E}}^{(k)} }{\log(g n^2)}.
\end{align}
Of course, this bound obviously depends on the choice of basic elements $g$ and the fact that we are using two-qubit gates. If we had considered $q$-body gates, we would have found a denominator $\log(g\binom{n}{q}) \approx \log(gn^q)$ for $n\gg q$.
This is no more than a choice of ``units'' with which to measure complexity. Thus, we state our key result as:
\begin{theorem}\label{theorem-bound}
For an ensemble $\mathcal{E}$ with the $k$th frame potential $F_{\mathcal{E}}^{(k)}$, the circuit complexity is lower bounded by
\begin{align}
\mathcal{C}(\mathcal{E}) \geq \frac{2kn \log(2) - \log F_{\mathcal{E}}^{(k)} }{\log( \mathrm{choices})},\label{eq:lower-bound}
\end{align}
\end{theorem}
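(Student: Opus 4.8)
The plan is to chain together two facts already established: the elementary counting bound of Theorem~\ref{theorem-complexity-ensemble-size-bound} (read in its $q$-body generalization) and the cardinality bound Eq.~\eqref{frame-potential-vs-size} relating $|\mathcal{E}|$ to $F_{\mathcal{E}}^{(k)}$. There is nothing deep to prove here; the content is purely the composition of a counting inequality with an algebraic one, plus a careful statement of what ``complexity of an ensemble'' means.

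First I would recall the counting step. Suppose that at each step of building a circuit we make one of $N_{\mathrm{choices}}$ independent decisions; for a gate set $\mathcal{G}$ of cardinality $g$ consisting of $q$-qubit gates, applied to one of $\binom{n}{q}$ choices of qubits, this gives $N_{\mathrm{choices}} = g\binom{n}{q}$ (which is $\simeq gn^2$ when $q=2$, and $\approx gn^q$ for $n\gg q$). A procedure of depth $\mathcal{C}$ can then produce at most $N_{\mathrm{choices}}^{\mathcal{C}}$ distinct unitaries. Since generating the ensemble $\mathcal{E}$ requires producing all of its $|\mathcal{E}|$ elements, we must have $N_{\mathrm{choices}}^{\mathcal{C}(\mathcal{E})} \ge |\mathcal{E}|$, i.e. $\mathcal{C}(\mathcal{E}) \ge \log|\mathcal{E}|/\log N_{\mathrm{choices}}$. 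This is exactly Theorem~\ref{theorem-complexity-ensemble-size-bound} with the denominator $\log(gn^2)$ replaced by $\log(g\binom{n}{q})$, which is just a change of ``units'' for complexity.

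Next I would feed in Eq.~\eqref{frame-potential-vs-size}. Dropping all of the off-diagonal (nonnegative) terms in the double sum defining $F_{\mathcal{E}}^{(k)}$ leaves $F_{\mathcal{E}}^{(k)} \ge |\mathcal{E}|^{-2}\sum_{U=V}|\tr\{I\}|^{2k} = d^{2k}/|\mathcal{E}|$, hence $|\mathcal{E}| \ge d^{2k}/F_{\mathcal{E}}^{(k)}$. Substituting $d = 2^n$ gives $\log|\mathcal{E}| \ge 2kn\log 2 - \log F_{\mathcal{E}}^{(k)}$, and combining with the counting bound yields $\mathcal{C}(\mathcal{E}) \ge \bigl(2kn\log 2 - \log F_{\mathcal{E}}^{(k)}\bigr)/\log(\mathrm{choices})$, which is Eq.~\eqref{eq:lower-bound}.

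The only genuine subtlety --- and the point I would flag as the ``hard part,'' though it is conceptual rather than technical --- is what $|\mathcal{E}|$ means for a continuous ensemble, where it is formally infinite (e.g. an approximate $k$-design, or the time-sampled ensemble $\{e^{-iHt}\}$ of \S\ref{sec:F-time-average}). Here I would invoke the $\epsilon$-tolerance / $\epsilon$-net discussion of Appendix~\ref{sec:appendix:orthogonal}: fixing a precision $\epsilon$ makes the number of operationally distinguishable unitaries finite, so $|\mathcal{E}|$ should be read as the size of an $\epsilon$-net covering the support of $\mathcal{E}$, and both inequalities go through unchanged with that reading. I would also remark explicitly that the bound is in general far from tight, since the counting argument throws away all structure of $\mathcal{G}$ and of $\mathcal{E}$; including a short illustrative example (or two) exhibiting the slack would be the natural way to close the discussion.
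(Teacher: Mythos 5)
Your proof is correct and follows exactly the paper's own route: compose the counting bound of Theorem~\ref{theorem-complexity-ensemble-size-bound} (with the denominator generalized to $\log(\mathrm{choices})$) with the diagonal-term cardinality bound Eq.~\eqref{frame-potential-vs-size}, then substitute $d=2^n$. The paper also relegates the $\epsilon$-tolerance treatment of continuous ensembles to a separate subsection (\S\ref{sec:continuous}) rather than folding it into the theorem's proof, but that is a presentational difference only.
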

In this context, $\log( \mathrm{choices})$ simply indicates the logarithm of the number of decisions that are made at each step. If we imagine we have some kind of decision tree for determining which gate to apply where we make a binary decision at each step (and use $\log_2$), then we may set the denominator to unity and measure complexity in bits rather than gates, i.e. $\mathcal{C}(\mathcal{E}) \geq 2kn  - \log_2 F_{\mathcal{E}}^{(k)}$.

In the above discussion, we glossed over a subtlety. Here, we considered the quantum circuit complexity to prepare or approximate an entire ensemble $\mathcal{E}$. A closely related but different question concerns the quantum circuit complexity required to implement a typical unitary operator from the ensemble $\mathcal{E}$. Nevertheless, in ordinary settings the typical operator complexity and the ensemble complexity are roughly of the same order. While establishing a rigorous result in this direction is beyond the scope of this paper, see~\cite{Knill95} for some basic proof techniques that are useful in establishing this connection. 

An important consequence of Theorem~\ref{theorem-bound} is that the smallness of the frame potential (i.e. generic smallness of OTO correlators) implies increases in quantum circuit complexity of generating the ensemble $\mathcal{E}$. As a corollary, we can simply rewrite Eq.~\eqref{eq:lower-bound} as
\begin{align}
\mathcal{C}(\mathcal{E}) \geq(2k-1)2n   - \log_2  \sum_{A_{1},\cdots,B_{1},\cdots}\left|\big\langle A_{1}\tilde{B_{1}}\cdots A_{k}\tilde{B_{k}} \big\rangle_{\mathcal{E}}\right|^{2}.\label{eq:lower-bound-oto}   
\end{align}
In this sense, we see how the decay of OTO correlators is directly related to an increase in the (lower-bound) of the complexity.

Next, recall that the frame potential for a $k$-design is given by $F^{(k)}_{\text{Haar}}=k!$, which does not grow with $n$. The complexity of a $k$-design is thus lower bounded as
\be
\mathcal{C}(k\text{-design}) \ge 2kn  - k\log_2 (k) + \log_2 k + \dots, \label{eq:complexity-k-design}
\ee
which for large $n$ grows roughly linearly in $k$ and $n$ (at least). In \S\ref{sec:complexity:depth}, we show that the minimum depth circuit to make a $k$-design also growths linearly in $k$ (at least).

Finally, we offer an additional information-theoretic interpretation for our lower bound and generalize it for ensembles with non-uniform probability distributions. Consider an ensemble $\mathcal{E}=\{ p_{j}, U_{j} \}$ with probability distribution $\{p_{j} \}$ such that $\sum_{j}p_{j}=1$. The second R\`{e}nyi entropy of the distribution $\{p_{j} \}$ is defined as $S^{(2)}= - \log \big( \sum_{j} p_{j}^2 \big)$. In this more general situation, we can still bound the frame potential by considering the diagonal part of the sum
\begin{align}
F^{(k)}_{\mathcal{E}} = \sum_{i,j} p_{i}p_{j} \, \big|\tr \{ U_{i}U_{j}^{\dagger} \}\big|^{2k} \geq \sum_{i} p_{i}^2 \, \big|\tr \{ I \} \big|^{2k} = e^{-S^{(2)}}d^{2k}. 
\end{align}
Since the von Neumann entropy $S^{(1)} = - \sum_{j} p_j \log(p_j)$ is always greater than the second R\`enyi entropy, $S^{(1)} \geq S^{(2)}$, we can bound the von Neumann entropy as
\begin{align}
S^{(1)} \geq 2kn - \log_2 F^{(k)}_{\mathcal{E}}.
\end{align}
The entropy of the ensemble is a notion of complexity measured in bits.\footnote{However, this is not to be confused with the entanglement entropy. The entropy of the ensemble is essentially the logarithm of the number of different operators, and therefore can be exponential in the size of the system. Instead, the entanglement entropy (as a measure of entanglement) can only be as large as (half) the size of the system.}

\subsection{Continuous ensembles}\label{sec:continuous}

Many interesting ensembles of unitary operators are defined by continuous parameters, e.g. a disordered system has a time evolution that may be modeled by an ensemble of Hamiltonians.\footnote{A notable example of recent interest to the holography and condensed matter community is the ensemble implied by time evolving with the Sachdev-Ye-Kitaev Hamiltonian \cite{Sachdev:1992fk,Kitaev:2014t2,Maldacena:2016hyu}.} While the counting argument in \S\ref{sec:complexity-discrete} is not directly applicable to these systems with continuous parameters, the complexity lower bound generalizes to such systems by allowing approximations of unitary operators. To be concrete, imagine that we wish to create some unitary operator $U_{0}$ by combining gates from the elementary gate set. In practice, we do not need to create an exact unitary operator $U_{0}$. Instead, we may be fine with preparing some $U$ that faithfully approximates $U_{0}$ to within a trace distance
\begin{align}
||U_{0}-U||_{1}\leq \epsilon,
\end{align}
where the notation $||\cdot||_p \equiv (\tr \, \{ |\cdot|^p \} )^{1/p}$ specifies the $p$-norm of an operator.

Now, let us derive a complexity lower bound up to an $\epsilon$-tolerance. We begin by taking $N_s$ samples from the ensemble and use them to estimate the frame potential of the continuous distribution
\be
\Fk \approx \frac{1}{N_s^2}\sum_{i,j} \big|\tr\, \{ U_i^{\dagger}V_j \} \big|^{2k},\label{sampled-frame-potential}
\ee
where the each of the two sums runs over all $N_s$ samples. We can lower bound Eq.~\eqref{sampled-frame-potential} as follows
\be
\Fk \ge \frac{1}{N_s^2}\sum_{i} \sum_{\substack{j, \\ \mathllap{||}U_i\mathrlap{ - V_j||_1 < \epsilon} } } \big|\tr\, \{ U_i^{\dagger}V_j \} \big|^{2k},\label{epsilon-frame-potential}
\ee
where the sum over $i$ runs from $1$ to $N_s$. The sum over $j$ includes only a smaller subset $N_\epsilon(U_i)$, which is the number of operators within a trace distance $\epsilon$ of a particular $U_i$.

To continue, let's bound the summand. First, note that
\be
\big| \tr \{ U^\dagger V \} \big| > \text{Re} \big\{ \, \tr \, \{  U^\dagger V \} \big\} = d - \frac{1}{2}||U-V||_2^2.
\ee
The $2$-norm is upper bounded by the $1$-norm as $|| \mathcal{O} ||_2 \le \sqrt{d}  || \mathcal{O}||_1$, which let's us rewrite this as
\be
\big| \tr\, \{ U^{\dagger}V \} \big|^{2k} > d^{2k}\bigg(1 - \frac{1}{2}||U-V||_1^2\bigg)^{2k}.\label{summand-bound}
\ee
For this formula to be sensible, this approximation requires that $\epsilon < \sqrt{2}$. Substituting Eq.~\eqref{summand-bound} into Eq.~\eqref{epsilon-frame-potential}, we can bound the frame potential
\be
\Fk \ge  \frac{1}{N_s} d^{2k}\bigg(1 - \frac{\epsilon^2}{2}\bigg)^{2k} \bigg[ \frac{1}{N_s}  \sum_{i=1}^{N_s} N{_\epsilon}(U_i) \bigg] = d^{2k}\bigg(\frac{ \overline{N_\epsilon} }{N_s} \bigg) \bigg(1 - \frac{\epsilon^2}{2}\bigg)^{2k}. 
\ee
The term in brackets in the middle expression is the average number of operators within a trace distance $\epsilon$ of an operator in our sample set. In the final expression this is represented by the symbol $\overline{N_\epsilon}$.

Now, let's run the counting argument again. If we want to make $N_s$ circuits exactly, then in $\mathcal{C}$ steps we must have
\be
(\text{choices})^\mathcal{C} > N_s,
\ee
where as before $(\text{choices})$ summarizes the information about our choice of gate set, etc. Instead, if you only care about making circuits to within an $\epsilon$-accuracy, then in $\mathcal{C}_\epsilon$ steps $N_s$ instead satisfies
\be
(\text{choices})^{\mathcal{C}_\epsilon} ~  \overline{N_\epsilon}  > N_s.
\ee
This lets us lower bound the complexity of the ensemble at precision $\epsilon$ as
\be
\mathcal{C}_\epsilon(\mathcal{E}) > \frac{2k \log(d) - k \epsilon^2 - \log \Fk}{\log (\text{choices}) }.  \label{complexity-boudnd-eps}
\ee
We then take the continuum limit by taking $N_s \to \infty$. The number of operators within an $\epsilon$-ball of a given sample will also diverge, but the ratio $ N_s/ \overline{N_\epsilon}$ should remain finite and converge to some value, roughly the volume of the ensemble as measured in balls of $\epsilon$-radius.\footnote{Note that $\log(\text{choices})$ diverges if the elementary gate set is continuous. This problem may be also fixed by employing the $\epsilon$-tolerance.}

Finally, in $\S\ref{sec:complexity:early}$, we further extend this notion of bounding complexity for continuous ensembles to show that the initial early-time growth of complexity for evolution with an ensemble of Hamiltonians grows initially as $t^2$ for a time $t < 1/\sqrt{\log(d)}$.

\section{Measures of correlators}\label{sec:haar-averages}

While much of the focus of this paper has been on the behavior of ensembles, we were originally motivated by the following question: When is a random unitary operator an appropriate approximation to the underlying dynamics? In this section, we will attempt to return the focus to this question by computing random averages over correlation functions and comparing them to expectations for chaotic time evolution in physical systems.

\subsection{Haar random averages}

In this subsection, we will explicitly compute some ensemble averages of OTO correlators for different choices of ensembles. A particular goal will be to understand the asymptotic behavior of these averages in the limit of a large number of degrees of freedom $d\to \infty$. We present explicit calculations for $2$-point and $4$-point functions here and provide results for  $6$-point and $8$-point functions.  Additional calculations may be found in Appendix~\ref{sec:appendix-averages}.

\subsubsection*{$2$-point functions}

Consider a $2$-point correlator, averaged over Haar random unitary operators
\begin{align}
\langle A  \tilde{B} \rangle_{\text{Haar}}=\frac{1}{d}\int_{\text{Haar}} dU \, \tr \, \{ A \, U^{\dagger}BU\} .
\end{align}
Since $U$ and $U^\dagger$ each only appear in the expression once, we will obtain the same answer if the average is instead performed over a 1-design: $\langle A  \tilde{B} \rangle_{\text{Haar}} = \langle A \tilde{B} \rangle_{\text{1-design}}$. By using a formula from \S\ref{sec:review}, we can derive the following expression
\begin{align}
\langle A  \tilde{B} \rangle_{\text{Haar}} = \langle A\rangle\langle B \rangle. \label{2-pt-haar}
\end{align}
Graphically, the calculation goes as follows
\begin{align}
\includegraphics[width=0.65\linewidth]{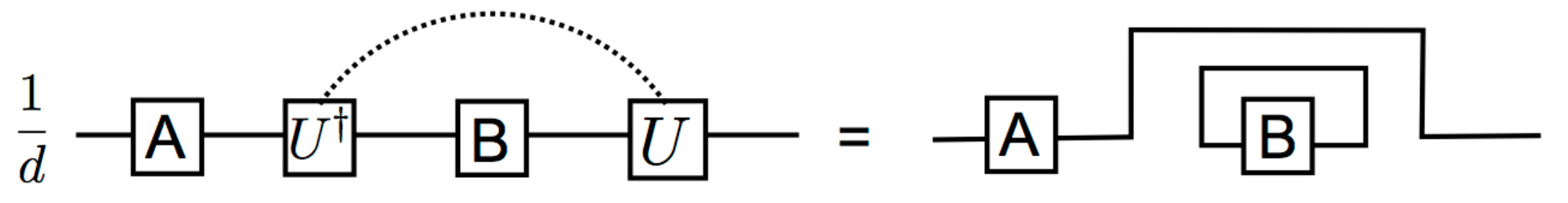}.
\end{align}
It is often convenient to consider physical observables with zero mean by shifting $A \rightarrow A -\langle A\rangle$. Then, we see that these $2$-point correlation function vanish. Of course, this always holds for Pauli operators
\begin{align}
\langle A  \tilde{B} \rangle_{\text{Haar}} = 0, \qquad A,B\in \mathcal{P},\quad A,B\not=I.
\end{align}

Next, let us consider the norm squared of a $2$-point correlator averaged over Haar random unitary operators $|\langle A  \tilde{B} \rangle|^2_{\text{Haar}}=\frac{1}{d^2}\int_{\text{Haar}} dU\, \tr \{ A\,U^{\dagger}BU \} \, \tr \{ U^{\dagger}B^\dagger U \, A^{\dagger}\}$. Note that we take the Haar average after squaring the correlator. Since there are two pairs of $U$ and $U^{\dagger}$ appearing, we can perform the average over a $2$-design: $| \langle A  \tilde{B} \rangle|^{2}_{\text{Haar}} = |\langle A  \tilde{B} \rangle|^{2}_{\text{2-design}}$. Let us assume that $A,B$ are Pauli operators so that we can neglect contributions from $\langle A\rangle$ and $\langle B\rangle$. There are four terms, but only one term survives because the trace of non-identity Pauli operators is zero. We depict the calculation graphically as
\begin{align}
\includegraphics[width=0.6\linewidth]{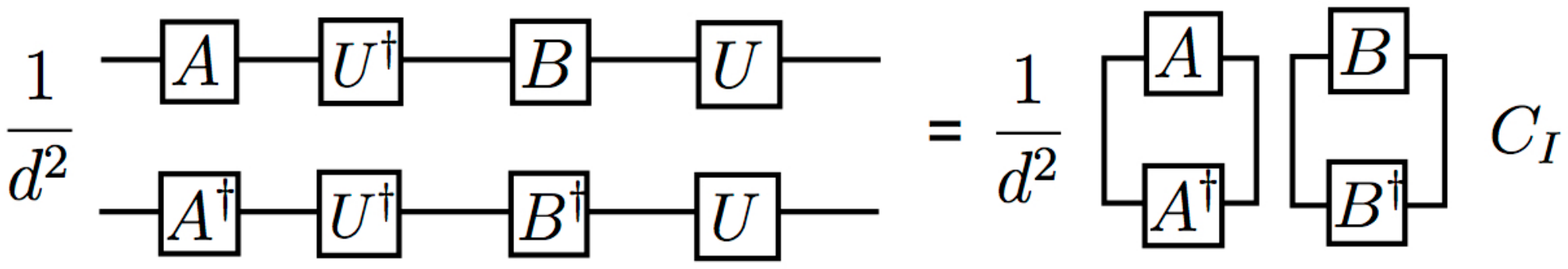}.
\end{align}
where $C_{I}=1/(d^2-1)$ comes from the Weingarten function as shown in \S\ref{sec:review}.  The final result is
\begin{align}
|\langle A  \tilde{B}\rangle|^2_{\text{Haar}}=\frac{1}{d^2-1},\qquad A,B\in \mathcal{P},\quad A,B\not=I.\label{eq:2-pt-variance}
\end{align}
Thus, the variance of the Haar averaged $2$-point function is exponentially small in the number of qubits. 

\subsubsection*{4-point functions}

Next, consider a $4$-point OTO correlator averaged over Haar random unitary operators: 
\begin{align}
\langle A \tilde{B} C \tilde{D} \rangle_{\text{Haar}} = \frac{1}{d}\int_{\text{Haar}} dU \tr \{ A\, U^\dagger B U \,C \, U^\dagger D U \}. 
\end{align}
As has already been explained, we will obtain the same answer if the average is performed with a 2-design: $\langle A \tilde{B} C \tilde{D} \rangle_{\text{Haar}} = \langle A \tilde{B} C \tilde{D} \rangle_{\text{2-design}}$. By using formulas from \S\ref{sec:review} we can derive the following expression\footnote{Note: this formula was independently obtained by Kitaev.}
\begin{equation}
\langle A \tilde{B} C \tilde{D}  \rangle_{\text{Haar}} = \langle AC \rangle \langle B \rangle\langle D \rangle + \langle A \rangle\langle C \rangle \langle BD \rangle -  \langle A\rangle \langle C \rangle \langle B \rangle\langle D \rangle - \frac{1}{d^2-1}\langle\!\langle AC\rangle\!\rangle \langle\!\langle BD\rangle\!\rangle, \label{eq:OTO_formula}
\end{equation}
where $d=2^n$ and $\langle \! \langle AC\rangle \! \rangle\equiv\langle AC\rangle-\langle A\rangle\langle C\rangle$. In particular, for Pauli operators $A,B,C,D\not=I$, one has
\begin{equation}
\begin{split}
\langle A \tilde{B} C \tilde{D} \rangle_{\text{Haar}} &= - \frac{1}{d^2-1} \qquad (A=C^{\dagger},\ B=D^{\dagger} ), \\
&= 0 \qquad \quad \qquad (\text{otherwise}).
\end{split}
\end{equation}
When nonzero, the result is exponential small in the number of qubits $n$. The derivation of the aforementioned formula can be understood graphically as follows
\begin{equation}
\includegraphics[width=0.88\linewidth]{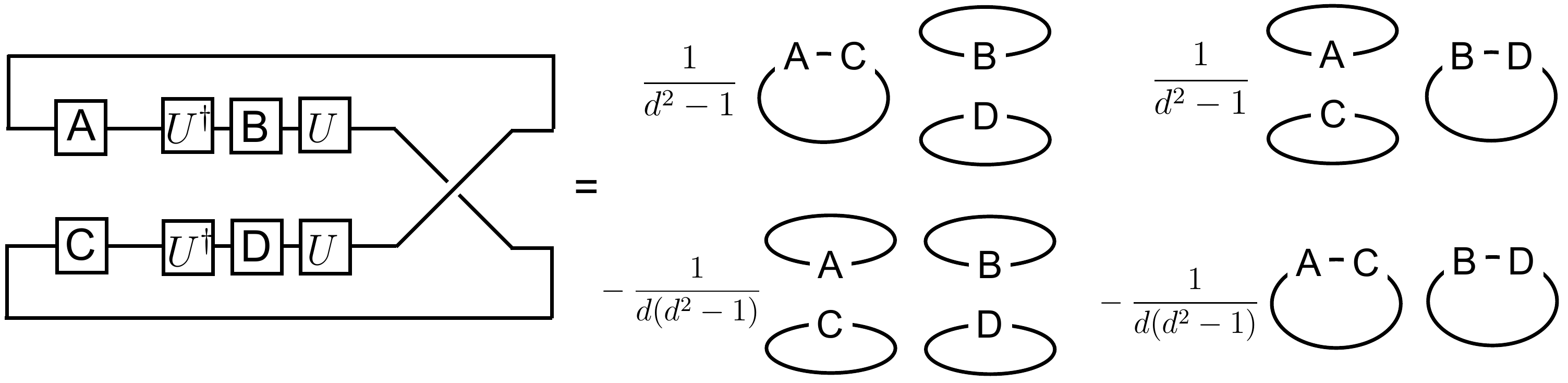}.
\end{equation}
By rewriting the expression in terms of connected correlators, we obtain the formula Eq.~\eqref{eq:OTO_formula}.

Of course, we can obtain the same result by instead averaging over the Clifford group. When $A\not=C^{\dagger}$ or $B\not=D^{\dagger}$, it is easy to show $\langle A\tilde{B}C\tilde{D} \rangle_{\text{Clifford}}=0$.  When $A=C^{\dagger}$ and $B=D^{\dagger}$, the value of the correlator depends on the commutation relations between $A$ and~$B$
\begin{equation}
\begin{split}
\langle A\tilde{B}A^{\dagger}\tilde{B}^{\dagger} \rangle&=1, \quad \qquad [A,B]=0,\\
&=-1, \qquad \{A,B\}=0.
\end{split}
\end{equation}
Since by definition Clifford operators transform a Pauli operators to another Pauli operator, $\tilde{B}$ is a random Pauli operator with $\tilde{B}\not=I$. There are $d^2-1$ non-identity Pauli operators. Among them, $d^2/2-1$ Pauli operators commute with $A$, and $d^2/2$ anti-commute with $A$. Therefore, we find
\begin{align}
\langle A\tilde{B}A^{\dagger}\tilde{B}^{\dagger} \rangle_{\text{Clifford}}
= \frac{d^2/2-1}{d^2-1} - \frac{d^2/2}{d^2-1} = \frac{-1}{d^2-1}.
\end{align}
As expected, the $4$-point OTO correlator ensemble average over the Clifford group equals the Haar average. Recalling our result from \S\ref{sec:OTO_channel} that $4$-point OTO values completely determine the $2$-fold channel, this explicit calculation gives an alternative proof that the Clifford group forms a unitary $2$-design.

Finally, we will present one additional way of computing the Haar average of $4$-point OTO correlators. 
For convenience, we introduce the following notation
\begin{align}
\text{OTO}^{(4)}(A,B) :=\langle A \tilde{B} A^{\dagger}\tilde{B}^{\dagger} \rangle_{\text{Haar}}.
\end{align}
Since $U$ is sampled uniformly over the unitary group, $\text{OTO}^{(4)}(A,B)$ does not depend on $A,B$ as long as $A,B\not=I$. Consider the average of OTO correlation functions over all Pauli operators $A$, including identity operators: $\sum_{A} \langle A \tilde{B} A^{\dagger}\tilde{B}^{\dagger} \rangle_{\text{Haar}}$. Since $\frac{1}{d}\sum_{A}A\otimes A^{\dagger}$ is the swap operator, we have
\begin{align}
\includegraphics[width=0.65\linewidth]{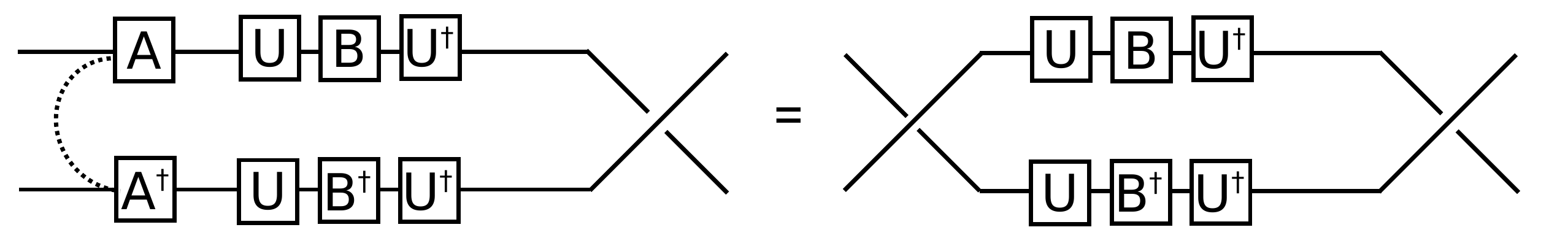},
\end{align}
where a dotted line represents an average over all the Pauli operators. This expression must be zero for $B\not=I$. Thus, we have
\begin{align}
\sum_{A}\text{OTO}^{(4)}(A,B) = 0, \qquad B\not=I. 
\end{align}
If $A=I$, we have $\text{OTO}(A,B)=1$. Since there are $d^2$ Pauli operators, we have 
\begin{align}
\text{OTO}^{(4)}(A,B) = -\frac{1}{d^2-1}, \qquad A,B\not=I. 
\end{align}
In Appendix~\ref{sec:k-point-averages}, we use this method to estimate the scaling of higher-point OTO correlators with $d$, finding $4m$-point functions of a related ordering scale as $\sim 1/d^{2m}$.

\subsubsection*{$6$-point functions}

Next, consider the Haar average of $6$-point OTO correlators, $\langle A \tilde{B} C \tilde{D} E \tilde{F} \rangle_{\text{Haar}}$. We will assume that $A,\ldots,F\not=I$ are Pauli operators. In order to have non-zero contributions, we must have $ACE \propto I$ and $BDF \propto I$. Thus, we will only consider cases with $ACE \propto I$ and $BDF \propto I$.

The results depend on commutation relations between $A,C$ and $B,D$, but always have the following scaling
\begin{align}
\langle \text{6-point} \rangle_{\text{Haar}} \sim \frac{1}{d^2}, \qquad (A C E=I, \quad B  D  F=I).
\end{align}
Explicitly, when $[A,C]=0$ and $[B,D]=0$, we find
\begin{align}
\langle \text{6-point} \rangle_{\text{Haar}}  = \frac{2d^2}{(d^2-1)(d^2-4)}.
\end{align}
The more general expression is slightly complicated though has the same scaling.
Thus, the Haar average of $6$-point OTO correlators does not reach any lower a floor value than the Haar average of $4$-point OTO correlators. 

\subsubsection*{$8$-point functions}
Finally, we will study Haar averages of $8$-point OTO correlators. In this case, there are two different types of nontrivial out-of-time ordering, which behave differently at large $d$. These computations are annoyingly technical, and so the details are hidden in Appendix~\ref{sec:8-pt-functions}.

The $8$-point OTO correlators of the first type can be written in the following manner
\begin{align}
\langle A\tilde{B}C\tilde{D}A^{\dagger}  \tilde{B}^{\dagger} C^{\dagger}\tilde{D}^{\dagger}\rangle, \qquad \text{(non-commutator type)}.\label{eq:non-commutator-type}
\end{align}
For Hermitian operators, this  essentially repeats $A\tilde{B}C\tilde{D}$ twice. For reasons that will subsequently become clear, we will call such OTO correlators ``non-commutator types.'' (However, the result does depends on the commutation relations between $A,C$ and $B,D$.) For these correlators, the scaling of the Haar average with respect to $d$ is
\begin{align}
\langle A\tilde{B}C\tilde{D}A^{\dagger}  \tilde{B}^{\dagger} C^{\dagger}\tilde{D}^{\dagger}\rangle_{\text{Haar}} \sim \frac{1}{d^2}, \qquad \text{(non-commutator type)},
\end{align}
and this scaling does not depend on any commutation relations. Similar to what we found for the Haar average of the $6$-point functions, these non-commutator type $8$-point OTO correlators have the same scaling with $d$ as the Haar average of $4$-point OTO correlators. 

The $8$-point OTO correlators of the second type take the form
\begin{align}
\langle A\tilde{B}C\tilde{D}A^{\dagger}  \tilde{D}^{\dagger} C^{\dagger}\tilde{B}^{\dagger}\rangle, \qquad \text{(commutator type)},
\end{align}
and are denoted ``commutator-type'' correlators.
These correlators have the property that they can be written in the form
\begin{align}
\langle A\tilde{B}C\tilde{D}A^{\dagger}  \tilde{D}^{\dagger} C^{\dagger}\tilde{B}^{\dagger}\rangle = \langle AKA^{\dagger} K^{\dagger}\rangle,
\qquad K = \tilde{B}C\tilde{D},
\end{align}
i.e. they are the expectation value of the group commutator of the operator $AKA^{\dagger} K^{\dagger}$.
The OTO correlators Eq.~\eqref{eq:non-commutator-type} cannot be written in this way. As with the non-commutator types, the exact Haar average depends on commutation relations between $A,C$ and $B,D$. However, the scaling with respect to $d$ does not
\begin{align}
\langle A\tilde{B}C\tilde{D}A^{\dagger}  \tilde{D}^{\dagger} C^{\dagger}\tilde{B}^{\dagger}\rangle_{\text{Haar}} \sim \frac{1}{d^4}, \qquad \text{(commutator type)}.
\end{align}
The Haar average of these correlators is much smaller than the Haar average of the non-commutator types and the $4$- and $8$-point Haar averages! This suggests they might be a useful statistic for distinguishing ensembles that form a $4$-design and ensembles that form a $2$-design but do not form a higher design.

To test this idea, we can take an average of the commutator type $8$-point OTO correlators averaged over the Clifford group.
Since we have assumed the operators $A, \dots, D$ are Pauli operators, we find
\begin{align}
\langle A\tilde{B}C\tilde{D}A^{\dagger}  \tilde{D}^{\dagger} C^{\dagger}\tilde{B}^{\dagger}\rangle_{\text{Clifford}} =  \langle A\tilde{B}\tilde{D}CA^{\dagger} C^{\dagger} \tilde{D}^{\dagger} \tilde{B}^{\dagger} \rangle_{\text{Clifford}}  = \frac{K(A,C^{\dagger})}{d}  \, \tr \{ A\tilde{B}\tilde{D}A^{\dagger} \tilde{D}^{\dagger} \tilde{B}^{\dagger}\},
\end{align}
where in the first equality we commuted $C,\tilde{D}$ and $C^{\dagger},\tilde{D}^{\dagger}$, which holds because $C, D$ are Pauli operators, and in the second equality we
have defined $K(P,Q)$ by
\be
Q^{\dagger}PQ = K(P,Q)P,
\ee 
for Pauli operators $P,Q$.  The final answer is
\begin{align}
\langle A\tilde{B}C\tilde{D}A^{\dagger}  \tilde{D}^{\dagger} C^{\dagger}\tilde{B}^{\dagger}\rangle_{\text{Clifford}} = \frac{-K(A,C^{\dagger})}{d^2-1}\sim \frac{1}{d^2}.
\end{align}
Recall that the Clifford group is a unitary $2$-design, is not a $3$-design in general (except for a system of qubits) \cite{Webb15}, and is never a $4$-design. Therefore, we see that commutator-type correlators may provide a statistical test of whether an ensemble forms a $k$-design but not a $k+1$-design. We explore this idea further in \S\ref{sec:k-point-averages}.

\subsection{Dissipation vs. scrambling}
In this subsection, we will return to the $2$- and $4$-point averages and compare them against expectations from time evolution. Furthermore, we will attempt to provide some physical intuition for the behavior of these averages over different ensembles. This will support our picture of chaotic time evolution leading to increased pseudorandomness.

For strongly coupled thermal systems, it is expected that the connected part of the $2$-point correlation functions decays exponentially within a time scale $t_{d}$ of order the inverse temperature $\beta$
\begin{align}
\langle A(0)B(t) \rangle \rightarrow \langle A\rangle \langle B \rangle + O(e^{-t/t_d}). \label{2-pt-decay}
\end{align}
This time scale is often referred to as a ``dissipation'' or ``thermalization'' time and is related to the time it takes local thermodynamic quantities reaching equilibrium.\footnote{For weakly coupled systems where the quasiparticle picture is valid (e.g. Fermi-liquids) the time scale is instead be given by $t_{d}\sim \beta^2$.} 
It is suggestive that the results Eq.~\eqref{2-pt-haar} and Eq.~\eqref{2-pt-decay} are so similar. After a short time $t_d$, for these $2$-point functions the chaotic dynamics give the same results as the Haar random dynamics.

Next, we turn to the variance of the $2$-point correlator $\langle A(0)B(t) \rangle$. For a closed system of finite number of degrees of freedom, the $2$-point function will be quasi-periodic with recurrences after a timescale $t_r \sim e^{d}$ that is exponential in the dimension and doubly exponential in the number of degrees of freedom $d=2^n$.  As such, the long-time average of $|\langle A(0)B(t) \rangle|^2$ must be nonzero. This can be estimated by performing a time average and gives a well known result \cite{Dyson:2002pf,Barbon:2003aq,Barbon:2014rma}\footnote{N.B. there is an error in this calculation as presented in \cite{Dyson:2002pf,Barbon:2003aq} and so interested readers should consult \cite{Barbon:2014rma} for the actual details.}
\begin{align}
\lim_{T\rightarrow \infty}\frac{1}{T}\int_{0}^{T} |\langle A(0)B(t)|^2 dt \sim \frac{1}{d^2}.
\end{align}
Comparing against our result for the Haar-averaged dynamics Eq.~\eqref{eq:2-pt-variance}, we see that they coincide.

Next, let's consider  $4$-point correlators in strongly-coupled theories with a large number of degrees of freedom $N$.\footnote{We thank Douglas Stanford for conversations relating to dissipative behavior of $4$-point functions.} (For example, this can be thought of a system of $N$ qubits where all the qubits interact but the interactions are at most $q$-local, with $q\ll N$ and $N\to \infty$.) First, let's consider the case of a \emph{time-ordered} correlator. Similar to the case of the $2$-point functions, two of the three Wick contractions are expected to decay exponentially within a dissipation time $t_{d}$
\begin{equation}
\begin{split}
\langle A(0)C(0)B(t)D(t) \rangle \rightarrow 
\langle AC \rangle \langle B D \rangle + O(e^{-t/t_d}),
\end{split}
\end{equation}
which for qubits is analogous to considering a $2$-point function between the composite operators $AC$ and $B D$.\footnote{Note that the exact timescale $t_{d}$ might depend on the operators being correlated and the particular contraction.} Thus, this correlator will equilibrate after a time $t_{d}$ with a late-time value that depends on the expectations $\langle AC \rangle$ and $\langle B D \rangle$.

Now, let's consider the out-of-time-order $4$-point correlator in a large $N$ strongly interacting theory. For $t \sim t_d$, this will behave similarly to the time-ordered correlator with two of the three Wick contractions decaying exponentially
\begin{equation}
\begin{split}
\langle A(0)B(t)C(0)D(t) \rangle \rightarrow 
\langle AC \rangle \langle B D \rangle + O(e^{-t/t_d}), \qquad t< t_d.
\end{split}
\end{equation}
However, for $t > t_d$ the correlator obtains a exponentially growing connected component
\be
\langle A(0)B(t)C(0)D(t) \rangle \rightarrow 
\langle AC \rangle \langle B D \rangle - O(e^{\lambda (t-t_*)}), \qquad t_d < t < t_*.
\ee
This growth occurs in the regime $t_d < t < t_*$. The time scale $t_* = \lambda^{-1} \log N$, known as the fast scrambling time, is the time at which the exponentially growing piece of the correlator compensates its $1/N$ suppression and  becomes $O(1)$. The coefficient $\lambda$ has the interpretation of a new kind of Lyapunov exponent \cite{Kitaev:2014t1} and is bounded from above by $2\pi / \beta$ \cite{Maldacena:2015waa}. Finally, for $t > t_*$, these OTO $4$-point correlators are expected to decay to a small floor value that is  exponentially small in $N$. A natural guess for this floor is
\begin{equation}
\langle A(0)B(t)C(0)D(t) \rangle \rightarrow  e^{-O(N)}\langle AC\rangle \langle BD\rangle, \qquad t > t_*,
\end{equation}
which is reproduced from Eq.~\eqref{eq:OTO_formula} with $1$-point functions assumed to be subtracted off.

As we mentioned, the $2$-point function Eq.~\eqref{2-pt-decay} reached its Haar random value after a short dissipation time $t_d$. This is very suggestive of a picture where chaotic dynamics behave as a pseudo-$1$-design after a time $t_d$. Taking this point further, let's consider the 4-point OTO correlator averaged over Pauli operators, an ensemble that forms a $1$-design, but not a $2$-design. Furthermore, we will assume that the operators $A$ and $B$ have zero overlap. Under this assumption, we can show that 
\begin{equation}
\begin{split}
\langle  A \tilde{B} C \tilde{D} \rangle_{\text{Pauli}} = 
\langle AC \rangle \langle B D \rangle. \label{eq:some_result}
\end{split}
\end{equation}
The proof of this is relegated to \S\ref{sec:proof:pauli}.
Apparently, Pauli operators capture the behavior of the dynamics around $t\sim t_d$, i.e. from after the dissipative regime until the scrambling regime, but then a $2$-design is required to capture the behavior after $t\sim t_*$, i.e. the post-scrambling regime.\footnote{Note that these observations depend on the ensemble we average over actually being the Pauli operators, and not just any ensemble that forms a $1$-design without forming a $2$-design. Furthermore, we assume that $A,B,C,D$ are simply few-body operators so that the correlator is of local operators. These choices are determined for us by the basis in which the Hamiltonian is $q$-local.}

Thus, we might say that after a time $\sim t_d$, the system becomes a pseudo-$1$-design, and then after $\sim t_*$ the system becomes a pseudo-$2$-design. (See Fig.~\ref{fig_scrambling} for a cartoon of this behavior.) However, it remains an open question whether there are any additional meaningful timescales that can be probed with correlators after $t_*$, though we are hopeful that such timescales might be hiding in higher-point OTO correlators.

\begin{figure}[htb!]
\centering
\includegraphics[width=0.70\linewidth]{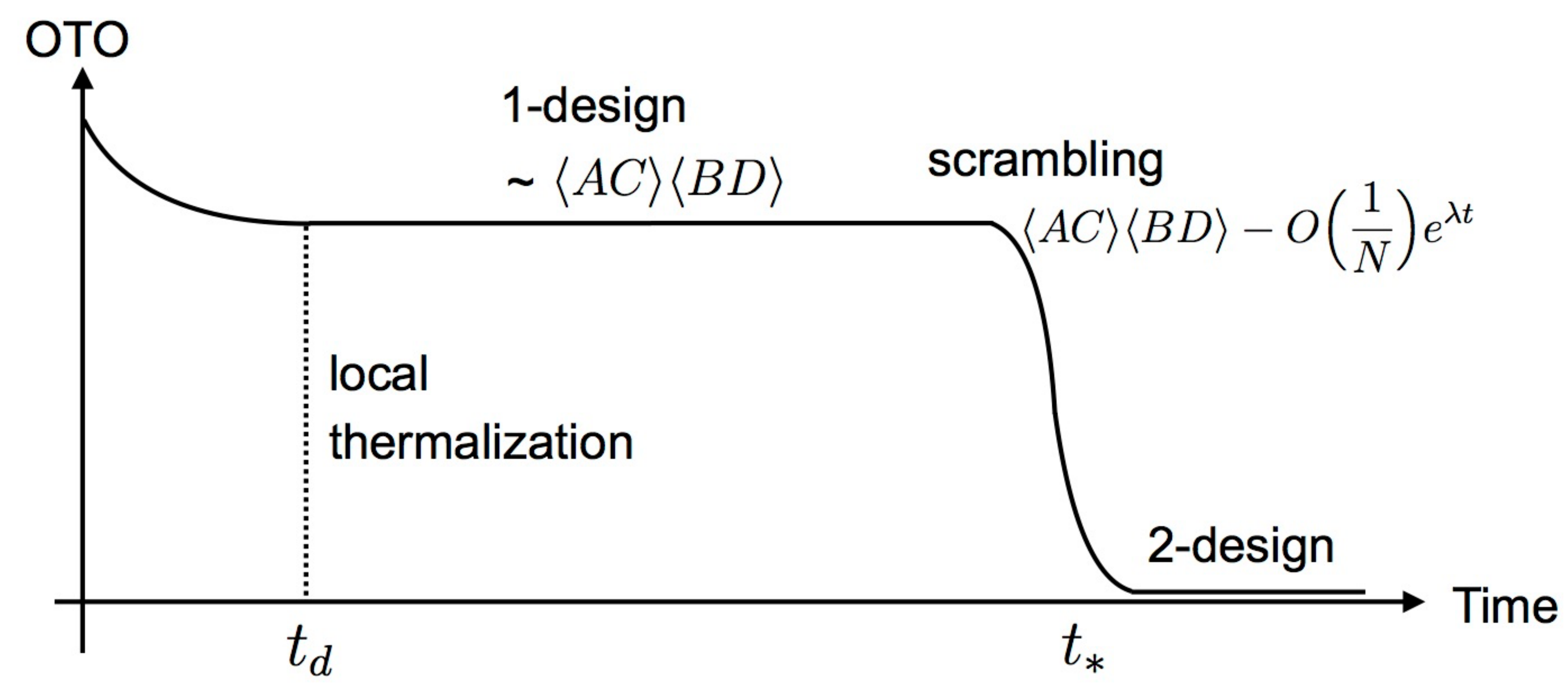}
\caption{Thermalization and scrambling in the decay of a four-point OTO correlator. These correlators typically decay to $\langle AC \rangle \langle B D \rangle$ in a thermal dissipation time $t_d$, and then they decay to a floor value $\sim d^{-2}$ at around the scrambling time $t_*$. These regimes are very well captured by replacing the dynamics with $1$-designs or $2$-designs, respectively. 
} 
\label{fig_scrambling}
\end{figure}

\section{Discussion}\label{sec:discussion}
In this paper, we have connected the related ideas of chaos and complexity to pseudorandomness and unitary design. A cartoon of these ideas is expressed nicely by Fig.~\ref{fig-universe}. Operators can be thought of as being organized by increasing complexity. Regions defined by circles of larger and larger radius can be thought of as defining designs with increasing $k$.\footnote{
    While this picture is a cartoon, the manifold for the unitary group $U(n)$ has a dimension exponentially large in $n$. However, following \cite{Dowling2006:gt}, one can find a metric in which the length of a minimal geodesic between points computes operator complexity and in which most sections are expected to be hyperbolic. Taking this further, \cite{Brown:2016wib} considered an interesting analog system on a hyperbolic plane that captures many of the expected properties of complexity.
} In the rest of the discussion, we will make some related points, tie up loose ends, and mention future work.

\begin{figure}[htb!]
\centering
\includegraphics[scale=.45]{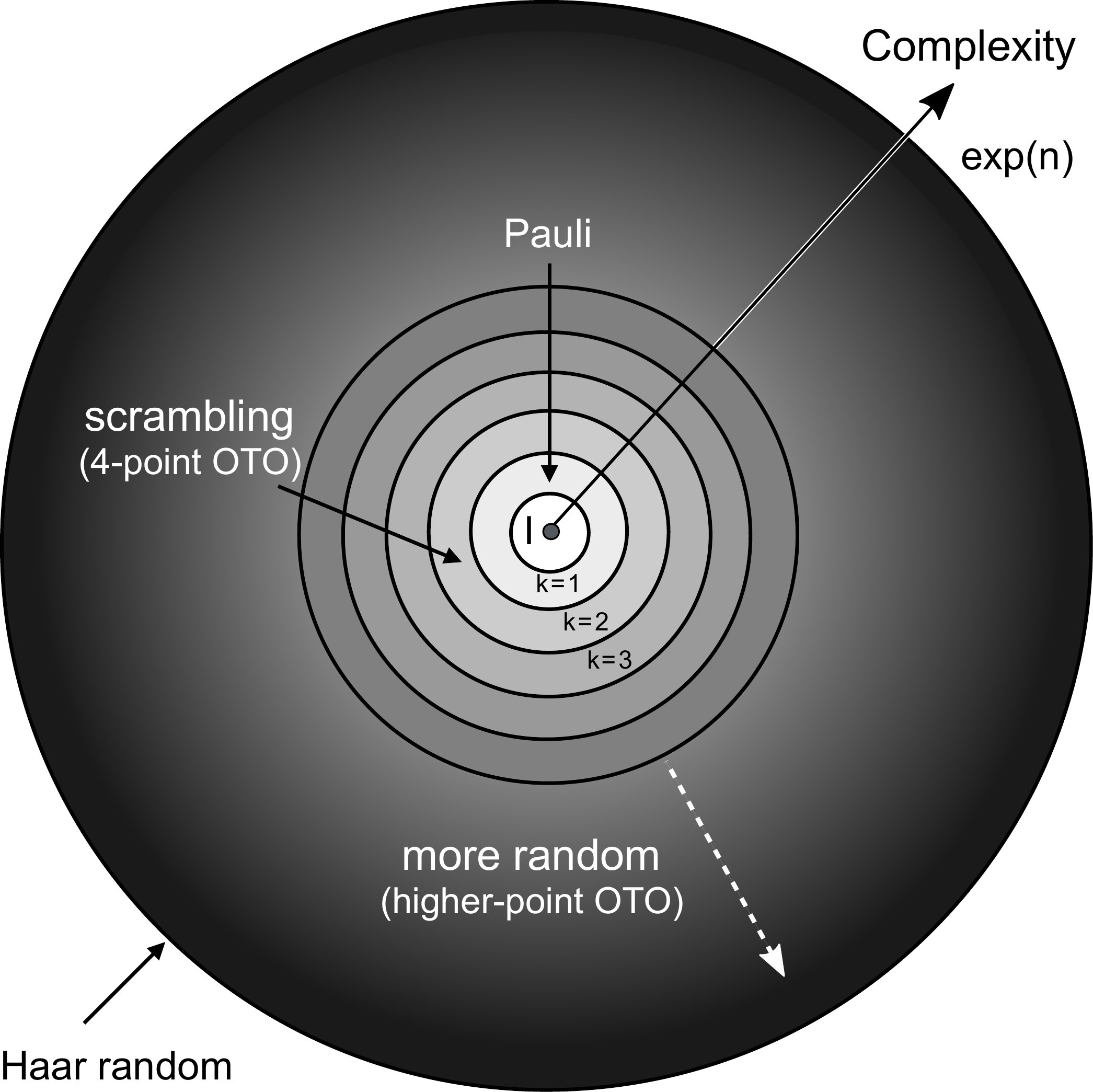}
\caption{A cartoon of the unitary group, with operators arranged by design. We pick the identity operator to be the reference operator of zero complexity and place it at the center. Typical operators have exponential complexity and live near the edge. Operators closer to the center have lower complexity, which makes them both atypical and more physically realizable in a particular computational model.
}

\label{fig-universe}
\end{figure}

\subsection*{Generalized frame potentials and designs}

In realistic physical systems, one usually does not have access to the full Hilbert space. For example, there may be some conserved quantities, such as energy or particle numbers, or the system may be at some finite temperature $\beta$. In that case, one would be interested in understanding pseudorandomness inside a subspace of the Hilbert space, i.e restricted to some state $\rho$. In Appendix~\ref{sec:sub-space-randomization}, we generalize the frame potential for an arbitrary state $\rho$ finding that the quantity
\begin{align}
\mathcal{F}^{(k)}_{\mathcal{E}}(\rho)&= \iint dU dV \ [\tr \{ \rho^{1/k} UV^{\dagger}\} \, \tr \{ \rho^{1/k} VU^{\dagger} \}]^{k}, \label{eq:discussion:generalized}
\end{align}
has all the useful properties desired of a frame potential. In particular, it is minimized by the Haar  ensemble, and it provides a lower bound on ensemble size and complexity.

However, if the state $\rho$ is the thermal density matrix, $\propto e^{-\beta H}$ and the ensemble is given by time evolution with an ensemble of Hamiltonians $\mathcal{E}= \{ e^{-iHt}\}$, then we need to take into account the fact that the state itself depends on the ensemble. Instead, we can define a \emph{thermal} frame potential
\begin{align}
\mathcal{W}^{(k)}_{\beta}(t)=  \iint dG dH \ \frac{ 
\big|\tr \, \{e^{-(\beta/2k-it )G } e^{-(\beta/2k+it )H }\}\big|^{2k}
}{\tr \, \{e^{-\beta G} \}\tr \, \{e^{-\beta H} \}}.
\end{align}
 In this case, even at $t=0$ one can derive an interesting bound on the complexity of the ensemble. We hope to return to this in the future to analyze the \emph{complexity of formation}: the computational complexity of forming the thermal state $\rho_{\beta}$ from a suitable reference state.\footnote{This is also a question of interest in holography, see e.g. \cite{CofFormation}.}

Finally, it would be similarly interesting to consider a different generalization of unitary designs where, under some physical constraints, we can only access some limited degrees of freedom in the system. In this sense, one could think of the unitary ensemble (as opposed the state) as being generated by tracing over these additional degrees of freedom. These ``subsystem designs'' would then be ``purified'' by integrating back in the original degrees of freedom.\footnote{We thank Patrick Hayden and Michael Walter for initial discussions about this idea.}
 This interesting direction is a potential subject of future work.

\subsection*{More chaos in quantum channels}

In Appendix~\ref{sec:more-chaos}, we revisit some ideas from our previous work \cite{Hosur:2015ylk}, though these ideas are also relevant to the current work. In particular, in \S\ref{sec:more:HP} we reconsider the Hayden-Preskill notion of black hole scrambling \cite{Hayden07}. In this thought experiment, Alice throws a secret quantum state into a black hole. Assuming Bob knows the initial state and dynamics of the black hole,  we show how the question of whether Bob can reconstruct Alice's secret is related to the decay of an average of a certain set of OTO four-point correlators.

 In \S\ref{sec:more:op}, we provide an operational interpretation to taking averages over OTO correlators. We show that this is related to a quantum game of ``catch'' where Alice may ``spit-on'' or otherwise perturb the ball before throwing it to Bob. The average over four-point OTO correlators gives the probability that Alice did not modify the ball. We also show that an average over higher-point OTO correlators can be interpreted as an ``iterated'' game of catch (i.e. what normally people just call ``catch'') where both Alice and Bob have the opportunity to modify the ball each turn. In this case, the OTO correlator average is related to the joint probability that neither Alice nor Bob perturb the ball.

 Finally, in \S\ref{sec:more:renyi} we show that an average over a particular ordering of $2k$-point OTO correlators can be related to the $k$th R\'enyi entropy of the operator $U$ interpreted as a state. We find that
\be
-\log( \text{a certain average of $2k$-point OTO correlators}) \propto S_{ \text{subsystem}}^{(k)}(U),\label{eq:summary-of-renyi-formula}
\ee
where $S_{\text{subsystem} }^{(k)}$ is the R\'enyi $k$-entropy of a particular subsystem of the density matrix $\rho = \ket{U}\bra{U}$.

\subsection*{Volume of unitary operators}

The argument in \S\ref{sec:continuous} led to a bound on the ratio $ N_s / \overline{N_\epsilon}$, which can be interpreted as the volume of $\mathcal{E}$ in terms of $\epsilon$-balls. An interesting application of this bound might be to think about the volume of unitary operators in $U(d)$ that can be probed in a finite time scale $T$, i.e. the volume of operators with depth $\mathcal{D} \sim T$. (See \S\ref{sec:complexity:depth} for further discussion of a lower bound on circuit depth in terms of the frame potential.)

In fact, in certain situations (such as the Brownian circuit introduced in \cite{Lashkari13} or the random circuit model), it is not difficult to show that the volume of unitary operators with depth $T$ grows at least $\sim \exp(\text{const} \cdot n \, T)$ for some small $T$ and some constant independent of $n$ by computing the $k=1$ frame potential. This implies that  the space of unitary operators, with the metric being quantum gate complexity, has hyperbolic structure with constant curvature, as discussed in e.g. \cite{Dowling2006:gt,Susskind:2014jwa,Brown:2016wib} (see also \cite{Chemissany:2016qqq}). On the other hand, one can upper bound the volume of unitary operators with circuit depth $T$ by thinking about how the depth can grow $V(T) \sim \binom{n}{2}\binom{n-2}{2}\binom{n-4}{2}  \cdots \binom{2}{2} \approx  \exp(n \log n \cdot T)$. Thus, for small $T$ and large $n$, the lower bound seems to be reasonably tight.

Once a lower bound on the volume of unitary operators in an ensemble is obtained in the unit of $\epsilon$-balls, we can also obtain a lower bound (of the same order) on the complexity of a typical operator in the ensemble. This seems possible by using the formal arguments given in \cite{Knill95} even when the elementary gate set is not discrete, e.g. all the two-qubit gates.

Finally, it's a curious fact that for systems with time-dependent Hamiltonian ensembles (such as the random circuit models or the Brownian circuit of \cite{Lashkari13}) that we get an initial linear growth of the volume with $T$. As argued in \S\ref{sec:complexity:early} (and confirmed numerically), for time independent Hamiltonian evolution---e.g. in SYK or in the Gaussian unitary ensemble (GUE)---we get a lower bound  $V(T) \sim \exp(\text{const} \cdot n\, T^2 )$, which persists for a short time $T \sim 1/\sqrt{n}$. It would be very interesting to understand the difference in this scaling.\footnote{
    One might worry that this bound saturates at a value smaller than unity. However, we expect that there may be a continuous definition of complexity sensible for small complexities, see e.g. \cite{Brown:2017jil}. 
}

\subsection*{Tightness of the complexity bound}

While the frame potential provides a rigorous lower bound on the complexity of generating an ensemble of unitary operators, there may be a cost: the  bound may not be very tight when applied to time evolution by an ensemble of Hamiltonians.\footnote{We have learned this by some numerical investigations of the frame potential.} 

Let us try to understand this better. 
To be concrete, consider the $k=2$ frame potential for a strongly coupled spin systems that scrambles in $t_* \sim \log n$ time. In such a system, for local operators $W, V$ of unit weight, OTO four-point correlators $\langle W(t)^{\dagger}V^{\dagger}W(t)V\rangle$ will begin to decay after $t\sim O(\log n)$. Since the $k=2$ frame potential is the average of four-point OTO correlators, one might expect that the frame potential will also start to decay at $t\sim O(\log n)$. 

However, this is not quite right. We expect the decay time for more general correlators of larger operators to be reduced to $\bar{t}_* \sim t_*  -  O\big(\log(\text{size}~W)\big) - O\big(\log(\text{size}~V)\big)$, where $t_*\sim O(\log n)$ is the scrambling when $V$ and $W$ are low-weight operators. If we randomly select Pauli operators $W$ and $V$, they will typically be nonlocal operators with $O(n)$ weights, and therefore the OTO decay time $\bar{t}_*$ will be reduced to $O(1)$ for $V$ and $W$ of typical sizes.

In fact, the above estimate suggests most of the correlators determining the complexity bound should begin to decay immediately. As we can see in Eq.~\eqref{eq:lower-bound-oto}, each correlator itself only makes a logarithmic contribution to the complexity and so we shouldn't expect the remaining slow decaying local correlators to be dominant. (To be sure, a further investigation of this point is required.)

One possible way to fix this problem would be to generalize the frame potential by using $p$-norm with $p \neq 2$ so that it is more sensitive to the slower decaying local correlators. We leave the study of such a generalization to the future.

\subsection*{Complexity and holography}
Finally, we will return to the question of complexity and holography discussed in the introduction.
In the context of holography, computational complexity was ``introduced'' \cite{Harlow:2013tf} as a possible resolution to the firewall paradox of \cite{Almheiri13,Almheiri13b}. A direct connection between complexity and black hole geometry was first proposed by Susskind \cite{Susskind:2014rva,Susskind:2014ira}, which culminated in proposals that the interior of the black hole geometry is holographically dual to the spatial volume \cite{Stanford:2014jda} or the spacetime action \cite{Brown:2015bva,Brown:2015lvg}. These proposals are motivated by the fact that the black hole interior continues to grow as the state evolves long past the time entropic quantities equilibrate \cite{Hartman13}. While there is nice qualitative evidence for both of these proposals \cite{Susskind:2014jwa,Roberts:2014isa,Brown:2016wib}, missing is a direct understanding of computational complexity in systems that evolve continuously in time with a time-independent Hamiltonian.

    A hint can be obtained by considering some of the motivations for these holographic complexity proposals. In particular, building on the work of \cite{Hartman13} and previous work of Swingle \cite{Swingle12,Swingle12b}, Maldacena suggested that the black hole interior could be found in the boundary  theory by a tensor network construction of the state \cite{Maldacena:2013t1}. A tensor network toy model of the evolution of the black hole interior was investigated in \cite{Hosur:2015ylk}. In this toy model, the interior of the black hole was modeled as a flat tiling of perfect tensors, see Fig.~\ref{tn-erb}. These tensors were elements of the Clifford group and acted as two-qubit unitary operators that highly entangled neighboring qubits at each time step. From the perspective of the boundary theory, this is a model for Hamiltonian time evolution.\footnote{In addition to providing a model for the black hole interior, Swingle's identification of the ground state of AdS with a tensor network \cite{Swingle12,Swingle12b} has led to numerous other quantum information insights in holography (e.g. quantum error correction and its relationship to bulk operator reconstruction \cite{Almheiri14}) as well as additional toy models that demonstrate those features (e.g. \cite{Pastawski15b,Hayden:2016cfa}).}

\begin{figure}[htb!]
\centering
\includegraphics[width=1\linewidth]{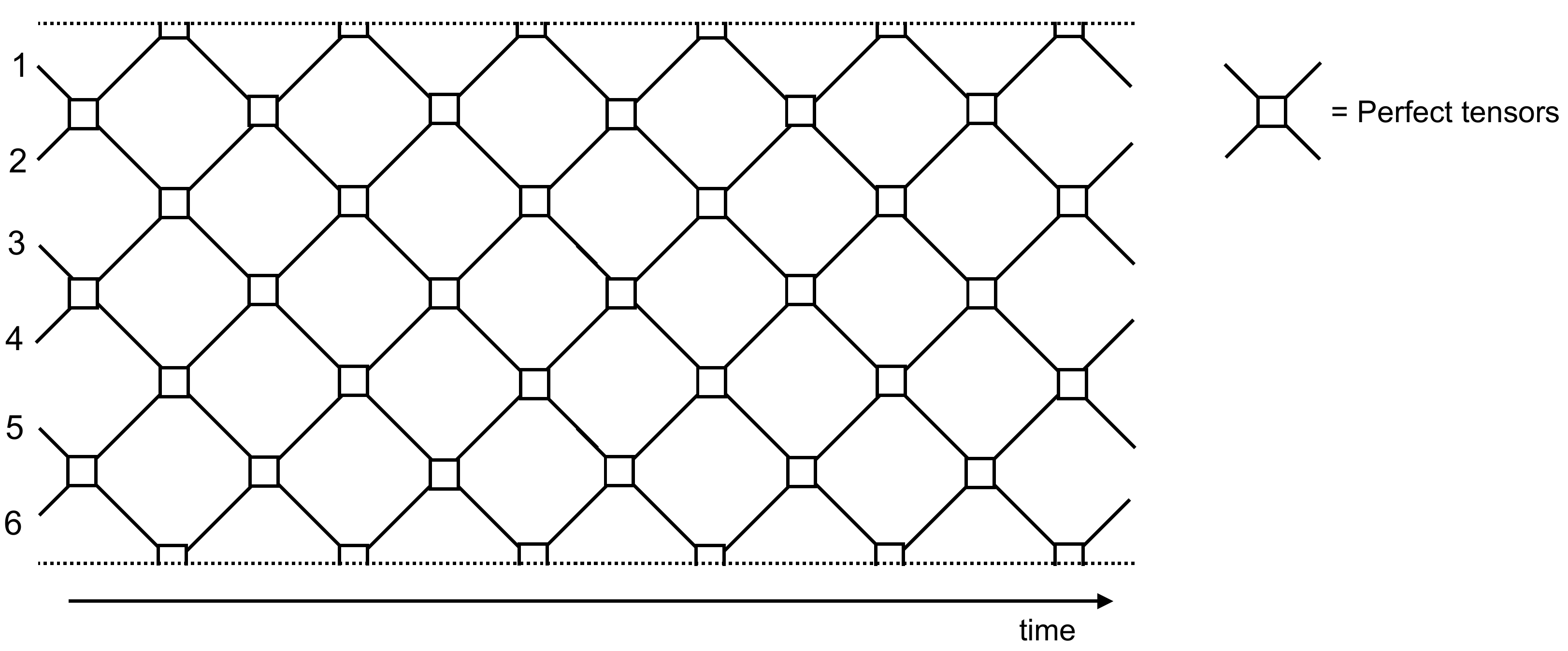}
\caption{A $6$-qubit tensor network model for the geometry of the interior of a black hole. Via holography, the growth of the interior is expected to correspond to chaotic time evolution of a strongly coupled quantum theory. Here, each node corresponds to a perfect tensor and the numbers label the qubit. 
} 
\label{tn-erb}
\end{figure}

    This toy model captures some important features of the complexity growth of the black hole state. The number of tensors in the network grows linearly in time, by construction. Operators will grow ballistically, exhibiting the butterfly effect, and the network scrambles in linear time. Thus, this network captures the aspects black hole chaos related to local scrambling and ballistic operator growth discussed in \cite{Roberts:2014isa} as well as aspects of complexity growth discussed in \cite{Hartman13,Susskind:2014ira,Stanford:2014jda,Brown:2015bva}.

    However, since in this model the perfect tensor is a repeated element of the Clifford group, the complexity can never actually grow to be very big. In fact, the quantum recurrence time of the model was investigated in \cite{Hosur:2015ylk} and was found to be exponential in the entropy $\sim e^{n}$ rather than doubly exponential $\sim e^{e^n}$ as expected in a fully chaotic model. This is related to our oft stated fact that the Clifford group generally does not form a higher-than-2-design. In fact, this model can actually be mapped to a classical problem, and 
    by the Gottesman-Knill theorem its complexity can be no greater than $O(n^2)$ gates \cite{Nielsen_Chuang}.\footnote{Since the Clifford group is a group, the complexity of any particular circuit in our model is an element of the group. Thus, at most we should be able to reach it by a polynomial number of applications of $2$-qubit gates.}

   These observations were the inspiration for this current work, since in this  toy model $4$-point OTO correlators behave chaotically, but higher-point OTO correlators do not. Nevertheless, this model can be ``improved'' by using random $2$-qubit tensors rather than a repeated perfect tensor.\footnote{Note: the case of a averaging over a single random tensor is like time evolution with a time independent Hamiltonian. On the other hand, the case of averaging separately over all tensors has a continuum limit of time evolution with a time-dependent Hamiltonian with couplings that evolve at each time step. This model is known as the Brownian circuit \cite{Lashkari13}.} In \cite{Brandao12}, it was shown that this local random quantum circuit approaches a unitary $k$-design in a circuit depth that scales at most as $O(k^{10})$. Our complexity lower bound for a $k$-design Eq.~\eqref{eq:complexity-k-design}  suggests that the time to become a $k$-design is lower bounded by $k$, and we suspect that this can be saturated.\footnote{It is believed that this is actually saturated by the local random circuit of \cite{Brandao12}. For example, there is some numerical evidence for this claim in \cite{mozrzymas2013local}.} 

It is in this sense that we speculate that the complexity growth of the chaotic black hole is pseudorandom. That is, we suspect that as the complexity of the black hole state increases linearly with time evolution $t$, the dynamics evolve to become pseudo-$k$-designs, with the value $k$ roughly scaling with $t$
   \be
   \mathcal{C}(e^{-iHt}) \sim t \sim k,
   \ee
 and that this may be quantified by either representative $2k$-point OTO correlators or by an appropriate generalization of unitary design. With this in mind, it would be interesting to see whether one could use the tools of unitary design to prove a version of the conjectures of \cite{Brown:2015bva,Brown:2015lvg} suggesting that complexity (is greater than or) equal to action.

\section*{Acknowledgments}

We are grateful to Fernando Brandao, Adam Brown, Jordan Cotler, Guy Gur-Ari, Patrick Hayden, Alexei Kitaev, M\'ark Mezei,  Xiao-Liang Qi, Steve Shenker, Lenny Susskind, Douglas Stanford, and Michael Walter for discussions. 

DR and BY both acknowledge support from the Simons Foundation through the ``It from Qubit'' collaboration.
DR is supported by the Fannie and John Hertz Foundation, the National Science Foundation grant
number PHY-1606531 and the Paul Dirac Fund, and is also very thankful for the hospitality of the Stanford Institute for Theoretical Physics and the Perimeter Institute of Theoretical Physics during the completion of this work. 
Research at Center for Theoretical Physics at MIT is supported by the U.S. Department of Energy under cooperative research agreement Contract Number DESC0012567.
Research at Perimeter Institute is supported by the Government of Canada through Industry Canada and by the Province of Ontario through the Ministry of Research and Innovation. This paper was brought to you unitarily by the Haar measure.\footnote{
    Finally, we would like to thank one of our anonymous JHEP referees for pointing out the prior work \cite{cbd:cd} and suggesting an acknowledgment. While we were unaware of \cite{cbd:cd} at the time of submission, we are nevertheless quite happy to include a citation in our revision.
}

\appendix

\section{More proofs}\label{app:proof}

Here, we collect some proofs.

\subsection{Proof of Theorem~\ref{theorem:OTO_chaos}}\label{sec:proof:oto-channnel}

\begin{proof}
The LHS of Eq.~(\ref{eq:tensor_M}) can be written explicitly as
\begin{align}
\sum_{C_{1},\cdots,C_{k}}\tr \, \{ C_{k}^{\dagger}A_{k}'^{\dagger}\ldots C_{1}^{\dagger}A_{1}'^{\dagger} \} \, \tr \, \{  A_{1}C_{1}\cdots A_{k}C_{k}\}.
\end{align}
Both $C_{k}^{\dagger}A_{k}'^{\dagger}\cdots C_{1}^{\dagger}A_{1}'^{\dagger}$ and $A_{1}C_{1}\cdots A_{k}C_{k}$ are Pauli operators with complex phases. This means that the traces will give a nonzero contribution only if both $A_{1}\cdots A_{k}C_{1}\cdots C_{k}\propto I$ and $A_{1}'\cdots A_{k}' C_{1}\cdots C_{k}\propto I$. Thus, the sum will vanish unless $A_{1}\cdots A_{k}\propto A_{1}'\cdots A_{k}'$. So, we consider the case where $A_{1}\ldots A_{k}\propto A_{1}'\ldots A_{k}' \propto P$ for some Pauli operator $P$. Then, the summation can be written as 
\begin{equation}
\begin{split}
\sum_{C_{1}\cdots C_{k}\propto P^{\dagger}}\tr \, \{C_{k}^{\dagger}A_{k}'^{\dagger}\cdots C_{1}^{\dagger}A_{1}'^{\dagger} \} \, \tr \{ A_{1}C_{1}\cdots A_{k}C_{k} \} \\
=
d\cdot\sum_{C_{1}\cdots C_{k}\propto P^{\dagger}}\tr \, \{ C_{k}^{\dagger}A_{k}'^{\dagger}\cdots C_{1}^{\dagger}A_{1}'^{\dagger}  A_{1}C_{1}\cdots A_{k}C_{k} \} .
\end{split}
\end{equation}
Here we expanded the trace because $C_{k}^{\dagger}A_{k}'^{\dagger}\cdots C_{1}^{\dagger}A_{1}'^{\dagger}$ and $A_{1}C_{1}\cdots A_{k}C_{k}$ are proportional to identity operators when $C_{1}\cdots C_{k}\propto P^{\dagger}$, and the $d$ comes from the trace of an identity operator. By using the cyclic property of the trace, we can eliminate $C_{k}^{\dagger}$ and $C_{k}$
\begin{align}
d\cdot\sum_{C_{1},\ldots,C_{k-1}} \tr\, \{ A_{k}'^{\dagger}\cdots C_{1}^{\dagger}A_{1}'^{\dagger}  A_{1}C_{1}\cdots A_{k} \}.
\end{align}
Here we used the fact that fixing $C_{1},\ldots,C_{k-1}$ uniquely determines $C_{k}$ when $C_{1}\cdots C_{k}\propto P$. Next, recall the relationship for summing over Pauli operators Eq.~\eqref{eq:Pauli_twirl}
\begin{align}
\sum_{C_{1}}(\cdots)C_{1}^{\dagger}A_{1}'^{\dagger}  A_{1}C_{1}(\cdots) = d^2 \cdot \delta^{A'_{1}}_{A_{1}}(\cdots)(\cdots)\label{eq:1-design}
\end{align}
where $(\cdots)$ represents arbitrary operators. This gives
\begin{align}
d^3\cdot \sum_{C_{2},\ldots,C_{k-1}}\tr \, \{ A_{k}'^{\dagger}\ldots C_{2}^{\dagger}A_{2}'^{\dagger}  A_{2}C_{2}\ldots A_{k} \} \delta^{A'_{1}}_{A_{1}}.
\end{align}
By repeated action of Eq.~\eqref{eq:1-design}, we can establish the desired result.
\end{proof}

\subsection{Proof of Eq.~(\ref{eq:some_result})}\label{sec:proof:pauli}

We can expand operators $A,B,C,D$ by using Pauli operators as a basis of operators
\begin{align}
A = \sum_{i}a_{i}Q_{i},\quad B = \sum_{i}b_{i}Q_{i},\quad C = \sum_{i}c_{i}Q_{i},\quad D = \sum_{i}d_{i}Q_{i},
\end{align}
where $Q_{i}$ are Pauli operators
\begin{equation}
\begin{split}
\langle A \tilde{B}C \tilde{D} \rangle_{\text{Pauli}} = 
\frac{1}{d^2} \sum_{P\in \text{Pauli}}\sum_{i,j,k,\ell} a_{i}b_{j}c_{k}d_{\ell} \langle Q_{i} P Q_{j}P^{\dagger} Q_{k} PQ_{\ell}P^{\dagger} \rangle.
\end{split}
\end{equation}
Notice that
\begin{align}
\langle Q_{i} P Q_{j}P^{\dagger} Q_{k} PQ _{\ell}P^{\dagger} \rangle
=  \frac{1}{d}\tr \big\{ K(P,Q_{j}Q_{\ell}) \big\} \cdot  \langle Q_{i} Q_{j} Q_{j} Q _{\ell}\rangle ,
\end{align}
and 
\begin{equation}
\begin{split}
\frac{1}{d^2}\sum_{P} \frac{1}{d}\tr \{ K(P,Q_{j}Q_{\ell}) \} &= 0, \qquad j\not=\ell, \\
&= 1, \qquad   j=\ell,
\end{split}
\end{equation}
where again we
have defined $K(P,Q)$ by $Q^{\dagger}PQ = K(P,Q)P$.
Thus, the desired average becomes
\begin{align}
\langle A \tilde{B}C \tilde{D} \rangle_{\text{Pauli}} = 
\sum_{i,j,k} a_{i}b_{j}c_{k}d_{j} \langle Q_{i} Q_{j} Q_{k} Q_{j} \rangle.
\end{align}
Observe that $\langle Q_{i} Q_{j} Q_{k} Q_{j} \rangle\not=0$ only when $Q_{i}=Q_{k}$, which gives
\begin{align}
\langle A \tilde{B}C \tilde{D} \rangle_{\text{Pauli}} = 
\sum_{i,j} a_{i}b_{j}c_{i}d_{j} \langle Q_{i} Q_{j} Q_{i} Q_{j} \rangle.
\end{align}
Next, we assume that $A$ and $B$ have no overlap. This implies that $[Q_{i},Q_{j}]=0$ if $a_{i},b_{j}\not=0$, and we obtain
\begin{align}
\langle A \tilde{B}C \tilde{D}            \rangle_{\text{Pauli}} = 
\sum_{i,j} a_{i}b_{j}c_{i}d_{j} .
\end{align}
Since $\langle A C \rangle = \sum_{i} a_{i}c_{i}$ and $\langle BD\rangle = \sum_{j} b_{i}d_{i}$, we see that
\be
\langle A \tilde{B}C \tilde{D}  \rangle_{\text{Pauli}} = \langle A C \rangle\langle BD\rangle,
\ee
as desired.

\section{More orthogonal states}\label{sec:appendix:orthogonal}

In the Hilbert space $\mathcal{H}=\mathbb{C}^d$, there are $d$ orthogonal states which form the basis of the Hilbert space
\begin{align}
\langle i |j \rangle = \delta_{ij}, \qquad (i,j=1,\ldots, d).
\end{align}
Yet, there are $\sim 2^d$ states which are nearly orthogonal to each other
\begin{align}
\langle \tilde{i} |\tilde{j} \rangle \ll 1, \qquad (i,j=1,\ldots, O(2^d)),
\end{align}
for $i\not=j$. 
A quantum state in this Hilbert space can be written as
\begin{align}
|\psi\rangle = a_{1}|1\rangle + a_{2}|2\rangle + \cdots + a_{d}|d\rangle,
\end{align}
where $|a_{1}|^2 + |a_{2}|^2 +\cdots + |a_{d}|^2=1$. Thus, a quantum state can be associated with a point on a $2d$-dimensional unit sphere. By computing the volume of states whose inner product with a given reference state is larger than $\epsilon$, we can find the number of nearly orthogonal states. Such a volume was explicitly computed in~\cite{Knill95}, but the point is that
\be
\frac{\text{Vol}(S_{2d})}{\text{Vol}(\epsilon\text{-ball})} \sim \bigg( \frac{1}{\epsilon} \bigg)^{d},
\ee
meaning that there are a doubly exponential number of nearly orthogonal states in terms of the number of degrees of freedom $n$, with $d=2^n$.

Here we provide another simple argument for this scaling.\footnote{Part of this presentation follows a nice talk given by Adam Brown \cite{Brownstates}.} Define $\{a\} = (a_{1},\ldots,a_{d})$ with $a_{j}=\pm1$. We consider the following state
\begin{align}
|\{a\}\rangle = \frac{1}{\sqrt{d}} \sum_{j=1}^d a_{j} |j\rangle, \qquad a_{j}=\pm 1.
\end{align}
There are $2^d=2^{2^n}$ states which can be represented in this way. Let us pick $\{a\}$ and $\{b\}$ randomly, and consider an inner product between two states $|\{a\}\rangle$ and $|\{b\}\rangle$
\begin{align}
\langle \{a\} |\{b\} \rangle = \frac{1}{d}\sum_{j=1}^{d}a_{j}b_{j}.
\end{align}
Since we chose $\{a\}$ and $\{b\}$ randomly, the product $a_{j}b_{j}=\pm 1$ is random, and the inner product will be close to zero 
\be
|\langle \{a\} |\{b\} \rangle| = O(d^{-1/2}),
\ee
implying that there are $O(2^d) = O(2^{2^n})$ nearly orthogonal states.

\section{More complexity bounds}\label{sec:complexity-appendix}
In this appendix we consider complexity lower bounds for when gates can be applied in parallel \S\ref{sec:complexity:depth} and for disordered Hamiltonian systems at early times \S\ref{sec:complexity:early}.

\subsection{Circuit depth}\label{sec:complexity:depth}

In Theorem~\ref{theorem-bound}, the complexity $\mathcal{C}(\mathcal{E})$ was defined as the number of steps required to generate the ensemble $\mathcal{E}$ when only a single two-qubit gate from the gate set can be applied at each step. Yet,  quantum gates may be applied in parallel if they do not act on the same qubits. By allowing simultaneous applications of quantum gates, we can instead consider the minimum quantum circuit depth $\mathcal{D}(\mathcal{E})$ to generate $\mathcal{E}$ and obtain a lower bound with a very similar argument. At each step, we can pair up the qubits in $\binom{n}{q}\binom{n-q}{q}\binom{n-2q}{q} \cdots \binom{q}{q} = n!/(q!)^{n/q}$ different ways (assuming $n$ is divisible by $q$ for simplicity). So, in a depth $1$ circuit there are roughly $g n!/(q!)^{n/q}$ choices.\footnote{We thank Fernando Brandao for this suggestion.} Thus, the lower bound on the depth is 
\begin{align}
\mathcal{D}(\mathcal{E}) \geq \frac{2kn \log(2) - \log F_{\mathcal{E}}^{(k)} }{\log(g) + n \log(n) - (n/q) \log(q!)}.
\end{align}
For $q \ll n$, the denominator is $\log g + n \log (n)$.
For large $n, q$, the denominator goes like $\log g + n \log (n/q)$. Essentially, the effect of large $q$ is only important if there's larger than an exponential number of gates $g \gtrsim 2^n$.

\subsection{Early times}\label{sec:complexity:early}
Here, we consider complexity lower bounds for disordered Hamiltonian systems at early times. This discussion is intended to be completely general, but for concreteness you may imagine that we are referring to the ensemble implied by time evolving with the Sachdev-Ye-Kitaev Hamiltonian \cite{Sachdev:1992fk,Kitaev:2014t2,Maldacena:2016hyu}
\be
H = (i)^{q/2} \sum_{1\le i_1 < \dots < i_q \le N} j_{i_1 \dots i_q} \psi_{i_1} \dots \psi_{i_q}, \qquad \overline{j^2_{i_1 \dots i_q}} = \frac{\mathcal{J}^2}{q} \frac{2^{q-1}}{\binom{N-1}{q-1}}  ,\label{syk-hamiltonian}
\ee
where the overline denotes an ensemble average. 
The model consists of $N$ Majorana fermions $\psi_{i}$, and each term in the Hamiltonian involves $q$ of the fermions interacting with a random coupling independently picked from a Gaussian with variance  $\overline{j^2_{i_1 \dots i_q}}$. 

We want to understand the initial growth of complexity of an ensemble of time evolution operators $\mathcal{E}(t) = \{e^{-iHt}\}$, where $H$ is disordered e.g. defined by Eq.~\eqref{syk-hamiltonian}. 
In general, we assume that $ \tr \, \{ \overline{ H} \}  =0$, and so the first nontrivial moment is the second moment. Start by expanding $U_i(t) = e^{-iH_it}$ and $V_j(t) = e^{-iG_jt}$ for early times
\begin{align}
U_i(t)= 1-iH_it - \frac{1}{2}H_i^2 t^2+ \dots, \qquad V_j(t)=1-iH_jt - \frac{1}{2}H_j^2 t^2 + \dots.
\end{align}
Plugging in to the definition of the frame potential, we get
\be
\Fk(t) = \sum_{i,j} p_i \, p_j \, d^{2k} \Big| 1 + \frac{t^2}{d} \Big( \tr \, \{ H_i H_j\} - \frac{1}{2} \tr \, \{ H^2_i\} - \frac{1}{2} \tr \, \{ H^2_j\}  \Big) +\dots \Big|^{2k}.
\ee
Next, we use the fact that $\tr \, \{ \overline{ H_i} \overline{ H_i} \}  = 0$ by independence, and expand assuming $t^2 \tr \, \{ H^2 \} \ll d$ to get an expression for the frame potential
\be
\Fk(t) = d^{2k}\Big( 1 - \frac{2k t^2}{d} \tr \, \{ \overline{H^2} \} + \dots \Big).
\ee 
This implies a lower bound on the initial growth of complexity of~\footnote{We thank Jordan Cotler for discussions.}
\be
\mathcal{C}(t) > \frac{2k t^2}{d} \tr \, \{ \overline{H^2} \}, 
\ee
valid for early times such that $\tr \, \{ \overline{H^2} \} \, t^2 / d \ll 1$. For the SYK model Eq.~\eqref{syk-hamiltonian}, we have $\tr \, \{ \overline{H^2} \} =  \mathcal{J}^2 d \,  (N/2q^2)$, and $\mathcal{C}(t) > k \,  (\mathcal{J}t)^2 (N/q^2) $ for times $t \ll \sqrt{2 q^2/ \mathcal{J}^2 N}$. 

Thus, while complexity is expected to eventually grow linearly in time \cite{Knill95,Susskind:2015toa}, for early times our bound predicts a quadratic phase of growth.

\section{More Haar random averages}\label{sec:appendix-averages}

In this appendix, we present the details of the Haar random averages of $8$-point OTO correlators as well as an argument for the scaling with $d$ of higher-point functions.

\subsection{8-point functions}\label{sec:8-pt-functions}

A general formula to compute the Haar average of $2k$-point OTO correlators is
\begin{align}
\langle A_{1}\tilde{B_{1}}\ldots A_{k} \tilde{B_{k}} \rangle_{\text{Haar}}
= \frac{1}{d}\sum_{\pi,\sigma}\text{Wg}(\pi\sigma) \, \tr \big\{ W_{\pi}W_{\text{cyc}}(A_{1}\otimes \ldots \otimes A_{k}) \big\} \, \tr \big\{ W_{\sigma}(B_{1}\otimes \ldots \otimes B_{k})\big\},
\end{align}
where $\text{Wg}(\pi)$ is the Weingarten function. Because of the Weingarten function, this is difficult to evaluate.

The unitary Weingarten function on $S_{n}$ is given by~\cite{Collins03}
\begin{align}
\text{Wg} = \frac{1}{n!} \sum_{\lambda } \frac{f^{\lambda}}{s_{\lambda}} \chi^{\lambda},
\end{align} 
where summation is over all partitions $\lambda$ of $n$. $s_{\lambda}$ is a polynomial in $d$ given by
\begin{align}
s_{\lambda} = \prod_{i=1}^{l(\lambda)} \prod_{j=1}^{\lambda_{i}}(d+j-i),
\end{align}
where $l(\lambda)$ is the length of $\lambda$. $f^{\lambda}$ is the dimension of the irreducible representation associated with $\lambda$. $\chi^{\lambda}$ is the irreducible characters.

For $S_{2}$, the character table is given by
\begin{center}
\begin{tabular}{ccc}
&$(1,1)$ & $(2)$  \\
\hline 
$(2)$     & 1 & -1  \\
$(1,1)$   & 1 & 1 
\end{tabular}.
\end{center}
We have 
\begin{equation}
\begin{split}
s_{(1,1)} = d(d-1), \qquad
s_{(2)} = d(d+1).
\end{split}
\end{equation}
Thus, 
\begin{equation}
\begin{split}
\text{Wg}(1,1)=&\frac{1}{2} \left( \frac{1}{d(d-1)} + \frac{1}{d(d+1)} \right) = \frac{1}{d^2-1},\\
\text{Wg}(2)=&\frac{1}{2} \left( \frac{-1}{d(d-1)} + \frac{1}{d(d+1)} \right) =\frac{-1}{d(d^2-1)}.
\end{split}
\end{equation}
For $S_{3}$, the character table is given by
\begin{center}
\begin{tabular}{cccc}
&$(1,1,1)$ & $(2,1)$ & $(3)$ \\
\hline 
$(3)$     & 1 & 1 & 1 \\
$(2,1)$   & 2 & 0 & -1\\
$(1,1,1)$ & 1 & -1 & 1\\
\end{tabular}.
\end{center}
We have 
\begin{equation}
\begin{split}
s_{(1,1,1)} = d(d-1)(d-2), \\
s_{(2,1)} = d(d+1)(d-1),\\
s_{(3)} = d(d+1)(d+2).
\end{split}
\end{equation}
Thus
\begin{equation}
\begin{split}
\text{Wg}(1,1,1)=&\frac{1}{6} \left( \frac{1}{d(d-1)(d-2)} + \frac{4}{d(d+1)(d-1)} + \frac{1}{d(d+1)(d+2)} \right), \\
&=\frac{d^{2}-2}{d(d^{2}-2)(d^{2}-4)},\\
\text{Wg}(2,1)=&\frac{1}{6} \left( \frac{-1}{d(d-1)(d-2)} + \frac{1}{d(d+1)(d+2)} \right), \\
&=\frac{-1}{(d^{2}-1)(d^{2}-4)},\\
\text{Wg}(3)=&\frac{1}{6} \left( \frac{1}{d(d-1)(d-2)} + \frac{-2}{d(d+1)(d-1)} + \frac{1}{d(d+1)(d+2)} \right), \\
&=\frac{2}{d(d^{2}-1)(d^{2}-4).}
\end{split}
\end{equation}
For $S_{4}$, the character table is given by 
\begin{center}
\begin{tabular}{cccccc}
&$(1,1,1,1)$ & $(2,1,1)$ & $(2,2)$& $(3,1)$ & $(4)$  \\
\hline 
$(4)$     & 1 & 1 & 1 & 1 & 1 \\
$(3,1)$   & 3 & 1 & -1  & 0 & -1\\
$(2,2)$   & 2 & 0 & 2  & -1 & 0 \\
$(2,1,1)$     & 3 & -1 & -1  & 0 & 1 \\
$(1,1,1,1)$   & 1 & -1 & 1  & 1 & -1 
\end{tabular}.
\end{center}
We have 
\begin{equation}
\begin{split}
s_{(4)}       &= d(d+1)(d+2)(d+3),\\
s_{(3,1)}     &= d(d+1)(d+2)(d-1),\\
s_{(2,2)}     &= d(d+1)(d-1)d,\\
s_{(2,1,1)}   &= d(d+1)(d-1)(d-2),\\
s_{(1,1,1,1)} &= d(d-1)(d-2)(d-3).
\end{split}
\end{equation}
Thus
\begin{equation}
\begin{split}
\text{Wg}(1,1,1,1) &= \frac{d^{4} - 8d^{2} + 6}{d^{2}(d^{2}-1)(d^{2}-4)(d^{2}-9)}, \\
\text{Wg}(2,1,1) &= \frac{-d^{3} + 4d}{d^{2}(d^{2}-1)(d^{2}-4)(d^{2}-9)}, \\
\text{Wg}(2,2) &= \frac{d^{2} + 6}{d^{2}(d^{2}-1)(d^{2}-4)(d^{2}-9)}, \\
\text{Wg}(3,1) &= \frac{2d^{2} - 3}{d^{2}(d^{2}-1)(d^{2}-4)(d^{2}-9)}, \\
\text{Wg}(4) &= \frac{-5d}{d^{2}(d^{2}-1)(d^{2}-4)(d^{2}-9)}.
\end{split}
\end{equation}
We also note here the large $d$ asymptotic behavior
\begin{align}
\text{Wg}(\lambda) \sim \frac{1}{d^{2n-\text{\# of cycles in $\lambda$}}}.
\end{align}

We now will evaluate the following $8$-point OTO correlator average of the commutator-type correlators
\begin{align}
|\text{$8$-point}|_{\text{Haar}}= \left|\frac{1}{d}\tr \{ A\tilde{B}C\tilde{D}C^{\dagger}\tilde{B}^{\dagger}A^{\dagger} \tilde{D}^{\dagger}\}\right|_{\text{Haar}}, \qquad \text{(commutator type)}
\end{align}
where $\tilde{B}=UBU^{\dagger}$ and $\tilde{D}=UDU^{\dagger}$. We assume that $A,B,C,D$ are Pauli operators, $A,B,C,D\not=I$ and $AC,BD\not=I$. We write the correlator as
\begin{align}
\frac{1}{d}\tr \big\{ (A\otimes C \otimes C^{\dagger} \otimes A^{\dagger}) \Phi_{4}(B\otimes D \otimes B^{\dagger} \otimes D^{\dagger}) W_{2341}\big\},
\end{align}
where $W_{2341}$ is a cyclic permutation. 
We have
\begin{equation}
\begin{split}
\frac{1}{d} \sum_{\pi_{1},\pi_{2}\in S_{4}} \text{Wg}(\pi_{1}\pi_{2}) \, \tr \big\{ (A\otimes C \otimes C^{\dagger} \otimes A^{\dagger})W_{\pi_{1}}W_{2341}\big\} \, \tr \big\{ (B \otimes D \otimes B^{\dagger} \otimes D^{\dagger})W_{\pi_{2}} \big\}.
\end{split}
\end{equation}
Let us assume that $[A,C]=[B,D]=0$. This gives
\begin{align}
\text{Wg}(1,1,1,1)d^{2} +  
\text{Wg}(2,1,1)(d^{3} + 16d) +
\text{Wg}(3,1)7d^2  + \text{Wg}(2,2)3d^2  +
\text{Wg}(4)(d^2 + 20d).
\end{align}
For large $d$, the coefficients $\text{Wg}(\pi)$ scales as follows
\begin{align}
\text{Wg}(\pi) \sim O\left(\frac{1}{d^{8-\text{\#of cycles in $\pi$}}}\right).
\end{align}
At first sight, the dominant contribution might seem to be $O\left(\frac{1}{d^{2}}\right)$
\begin{align}
(\text{Wg}(1,1,1,1)+\text{Wg}(2,1,1)d)d^{2},
\end{align}
but, due to nice cancellation, the above expression becomes
\begin{equation}
\begin{split}
&\left(\frac{d^{4} - 8d^{2} + 6}{d^{2}(d^{2}-1)(d^{2}-4)(d^{2}-9)}+\frac{-d^{3} + 4d}{d^{2}(d^{2}-1)(d^{2}-4)(d^{2}-9)}d\right)d^{2},\\
&= \frac{- 4d^{2} + 6}{(d^{2}-1)(d^{2}-4)(d^{2}-9)},
\end{split}
\end{equation}
which is $O\left(d^{-4}\right)$. Finally, we can write down the complete answer
\begin{align}
|\text{$8$-point}|_{\text{Haar}}=
\frac{- 3d^{2} - 5 d - 33 }{(d^{2}-1)(d^{2}-4)(d^{2}-9)}, \qquad \text{(commutator type)},
\end{align}
which scales as $O(d^{-4})$ for large $d$.
We have also checked other commutation relations between $A,B,C,D$, and although the exact form of the answer is different, the scaling is always $O(d^{-4})$.

Next, let us evaluate the Haar average of the non-commutator type $8$-point OTO correlators
\begin{align}
|\text{$8$-point}|_{\text{Haar}}= \left|\frac{1}{d}\tr \{ A\tilde{B}C\tilde{D}A^{\dagger}\tilde{B}^{\dagger}C^{\dagger} \tilde{D}^{\dagger} \} \right|_{\text{Haar}}, \qquad \text{(non-commutator type)},
\end{align}
which we rewrite as
\begin{equation}
\begin{split}
\frac{1}{d} \sum_{\pi_{1},\pi_{2}\in S_{4}} \text{Wg}(\pi_{1}\pi_{2}) \, \tr \big\{ (A\otimes C \otimes A^{\dagger} \otimes C^{\dagger})W_{\pi_{1}}W_{2341}\big\} \, \tr \big\{ (B \otimes D \otimes B^{\dagger} \otimes D^{\dagger})W_{\pi_{2}} \big\}.
\end{split}
\end{equation}
The difference from the commutator type is that the first trace contains the operator $A\otimes C \otimes A^{\dagger} \otimes C^{\dagger}$ instead of the operator $A\otimes C \otimes C^{\dagger} \otimes A^{\dagger}$. Let us assume the commutation relations $[A,C]=[B,D]=0$. Then we find
\begin{align}
\text{Wg}(1,1,1,1)2d^{2} +  
\text{Wg}(2,1,1)16d +
\text{Wg}(3,1)8d^2  + \text{Wg}(2,2)(2d^2)  +
\text{Wg}(4)(d^3 +20d) 
\end{align}
which scales as $O\left(d^{-2}\right)$. For other choices of commutation relations, it's easy to check that the scaling is always $O\left(d^{-2}\right)$.

\subsection{Higher-point functions}\label{sec:k-point-averages}

Finally, we will make a conjecture regarding the behaviors of higher-point OTO correlation functions. We speculate that $4m$-point OTO correlators of the commutator-type form
\begin{align}
\langle A_{1} (\tilde{B_{1}} \cdots A_{m} \tilde{B_{m}}) A_{1}^{\dagger} (\tilde{B_{1}} \cdots A_{m} \tilde{B_{m}})^{\dagger}  \rangle, \qquad \text{(commutator type)},
\end{align}
will have the following asymptotic scaling.

\begin{conjecture}\emph{
The $4m$-point OTO correlation functions asymptotically scale as 
\begin{align}
\langle A_{1} (\tilde{B_{1}} \cdots A_{m} \tilde{B_{m}}) A_{1}^{\dagger} (\tilde{B_{1}} \cdots A_{m} \tilde{B_{m}})^{\dagger}  \rangle_{\text{Haar}}
\sim \frac{1}{d^{2m}},
\end{align}
when averaged over Haar random unitary operators.
}
\end{conjecture}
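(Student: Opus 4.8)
The plan is to reduce the correlator, with $A_1$ a Pauli, to a manifestly real ``imbalance'' of the operator-spreading distribution of the composite operator $K\equiv\tilde B_1 A_2\tilde B_2\cdots A_m\tilde B_m$, and then to pin that distribution down to the required precision with a Weingarten computation generalizing Appendix~\ref{sec:8-pt-functions}. First I would expand $K=\sum_P h_P(U)P$ in the Pauli basis, $h_P(U)=\tfrac1d\tr\{P^\dagger K(U)\}$. Using $\tr\{P^\dagger Q\}=d\,\delta_{PQ}$ and the relation $A_1 P A_1^\dagger=K(P,A_1)P$ (the paper's $K(\cdot,\cdot)$, valued in $\pm1$ on Paulis), one gets, before any average, $\tfrac1d\tr\{A_1 K A_1^\dagger K^\dagger\}=\sum_P K(P,A_1)\,|h_P(U)|^2$. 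Haar-averaging and setting $q(P):=\langle|h_P(U)|^2\rangle_{\mathrm{Haar}}$ --- a probability distribution on the $d^2$ Paulis, since $\sum_P|h_P|^2=\tfrac1d\tr\{K^\dagger K\}=1$ --- the correlator becomes $\sum_P K(P,A_1)q(P)=1-2\,q(\mathcal A)$, where $\mathcal A=\{P:\{A_1,P\}=0\}$. Thus the conjecture is equivalent to the clean statement that $q$ is ``balanced'' with respect to $A_1$, $q(\mathcal A)=\tfrac12+O(d^{-2m})$; for $m=1$ one checks directly from $\Phi^{(2)}_{\mathrm{Haar}}$ that $q(P)=\tfrac{1}{d^2-1}$ for $P\neq I$ and $q(I)=0$, and since $I$ sits in the commuting bucket this reproduces the answer $-1/(d^2-1)$.

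Second I would compute $q(P)=\tfrac1{d^2}\langle\tr\{P^\dagger K\}\tr\{K^\dagger P\}\rangle_{\mathrm{Haar}}$, a Haar integral over $2m$ copies of $\mathcal H$ (there are $m$ factors of $U$ and $m$ of $U^\dagger$ in each of $K$ and $K^\dagger$), by the $S_{2m}$ Weingarten formula of Appendix~\ref{sec:8-pt-functions}. The leading piece, from $\mathrm{Wg}$ of the identity permutation $\sim d^{-2m}$ dressed by the trace factors, makes $q$ approximately uniform on the non-identity Paulis. For general $m$ the power counting reads: a term labelled by a pair $(\pi,\sigma)$ in $S_{2m}$ carries $\mathrm{Wg}(\pi\sigma^{-1})\sim d^{-(4m-\#\mathrm{cyc})}$ times two products of Pauli-product traces, each factor equal to $0$ (unless the operators in that cycle multiply to a phase times $I$) or to $\pm d$; since the $A_j,B_j$ are non-identity Paulis with $\langle A_j\rangle=\langle B_j\rangle=0$, these closure and vanishing constraints kill most pairs, and one enumerates the survivors to see that the would-be leading order of $q(\mathcal A)-\tfrac12$ is $O(d^{-2})$ (matching the ``first sight $O(1/d^2)$'' of Appendix~\ref{sec:8-pt-functions}), with a tower of corrections over the orders $d^{-2},d^{-4},\dots,d^{-2(m-1)}$ all of which must disappear.

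Third --- and this is the hard part --- one must prove that all of these would-be contributions cancel, leaving only the genuinely connected term at order $d^{-2m}$. This is the structural generalization of the ``nice cancellation'' (e.g.\ $(d^4-8d^2+6)+(-d^3+4d)d=-4d^2+6$) that collapses the $m=2$ answer in Appendix~\ref{sec:8-pt-functions}. I would try two routes. \emph{(i) Induction on $m$:} peel the outermost conjugated pair $U^\dagger B_1 U$ (leftmost in $K$) and $U^\dagger B_1^\dagger U$ (rightmost in $K^\dagger$) off the strings, and argue that performing only the part of the Haar average that involves those two copies reduces the problem to the $m-1$ case times a factor $O(d^{-2})$ --- small because the exposed inner commutator contributes through its connected part only, exactly the mechanism behind the $-1/(d^2-1)$ in Eq.~\eqref{eq:OTO_formula}. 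The obstruction is that the single $U$ is shared by all $m$ layers: Weingarten functions do not factorize cleanly across a subset of indices, so the peeling must be executed inside the $S_{2m}$ sum, showing that any $\pi$ or $\sigma$ not ``refining'' the conjugate pairing on the outer layer is suppressed and that the remainder reorganizes into a reduced $S_{2(m-1)}$ sum plus $O(d^{-2})$-smaller pieces. \emph{(ii) Symmetry:} use the involution $A_j\mapsto A_j^\dagger$, equivalently the conjugate-palindrome symmetry of the two operator strings, to pair Weingarten terms of opposite sign in every order below $d^{-2m}$, the way the two halves cancel in the Clifford computation of \S\ref{sec:haar-averages}. Either way, once only the connected term survives, a short evaluation of its (commutation-dependent) coefficient --- as in the $m\le2$ cases --- gives the claimed $\sim d^{-2m}$ scaling. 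I expect the cancellation bookkeeping over $S_{2m}$, rather than the reduction in the first two steps, to be where the real difficulty lies.
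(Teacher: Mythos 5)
Your first two steps are correct and genuinely cleaner than what the paper does: by expanding $K=\tilde B_1 A_2\tilde B_2\cdots A_m\tilde B_m$ in Paulis you show the Haar-averaged correlator equals $1-2q(\mathcal A)$ with $q(P)=\langle|h_P(U)|^2\rangle_{\mathrm{Haar}}$ a probability distribution on the $d^2$ Paulis, and the conjecture reduces to the manifestly sensible statement that $K$'s operator-spreading distribution is ``balanced'' relative to $A_1$ up to $O(d^{-2m})$. That reframing makes the $m=1$ case immediate from $\Phi^{(2)}_{\mathrm{Haar}}$, as you note, and it isolates exactly what has to be shown. But you then honestly concede that step three---proving the tower of would-be $O(d^{-2}),\dots,O(d^{-2(m-1)})$ terms in the $S_{2m}$ Weingarten sum all cancel---is unresolved, and neither of your two sketched routes (peeling the outer pair inside the Weingarten sum, or a palindrome/involution pairing) is carried through. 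Your own diagnosis of the obstruction to route (i) is exactly right: a single $U$ threads all $m$ layers, so Weingarten does not factorize across a subset of copies, and the peeling must be pushed through the full $S_{2m}$ permutation sum.

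The paper (Appendix~\ref{sec:k-point-averages}) takes a different and considerably more elementary route that sidesteps the Weingarten bookkeeping entirely. It is built on the Pauli sum rule $\frac1{d}\sum_{B}B\otimes B^{\dagger}=\mathrm{SWAP}$ from Eq.~\eqref{eq:fig_SWAP}: summing $\mathrm{OTO}^{(4m)}$ over \emph{all} Pauli $B_1$ (including $I$) produces a SWAP insertion that forces the whole expression to vanish whenever $B_m\neq I$; meanwhile the single term with $B_1=I$ collapses, because $\tilde I=I$, to a $4(m-1)$-point correlator of the same type with $A_1A_2$ replacing $A_1$. Together these give the exact recursion $\left.\mathrm{OTO}^{(4m)}\right|_{B_1\neq I}=-\frac{1}{d^2-1}\,\mathrm{OTO}^{(4(m-1))}(A_1A_2,B_2,\dots)$, and iterating down to the $m=1$ answer $-1/(d^2-1)$ produces $(-1/(d^2-1))^m\sim d^{-2m}$. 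Note that this too is only a ``supporting argument,'' not a proof---it controls only the \emph{$B$-averaged} correlator, and the paper explicitly cautions that the unaveraged value depends on the detailed commutation relations among the $A_j,B_j$. But it gets the scaling directly, with no need to diagnose cancellations among $\mathrm{Wg}(\pi)$ coefficients; the cancellations that occupy your step three are absorbed automatically into the sum rule. If you want to push your $q(P)$ reformulation to a complete argument, the natural move is to import this same trick: apply $\sum_{B_1}$ to $q(P)$ itself (turning one $U^\dagger B_1 U$ layer into a SWAP) and derive the analogous recursion for the imbalance $q(\mathcal A)-\tfrac12$, rather than attacking the $S_{2m}$ sum head-on.

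One last caution on orderings: the conjecture's correlator is $\langle A_1 K A_1^\dagger K^\dagger\rangle$ with $K=\tilde B_1 A_2\tilde B_2\cdots A_m\tilde B_m$, which is what you analyze, whereas the $\mathrm{OTO}^{(4m)}$ defined in Appendix~\ref{sec:k-point-averages} uses the ordering $A_1\tilde B_1\cdots A_m\tilde B_m\,A_m^\dagger\tilde B_{m-1}^\dagger\cdots\tilde B_1^\dagger A_1^\dagger\tilde B_m^\dagger$. For $m=1$ these coincide; for $m\ge 2$ they are distinct commutator-type orderings (both shown to scale as $d^{-4}$ at $m=2$ in Appendix~\ref{sec:8-pt-functions}). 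If you do transplant the paper's recursion into your framework, check that the SWAP insertion still closes the relevant loops for the ordering you are actually using.
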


For $m=1,2$, we recover the analytical results we have already obtained. Below, we provide a supporting argument for this conjecture.

First, let's consider $8$-point functions using the method from \S\ref{sec:haar-averages}. We are interested in computing the following quantity
\begin{align}
\text{OTO}^{(8)}(A_{1},B_{1},A_{2},B_{2}) =\langle A_{1} \tilde{B_{1}}
A_{2} \tilde{B_{2}} A_{2}^{\dagger} {\tilde{B_{1}}}^{\dagger}A_{1}^{\dagger} {\tilde{B_{2}}}^{\dagger}
 \rangle_{\text{Haar}}, \qquad B_{1},B_{2}\not=I, \quad A_{1}A_{2}\not= I.
\end{align}
From a simple calculation, we find
\begin{align}
\sum_{B_{1}}\text{OTO}^{(8)}(A_{1},B_{1},A_{2},B_{2})=0, \qquad B_{2}\not=I.
\end{align}
If $B_{1}=I$, we have 
\begin{align}
\text{OTO}^{(8)}(A_{1},I,A_{2},B_{2}) =\langle A_{1}A_{2} \tilde{B_{2}} A_{2}^{\dagger} A_{1}^{\dagger} {\tilde{B_{2}}}^{\dagger}
 \rangle_{\text{Haar}} 
= \text{OTO}^{(4)}(A_{1}A_{2},B_{2}) = -\frac{1}{d^2 -1},
\end{align}
since $A_{1}A_{2}\not=I$ and $B_{2}\not=I$. Thus, we have 
\begin{align}
|\text{OTO}^{(8)}(A_{1},B_{1},A_{2},B_{2})|_{B_{1},B_{2}\not=I, \ A_{1}A_{2}\not= I}
= \frac{1}{(d^2-1)^2}.
\end{align}
This method shows what we already know, that the Haar average of commutator-type $8$-point OTO correlators equals $\sim d^{-4}$. 

Now, let us consider the Haar average of commutator-type $4m$-point OTO correlation functions\footnote{The ordering here is inspired by the correlators considered in \cite{Shenker:2013yza} and is related to a $2$-point correlation function in an AdS black hole geometry perturbed by $(2m-1)$-shock waves.}
\begin{align}
\text{OTO}^{(4m)}(A_{1},B_{1},\ldots, A_{m},B_{m}) =\langle 
A_{1} \tilde{B_{1}} \cdots A_{m} \tilde{B_{m}} 
A_{m}^{\dagger} {\tilde{B_{m-1}}}^{\dagger}A_{m-1}^{\dagger}\cdots
{\tilde{B_{1}}}^{\dagger}A_{1}^{\dagger} {\tilde{B_{m}}}^{\dagger}
 \rangle_{\text{Haar}} .
\end{align}
 By a recursive argument, we can show the following
\begin{align}
\text{OTO}^{(4m)}(A_{1},B_{1},\ldots, A_{m},B_{m})|_{B_{1},B_{2},\ldots,B_{k}\not=I, \ A_{1}A_{2}\cdots A_{k}\not= I}
= \left(\frac{-1}{d^2-1}\right)^k. \label{eq:decay}
\end{align}
The  average of $\text{OTO}(A_{1},B_{1},\ldots, A_{m},B_{m})$ over $B_{1}$ is zero as long as $B_{m}\not=I$. If $B_{1}=I$, then we have
\begin{align}
\text{OTO}^{(4m)}(A_{1},I,\ldots, A_{m},B_{m})
= \text{OTO}^{(4m-4)}(A_{1}A_{2},B_{2},\ldots, A_{m},B_{m}).
\end{align}
Thus, we see
\begin{align}
&\text{OTO}^{(4m)}(A_{1},B_{1},\ldots, A_{m},B_{m})|_{B_{1},\ldots,B_{k}\not=I, \ A_{1}\cdots A_{k}\not= I}\\
&= \left(\frac{-1}{d^2-1}\right)\text{OTO}^{(4m-4)}(A_{1}A_{2},B_{2},\ldots, A_{m-1},B_{m-1})|_{B_{2},\ldots,B_{k}\not=I, \ A_{1}\cdots A_{k}\not= I}.
\end{align}

This argument suggests that the Haar average of $4m$-point OTO correlators have a scaling $\sim d^{-2m}$. One would need to evaluate the Weingarten function for $S_{2m}$ in order to check this exactly, and we will not attempt to do that. In particular, the Haar average of $\text{OTO}^{(4)}(A,B)$ does not depend on $A,B$ as long as $A,B\not = I$. For $8$-point or higher-point OTO correlators, $\text{OTO}^{(4m)}(A_{1},B_{1},\ldots, A_{m},B_{m})$ depends on the details of $A_{1},B_{1},\ldots, A_{m},B_{m}$, even when we impose $B_{1},\ldots,B_{k}\not=I$ and $A_{1}\cdots A_{k}\not= I$. Namely, this suggests that the exact result will depend on the commutation relations of the $A_{j}$ and $B_{j}$, and we do not know an exact form of dependence. Our argument can only tell us that Eq.~\eqref{eq:decay} has a value that scales as $O\left(d^{-2m}\right)$.

Finally, for these commutator-type higher-point OTO correlators, we can compute Clifford averages easily
\begin{align}
\langle A_{1} (\tilde{B_{1}} \cdots A_{m} \tilde{B_{m}}) A_{1}^{\dagger} (\tilde{B_{1}} \cdots A_{m} \tilde{B_{m}})^{\dagger}  \rangle_{\text{Clifford}},
\sim \frac{1}{d^{2}}, \qquad \text{(commutator type)},
\end{align}
by using the fact that $A_{j}$ and $\tilde{B_{j}}$ are Pauli operators. This supports our belief that higher-point correlators might be useful probes of whether an ensemble forms a $k$-design but does not form a higher-design.

\section{More (general) frame potentials}\label{sec:sub-space-randomization}

In this appendix, we generalize the notion of unitary design for states described by an arbitrary density matrix $\rho$. We will find two possible generalizations of the frame potential:
\begin{align}
\mathcal{F}^{(k)}_{\mathcal{E}}(\rho)&= \iint dU dV \ [\tr \{ \rho^{1/k} UV^{\dagger}\} \, \tr \{ \rho^{1/k} VU^{\dagger} \}]^{k}\label{eq:generalization1}, \\
\mathcal{G}_{\mathcal{E}}^{(k)}(\rho)&= \iint dU dV \ [\tr \{ \rho^{1/k} UV^{\dagger} \} \, \tr \{ \rho^{1/k} U^{\dagger}V \}]^{k}.\label{eq:generalization2}
\end{align}
The difference is the ordering of $V$ and $U^{\dagger}$ in the second trace. For the maximally mixed state $\rho= \frac{I}{d}$, both expressions are reduced to the original frame potential $F_{\mathcal{E}}^{(k)}$ up to factor of proportionality:
\begin{align}
\mathcal{F}_{\mathcal{E}}^{(k)}\Big(\frac{I}{d}\Big)=\mathcal{G}_{\mathcal{E}}^{(k)}\Big(\frac{I}{d}\Big)= \frac{1}{d^2}F^{(k)}_{\mathcal{E}}.
\end{align}
However, the two expressions behave quite differently when $\rho$ is not the maximally mixed state. Below, we explain how we arrived at these expressions and study their basic properties such as their Haar average values and the fact that they are minimized when the ensemble is a $k$-design. Our conclusion is that the first expression, $\mathcal{F}_{\mathcal{E}}^{(k)}$, seems to be a more appropriate generalization of the frame potential.

\subsection*{Properties}

The original $k$th frame potential, defined for $\rho\propto I$, is minimized if and only if the unitary ensemble $\mathcal{E}$ is a $k$-design. We expect any sensible generalization of the frame potential should have similar minimization property. If so, then e.g. if $\mathcal{F}_{\mathcal{E}}^{(k)}(\rho)=\mathcal{F}_{\text{Haar}}^{(k)}(\rho)$, we can think of the ensemble $\mathcal{E}$ as forming $k$-design with respect to the state $\rho$.  We prove the following lemma.

\begin{lemma}
For generalized frame potentials $\mathcal{F}_{\mathcal{E}}^{(k)}(\rho)$ and $\mathcal{G}_{\mathcal{E}}^{(k)}(\rho)$, we have
\begin{align}
\mathcal{F}^{(k)}_{\mathcal{E}}(\rho)\geq \mathcal{F}^{(k)}_{\text{Haar}}(\rho), \qquad \mathcal{G}^{(k)}_{\mathcal{E}}(\rho)\geq \mathcal{G}^{(k)}_{\text{Haar}}(\rho),
\end{align}
with equality if $\mathcal{E}$ is a $k$-design.
\end{lemma}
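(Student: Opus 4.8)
The plan is to adapt, in a weighted form, exactly the variational argument reprinted above for $F^{(k)}_{\mathcal{E}}$ (write the relevant difference as $\tr\{S^{\dagger}S\}\ge 0$), handling $\mathcal{F}$ and $\mathcal{G}$ in parallel. Set $R:=(\rho^{1/k})^{\otimes k}$, a positive operator on $\mathcal{H}^{\otimes k}$ with positive square root $R^{1/2}=(\rho^{1/2k})^{\otimes k}$, and record the elementary identities $\big(\tr\{\rho^{1/k}UV^{\dagger}\}\big)^{k}=\tr\{R\,U^{\otimes k}(V^{\otimes k})^{\dagger}\}$ and $\big(\tr\{\rho^{1/k}U^{\dagger}V\}\big)^{k}=\tr\{R\,(U^{\otimes k})^{\dagger}V^{\otimes k}\}$. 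Since $R$ is Hermitian, $\overline{\tr\{R\,U^{\otimes k}(V^{\otimes k})^{\dagger}\}}=\tr\{R\,V^{\otimes k}(U^{\otimes k})^{\dagger}\}$, so the integrand of $\mathcal{F}^{(k)}_{\mathcal{E}}(\rho)$ equals the manifestly nonnegative $\big|\tr\{R\,U^{\otimes k}(V^{\otimes k})^{\dagger}\}\big|^{2}$. Combining the two traces in $\mathcal{G}$ via $\tr X\cdot\tr Y=\tr(X\otimes Y)$, the integrand of $\mathcal{G}^{(k)}_{\mathcal{E}}(\rho)$ equals $\tr\big\{(R\otimes R)\,L_{U}L_{V}^{\dagger}\big\}$ where $L_{W}:=W^{\otimes k}\otimes (W^{\otimes k})^{\dagger}$ acts on $\mathcal{H}^{\otimes k}\otimes\mathcal{H}^{\otimes k}$; note $L_{W}$ is unitary.

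For $\mathcal{F}$, I would introduce the positive semidefinite operator $\Theta_{\mathcal{E}}:=\int_{\mathcal{E}}dU\ \mathrm{vec}(R^{1/2}U^{\otimes k})\,\mathrm{vec}(R^{1/2}U^{\otimes k})^{\dagger}$ on $\mathcal{H}^{\otimes k}\otimes\mathcal{H}^{\otimes k}$, using the vectorization with $\mathrm{vec}(X)^{\dagger}\mathrm{vec}(Y)=\tr\{X^{\dagger}Y\}$. A direct computation gives $\tr\{\Theta_{\mathcal{E}}^{2}\}=\iint dU\,dV\,|\tr\{R\,V^{\otimes k}(U^{\otimes k})^{\dagger}\}|^{2}=\mathcal{F}^{(k)}_{\mathcal{E}}(\rho)$. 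For the cross term $\tr\{\Theta_{\mathcal{E}}\Theta_{\mathrm{Haar}}\}$, fix $U\in\mathcal{E}$ and do the inner Haar integral over $V$; the substitution $V\mapsto VU$ (right-invariance of Haar) kills all $U$-dependence, leaving the constant $\int_{\mathrm{Haar}}dV\,|\tr\{R\,V^{\otimes k}\}|^{2}$, so $\tr\{\Theta_{\mathcal{E}}\Theta_{\mathrm{Haar}}\}=\tr\{\Theta_{\mathrm{Haar}}^{2}\}=\mathcal{F}^{(k)}_{\mathrm{Haar}}(\rho)$, exactly as in the proof above. Then $0\le\tr\{(\Theta_{\mathcal{E}}-\Theta_{\mathrm{Haar}})^{\dagger}(\Theta_{\mathcal{E}}-\Theta_{\mathrm{Haar}})\}=\mathcal{F}^{(k)}_{\mathcal{E}}(\rho)-\mathcal{F}^{(k)}_{\mathrm{Haar}}(\rho)$. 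For $\mathcal{G}$ I would run the same scheme with $Y_{\mathcal{E}}:=\int_{\mathcal{E}}dU\,(R^{1/2}\otimes R^{1/2})\,L_{U}$: using the $U\leftrightarrow V$ symmetry of the double integral one checks $\tr\{Y_{\mathcal{E}}^{\dagger}Y_{\mathcal{E}}\}=\iint dU\,dV\,\tr\{(R\otimes R)\,L_{V}L_{U}^{\dagger}\}=\mathcal{G}^{(k)}_{\mathcal{E}}(\rho)$, and the cross term collapses because $\Pi:=\int_{\mathrm{Haar}}dV\,L_{V}$ satisfies $\Pi L_{W}=L_{W}\Pi=\Pi$ for every unitary $W$, so that $\tr\{Y_{\mathcal{E}}^{\dagger}Y_{\mathrm{Haar}}\}=\tr\{(R\otimes R)\,\Pi\}=\mathcal{G}^{(k)}_{\mathrm{Haar}}(\rho)$ independently of $\mathcal{E}$; since $R\otimes R$ and $\Pi$ are Hermitian this number is real, and positivity of $\tr\{(Y_{\mathcal{E}}-Y_{\mathrm{Haar}})^{\dagger}(Y_{\mathcal{E}}-Y_{\mathrm{Haar}})\}$ gives $\mathcal{G}^{(k)}_{\mathcal{E}}(\rho)\ge\mathcal{G}^{(k)}_{\mathrm{Haar}}(\rho)$.

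For the equality clause it suffices to note that in both cases the integrand defining $\Theta_{\mathcal{E}}$ (resp.\ $Y_{\mathcal{E}}$) has matrix elements that are fixed linear combinations of the monomials $(U^{\otimes k})_{ab}\,\overline{(U^{\otimes k})_{cd}}$, i.e.\ balanced monomials of degree at most $k$ in the entries of $U$ and $\bar U$; hence if $\mathcal{E}$ is a unitary $k$-design then $\Theta_{\mathcal{E}}=\Theta_{\mathrm{Haar}}$ (resp.\ $Y_{\mathcal{E}}=Y_{\mathrm{Haar}}$) by the definition of design recalled in \S\ref{sec:review}, forcing equality. The main obstacle I anticipate is purely organizational: choosing the orderings of $U$'s, $V$'s, and $R^{1/2}$'s in $\Theta_{\mathcal{E}}$ and $Y_{\mathcal{E}}$ so that $\tr\{\Theta_{\mathcal{E}}^{2}\}$/$\tr\{Y_{\mathcal{E}}^{\dagger}Y_{\mathcal{E}}\}$ reproduces the correct one of the two inequivalent orderings (this is where $\mathcal{F}$ and $\mathcal{G}$ split), and making the Haar cross term $U$-independent — for $\mathcal{G}$ a naive change of integration variable does not suffice and one must instead invoke the two-sided invariance of $\Pi$. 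Everything else (positivity of $\tr\{S^{\dagger}S\}$, the balanced-monomial observation) is routine.
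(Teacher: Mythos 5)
Your argument for $\mathcal{F}^{(k)}$ is correct and is essentially a repackaging of the paper's proof: the paper sandwiches $S\equiv\int_{\mathcal{E}}U^{\otimes k}\otimes(U^\dagger)^{\otimes k}-\int_{\mathrm{Haar}}U^{\otimes k}\otimes(U^\dagger)^{\otimes k}$ as $\mathrm{tr}\{\sigma_L^{1/2}S\sigma_R^{1/2}\cdot(\sigma_L^{1/2}S\sigma_R^{1/2})^\dagger\}\geq0$, while you vectorize $R^{1/2}U^{\otimes k}$ and work with the positive semidefinite $\Theta_{\mathcal{E}}$; both are the same Cauchy--Schwarz completion of the square. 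The crucial structural fact that makes the cross term collapse in both versions is that the $\mathcal{F}$ integrand $|\,\mathrm{tr}\{R\,(VU^\dagger)^{\otimes k}\}\,|^2$ is a function of $VU^\dagger$ \emph{alone}, so right-invariance of Haar eliminates the $U$-dependence; your substitution $V\mapsto VU$ is exactly this.

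Your $\mathcal{G}^{(k)}$ argument, however, has a genuine gap, and I believe it cannot be repaired. The claimed invariance $\Pi L_W = L_W\Pi = \Pi$ is false: since $(V^\dagger)^{\otimes k}(W^\dagger)^{\otimes k}=(V^\dagger W^\dagger)^{\otimes k}$ whereas $((VW)^\dagger)^{\otimes k}=(W^\dagger V^\dagger)^{\otimes k}$, one has $L_V L_W = (VW)^{\otimes k}\otimes(V^\dagger W^\dagger)^{\otimes k}\neq L_{VW}$ unless $V$ and $W$ commute, so $W\mapsto L_W$ is not a group homomorphism and the Haar change-of-variables on $\Pi=\int_{\mathrm{Haar}}dV\,L_V$ does not absorb $W$. (Concretely at $k=1$: $\Pi=\tfrac{1}{d}\,\mathrm{SWAP}$ and $\Pi L_W=(W^\dagger\otimes W)\Pi$, visibly $W$-dependent.) Hence $\mathrm{tr}\{Y_{\mathcal{E}}^\dagger Y_{\mathrm{Haar}}\}$ is \emph{not} equal to $\mathcal{G}^{(k)}_{\mathrm{Haar}}(\rho)$, and the positivity of $\mathrm{tr}\{(Y_{\mathcal{E}}-Y_{\mathrm{Haar}})^\dagger(Y_{\mathcal{E}}-Y_{\mathrm{Haar}})\}$ does not yield the desired difference. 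The deeper reason is that the $\mathcal{G}$ integrand depends on \emph{both} $UV^\dagger$ and $U^\dagger V$, which are independent group elements, so no single-variable Haar substitution can decouple them; this is exactly the feature the paper's discussion cites in dismissing $\mathcal{G}$ as the less natural generalization. You should also be aware that the paper's own printed proof of the $\mathcal{G}$ bound has this same defect (the stated equality $\mathrm{tr}\{(\rho^{1/k})^{\otimes 2k}SS^\dagger\}=\mathcal{G}^{(k)}_{\mathcal{E}}-\mathcal{G}^{(k)}_{\mathrm{Haar}}$ does not hold), and in fact the asserted inequality $\mathcal{G}^{(k)}_{\mathcal{E}}\geq\mathcal{G}^{(k)}_{\mathrm{Haar}}$ fails: for $k=1$, $\mathcal{E}=\{X,Z\}$ with equal weights and $\rho=\tfrac12(I+Y)$, summing the four terms gives $\mathcal{G}^{(1)}_{\mathcal{E}}(\rho)=\tfrac14(1-1-1+1)=0$, strictly below $\mathcal{G}^{(1)}_{\mathrm{Haar}}(\rho)=1/d^2=1/4$. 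So the $\mathcal{F}$ half of the lemma is solid, but the $\mathcal{G}$ half is overstated as written.
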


\begin{proof}
We begin with $\mathcal{G}^{(k)}$. Consider the following operator 
\begin{align}
S \equiv \int_{\mathcal{E}} dU \ (U)^{\otimes k} \otimes (U^{\dagger})^{\otimes k} - \int_{\text{Haar}} dU \ (U)^{\otimes k} \otimes (U^{\dagger})^{\otimes k}.
\end{align} 
Observe
\begin{align}
\tr \{ (\rho^{1/k}\otimes \rho^{1/k} \otimes \cdots) SS^{\dagger} \}  = \mathcal{G}^{(k)}_{\mathcal{E}}(\rho) - \mathcal{G}^{(k)}_{\text{Haar}}(\rho)\geq 0.
\end{align}

For $\mathcal{F}^{(k)}$, consider the following quantity
\begin{align}
\tr \big\{ \sigma_{L} S\sigma_{R}S^{\dagger} \big\} = \mathcal{F}^{(k)}_{\mathcal{E}}(\rho) - \mathcal{F}^{(k)}_{\text{Haar}}(\rho)\label{eq:frame_subspace}
\end{align}
where $\sigma_{L} =  (\rho^{1/k})^{\otimes k} \otimes (I)^{\otimes k}$ and $\sigma_{R}= (I)^{\otimes k}\otimes (\rho^{1/k})^{\otimes k}$. Define
\begin{align}
X = \sigma_{L}^{1/2} S \sigma_{R}^{1/2},
\qquad X^{\dagger} = \sigma_{R}^{1/2}  S^{\dagger} \sigma_{L}^{1/2}.
\end{align}
We then have 
\begin{align}
\tr \big\{ \sigma_{L} S\sigma_{R}S^{\dagger} \big\} = \tr \{XX^{\dagger} \}\geq 0.
\end{align}
\end{proof}

In \S\ref{sec:OTO_channel}, we showed that the $k$th frame potential is proportional to the average of certain $2k$-point OTO correlators, where the correlators are evaluated for maximally mixed states. Here, we compute an average of OTO correlators for general $\rho$ to derive a candidate expression of generalized frame potential. Inspired by~\cite{Maldacena:2015waa}, we consider regulated OTO correlators of the form
\begin{align}
\langle A_{1} \tilde{B}_{1}\cdots  A_{k}\tilde{B}_{k}\rangle_{\rho}=
\tr \, \{ \rho^{1/2k} A_{1}\rho^{1/2k}\tilde{B}_{1}\cdots \rho^{1/2k} A_{k}\rho^{1/2k}\tilde{B}_{k} \},\label{eq:regulated-def}
\end{align}
instead of unregulated correlators 
\be
\langle A_{1} \tilde{B}_{1}\cdots  A_{k}\tilde{B}_{k}\rangle_{\rho, \text{unregulated}} = \tr \,\{ \rho \, A_{1}\,\tilde{B}_{1}\cdots  A_{k}\,\tilde{B}_{k} \}.\label{eq:unregulated-def}
\ee 
When $\rho$ is the thermal density matrix, Eq.~\eqref{eq:regulated-def} uniformly distributes the operators around the thermal circle. Using Eq.~\eqref{eq:regulated-def}, it is not difficult to prove the following lemma.

\begin{lemma}
For regulated $2k$-point OTO correlators, we have
\begin{align}
\sum_{A_{1},B_{1},\ldots}|\langle A_{1} \tilde{B}_{1}\cdots  A_{k}\tilde{B}_{k}\rangle_{\rho}|^2 \propto \mathcal{F}^{(k)}(\rho). 
\end{align}\label{lemma-generalized-F}
\end{lemma}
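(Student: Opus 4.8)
The plan is to run, essentially verbatim, the computation behind Eq.~\eqref{eq:OTO_frame}, now dragging the regulating insertions $\rho^{1/2k}$ along for the ride. Write $\omega \equiv \rho^{1/2k}$ and $\tilde B_j = U^\dagger B_j U$, and let $\langle A_1\tilde B_1\cdots A_k\tilde B_k\rangle_{\rho}$ denote the regulated correlator already \emph{averaged over} $\mathcal E$ (the same convention as in Eq.~\eqref{eq:OTO_frame}). For a single $U$ this is the trace $\tr\{\omega A_1\omega\,\tilde B_1\,\omega A_2\omega\,\tilde B_2\cdots \omega A_k\omega\,\tilde B_k\}$ on $\mathcal H$; since $\rho$ is Hermitian and positive, its complex conjugate with $U$ replaced by an independent sample $V$ is again a trace of the same shape with every operator daggered and the order reversed. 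Using $\tr\{X\}\tr\{Y\}=\tr_{12}\{X\otimes Y\}$ to write the modulus-squared as a single trace on $\mathcal H^{\otimes 2}$ (the $U$-correlator in copy $1$, the conjugate $V$-correlator in copy $2$), one gets
\begin{align*}
\sum_{A_1,B_1,\ldots}\big|\langle A_1\tilde B_1\cdots A_k\tilde B_k\rangle_{\rho}\big|^2 &= \int_{\mathcal E}\!\!\int_{\mathcal E}\!dU\,dV\ \sum_{A_j,B_j}\tr_{12}\big\{\,(\text{copy-}1\text{ word})\otimes(\text{copy-}2\text{ word})\,\big\}.
\end{align*}

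First I would carry out the $2k$ Pauli sums using $\sum_{P\in\mathcal P}P\otimes P^\dagger = d\,\mathrm{SWAP}$. Each $A_j$ sits as $\omega A_j\omega$ in copy $1$ and $\omega A_j^\dagger\omega$ in copy $2$; since $\rho^{\otimes 2}$ commutes with $\mathrm{SWAP}$, summing over $A_j$ inserts $d\,(\rho^{1/k}\otimes\rho^{1/k})\,\mathrm{SWAP}$ at that location. Each $B_j$ sits as $U^\dagger B_j U$ in copy $1$ and $V^\dagger B_j^\dagger V$ in copy $2$, so summing over $B_j$ inserts $d\,(U^\dagger V\otimes V^\dagger U)\,\mathrm{SWAP}$ (up to an innocuous exchange of the two tensor factors). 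After the $2k$ sums the expression is $d^{2k}$ times a trace on $\mathcal H^{\otimes 2}$ of an alternating word built from $2k$ copies of $\mathrm{SWAP}$, $k$ of them dressed by $\rho^{1/k}\otimes\rho^{1/k}$ (from the $A$'s) and $k$ of them dressed by $U^\dagger V\otimes V^\dagger U$ (from the $B$'s).

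Next I would unbraid this word into loops, exactly as in the graphical proof of Eq.~\eqref{eq:OTO_frame} (cf.\ Fig.~\ref{fig_k-fold_twirl}): an alternating product of $2k$ $\mathrm{SWAP}$'s acting on two copies of $\mathcal H$ closes up into $2k$ disjoint loops, $k$ threading the $U^\dagger V$ beads one way and $k$ the other. Because the $k$ insertions of $\rho^{1/k}$ on each copy are interleaved symmetrically between consecutive $\mathrm{SWAP}$'s, each of the $2k$ loops collects exactly one factor of $\rho^{1/k}$, so each loop evaluates to $\tr\{\rho^{1/k}UV^\dagger\}$ or to $\tr\{\rho^{1/k}VU^\dagger\}$. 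Multiplying the loops, restoring the $d^{2k}$, and integrating over $U,V\in\mathcal E$ gives
\begin{align*}
\sum_{A_1,B_1,\ldots}\big|\langle A_1\tilde B_1\cdots A_k\tilde B_k\rangle_{\rho}\big|^2 &= d^{2k}\!\int_{\mathcal E}\!\!\int_{\mathcal E}\!dU\,dV\ \big[\tr\{\rho^{1/k}UV^\dagger\}\,\tr\{\rho^{1/k}VU^\dagger\}\big]^{k} \\
&= d^{2k}\,\mathcal F^{(k)}_{\mathcal E}(\rho),
\end{align*}
with the constant $d^{2k}$ (one $d$ per Pauli sum) cross-checked by setting $\rho=I/d$ and comparing with Eq.~\eqref{eq:OTO_frame} together with $\mathcal F^{(k)}_{\mathcal E}(I/d)=d^{-2}F^{(k)}_{\mathcal E}$.

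The one genuinely delicate point is that last unbraiding step: tracking which of the $2k$ loops absorbs which $\rho^{1/k}$ and, more importantly, keeping the left-to-right order of the operators inside each trace straight, so that one lands on $\tr\{\rho^{1/k}UV^\dagger\}$ rather than on a transposed or oppositely cyclically ordered variant (these coincide for $\rho\propto I$ but not in general). This is cleanest to execute diagrammatically, since the regulators $\rho^{1/2k}$ are just extra beads threaded onto the wires of the very diagram that already appears for Eq.~\eqref{eq:OTO_frame}: the topology of the contraction is unchanged, and the only new work is the bookkeeping of the beads. The remaining statements — that this quantity reduces to the ordinary frame potential at $\rho=I/d$, and that it is minimized by $k$-designs — have already been recorded above.
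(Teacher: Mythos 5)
Your approach is exactly the one the paper has in mind: the paper never writes out a proof for this lemma, remarking only that it ``is not difficult to prove'' from the regulated correlator definition, so the implicit argument is precisely what you give --- rerun the Theorem~3 computation with the regulating beads $\rho^{1/2k}$ carried along, observe that after the Pauli sums the diagram has the same topology, and check that each of the $2k$ loops absorbs an adjacent pair of beads (one from each copy), i.e.\ a factor $\rho^{1/k}$. Your normalization $d^{2k}$ and the cross-check at $\rho=I/d$ are both correct.

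The ``one genuinely delicate point'' you flag is indeed substantive, and if you carry out the bookkeeping you will find that the ordering lands slightly differently than Eq.~\eqref{eq:generalization1} would lead you to expect. Already at $k=1$: write $\tr\{\omega A\omega\, U^{\dagger}BU\}\,\tr\{V^{\dagger}B^{\dagger}V\,\omega A^{\dagger}\omega\}$ in components with $\omega=\rho^{1/2}$ and apply $\sum_{P}P_{ab}(P^{\dagger})_{cd}=d\,\delta_{ad}\delta_{bc}$; the two resulting loops are $\tr\{\omega^{2}V^{\dagger}U\}$ and $\tr\{\omega^{2}U^{\dagger}V\}$. That is, the regulator lands \emph{between} $U$ and $V^{\dagger}$ --- equivalently $\tr\{U\rho^{1/k}V^{\dagger}\}$ --- and not to the left of both as in $\tr\{\rho^{1/k}UV^{\dagger}\}$. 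The general-$k$ analogue is $d^{2k}\iint dU\,dV\,[\tr\{\rho^{1/k}V^{\dagger}U\}\,\tr\{\rho^{1/k}U^{\dagger}V\}]^{k}$, which coincides with the paper's $\mathcal{F}^{(k)}_{\mathcal{E}}(\rho)$ only after the relabeling $(U,V)\mapsto(U^{\dagger},V^{\dagger})$. This is innocuous whenever $\mathcal{E}$ is closed under adjoint (Haar, Clifford, Pauli), but for a generic ensemble the two quantities can differ. Both orderings are legitimate generalized frame potentials --- the alternative one is $\tr\{\sigma_{L}S\,\sigma_{R}S^{\dagger}\}$ with the roles of $\sigma_{L}$ and $\sigma_{R}$ exchanged in the earlier minimization lemma's proof, and so is also minimized by $k$-designs --- so the discrepancy is harmless for the paper's purposes, but your hedge was well warranted, and a fully careful derivation lands on the other cyclic ordering.
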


This is one of our motivations for considering $\mathcal{F}^{(k)}(\rho)$ as the proper generalization of the frame potential.\footnote{From an operational viewpoint, it might be more sensible to define $\mathcal{F}^{(k)}_{\mathcal{E}}(\rho)$ with $\rho^{1/k} \to \rho$
\begin{align}
\mathcal{F}^{(k)}_{\mathcal{E}}(\rho)&= \iint dU dV \ [\tr \{ \rho UV^{\dagger}\} \, \tr \{ \rho VU^{\dagger} \}]^{k},
\end{align}
so that there is one $\rho$ per trace. However, this means that the correlators Eq.~\eqref{eq:regulated-def} will have many copies of~$\rho$. This is similar to what happens when using correlators to compute R\'enyi entropies.
}
Also, we note that due to the fact that the maximally mixed state is normalized with a factor of $1/d$, for these generalized frame potentials the scaling with respect to $d$ is a bit different from the original definition of the frame potential $F^{(k)}$. For a maximally mixed state $(\rho=I/d)$, we have
\begin{align}
&\text{ \bf generalized} &&\text{ \bf original} &&\text{\bf ensemble} \nonumber \\
&\mathcal{F}^{(k)} = d^{2(k-1)}, &&F^{(k)} = d^{2k}, &&\mathcal{E}=\{I\},\\
&\mathcal{F}^{(k)} = \frac{k!}{d^{2}}, &&F^{(k)} = k!,  &&\mathcal{E}=\text{Haar}.
\end{align}

Next, we will analyze the properties of $\mathcal{F}_{\mathcal{E}}^{(1)}(\rho)$ for arbitrary $\rho$. First, we compute the Haar value:
\begin{align}
\mathcal{F}_{\text{Haar}}^{(1)}(\rho) = \frac{\tr \{ \rho^{2} \}}{d}.
\end{align}
Thus, it behaves differently on pure states vs. mixed states, e.g.
\begin{align}
 &\mathcal{F}^{(1)}_{\text{Haar}} = \frac{1}{d},  &&\text{(pure state)}, \\
 &\mathcal{F}^{(1)}_{\text{Haar}} = \frac{1}{d^2}, &&\text{(maximally mixed state)}.
\end{align}
For $\rho = |\psi\rangle \langle \psi |$ with $|\psi\rangle = |0\rangle^{\otimes n}$, we find that random Pauli-$X$ operators, operators that can either act as $X$ or $I$ on each qubit, achieve the Haar value. To see this, we can evaluate $\tr \, \big\{|\psi\rangle\langle\psi|UV^{\dagger} \big\}$. Since we choose $U$ and $V$ randomly from a set of Pauli-$X$ operators, then $UV^{\dagger}|\psi\rangle$ will be given by a random product state $|a_{1},\ldots,a_{n}\rangle$ with the $a_{j}=0,1$. Since $\tr \, \big\{|\psi\rangle\langle\psi|UV^{\dagger} \big\} =\langle 0,\ldots,0 |a_{1},\ldots,a_{n}\rangle$, the non-zero contribution has probability $1/d$. This means that $\mathcal{F}_{\text{Pauli-$X$}}^{(1)}=1/d$, which is the Haar value. Finally, as before we can use the generalized frame potential to bound the size of the ensemble
\begin{align}
\mathcal{F}^{(1)}_{\mathcal{E}}(\rho) = \frac{1}{|\mathcal{E}|^{2}} \sum_{U,V\in \mathcal{E}} |\tr \{ \rho \, UV^{\dagger} \}|^{2} \geq \frac{1}{|\mathcal{E}|} |\tr \{ \rho \}|^{2} = \frac{1}{|\mathcal{E}|} \quad \rightarrow \quad |\mathcal{E}| \geq \frac{1}{\mathcal{F}^{(1)}_{\mathcal{E}}(\rho)}.
\end{align}
Note that random Pauli-$X$ operators for $\rho=|0\rangle \langle 0 |$ are tight in terms of this lower bound since there are exactly $d$ different Pauli $X$ operators. 

Next, we will analyze the properties of $\mathcal{G}_{\mathcal{E}}^{(1)}(\rho)$. As before, first we compute the Haar value of $\mathcal{G}_{\mathcal{E}}^{(1)}(\rho)$:
\begin{align}
\mathcal{G}_{\text{Haar}}^{(1)}(\rho)=\frac{\tr \{ \rho \}^2}{d^2} = \frac{1}{d^2}.
\end{align}
Note that the result does not depend on $\rho$. The cardinality bound gives
\begin{align}
|\mathcal{E}| \geq \frac{1}{\mathcal{G}^{(1)}}.
\end{align}
Thus, to become a $k$-design requires $|\mathcal{E}|\geq d^2$. Note that this is a tight lower bound since a set of all the Pauli operators consists of $d^2$ different operators. Therefore, the minimization of $\mathcal{G}^{(1)}$ requires a $1$-design, regardless of $\rho$. 

The fact that all these properties of $\mathcal{G}^{(1)}$ are independent of the state $\rho$, in conjugation with Lemma~\ref{lemma-generalized-F}, suggests that $\mathcal{F}^{(k)}$ is the more interesting generalization of the frame potential.
As a result, we will continue by focusing only on $\mathcal{F}^{(k)}$.

We continue by considering $\mathcal{F}_{\text{Haar}}^{(2)}$. The Haar value is
\begin{align}
\mathcal{F}^{(2)}_{\text{Haar}}(\rho) = \frac{2\, \tr \, \{ \rho \}^2}{d^2-1} - \frac{2\, \tr \, \{ \rho^{2}\}}{d(d^2-1)}.
\end{align}
We can evaluate this for a maximally mixed state $\rho = I/d$ and for a pure state $\rho = |\psi\rangle \langle \psi|$, finding
\begin{align}
&\mathcal{F}^{(2)}_{\text{Haar}}=\frac{2}{d^2},  &&\text{(maximally mixed state)}. \\
&\mathcal{F}^{(2)}_{\text{Haar}}=\frac{2}{d(d+1)},  &&\text{(pure state)}.
\end{align} 
On the other hand, the maximum is given by evaluating $\mathcal{F}^{(2)}$ for a trivial ensemble, i.e. $\mathcal{E}=\{I\}$
\begin{align}
&\mathcal{F}_{\{I\}}^{(2)} =  \tr \, \{ \rho^{1/2} \}^4 = d^2,&&\text{(maximally mixed state)}.\\
                     &\mathcal{F}_{\{I\}}^{(2)}   = 1,  &&\text{(pure state)}.
\end{align}
While the minimal values of $\mathcal{F}^{(2)}_{\mathcal{E}}(\rho)$ are of the same order regardless of the state $\rho$, we see that the maximum values are significantly different. 

Finally, there is an easy way to compute the minimal value of the generalized frame potential $\mathcal{F}^{(k)}$ for any pure state e.g. $\rho=|0\rangle \langle 0|$. Let us write the frame potential as follows
\begin{align}
\mathcal{F}_{\mathcal{E}}^{(k)}(|0\rangle \langle 0|) = \frac{1}{|\mathcal{E}|^{2}}\sum_{U,V} \big[\langle 0|UV^{\dagger} |0 \rangle \langle 0|VU^{\dagger} |0 \rangle\big]^{k}.
\end{align}
Next, we define an ensemble of wavefunctions
\begin{align}
\mathcal{E}_{|0\rangle} = \{ U|0\rangle:U \in \mathcal{E} \}.
\end{align}
The frame potential is given by expression
\begin{align}
\mathcal{F}_{\mathcal{E}}^{(k)}(|0\rangle \langle 0|) = \frac{1}{|\mathcal{E}|^{2}}\sum_{|\psi\rangle,|\phi\rangle \in \mathcal{E}_{|0\rangle}}
|\langle \psi|\phi \rangle|^{2k}.
\end{align}
This is minimized when $\mathcal{E}_{|0\rangle}$ forms a $k$-design as an ensemble of quantum states. To obtain the Haar value, recall Eq.~\eqref{haar-random-states-equation} for the $k$-fold average of a Haar random state
\begin{align}
\int_{\text{Haar}} (|\psi\rangle \langle \psi|)^{\otimes k} d\psi
= \frac{\Pi_{\text{sym}}}{\left(
\begin{array}{c}
k+d-1\\
k
\end{array}
\right)},
\end{align}
where $
\Pi_{\text{sym}} = \frac{1}{k!} \sum_{\pi\in S_{k}} W_{\pi}$.
From this, we see that the Haar value of $\mathcal{F}^{(k)}$ evaluated on a pure state will be
\begin{align}
\mathcal{F}_{\text{Haar}}^{(k)}(|0\rangle \langle 0|) = \left(
\begin{array}{c}
k+d-1\\
k
\end{array}
\right)^{-1}.
\end{align}
In summary, we have 
\begin{align}
\begin{array}{cc|cc}
\mathcal{F}_{\mathcal{E}}^{(k)}(\rho) & & \rho=|0\rangle\langle 0| \quad &  \rho=\frac{I}{d}  \\ 
 &&&\\
\hline
&&& \\ 
\mathcal{E}=\{ I \} & & 1 & d^{2(k-1)}\\ 
&&& \\
\mathcal{E}=\text{Haar}&  & \left(
\begin{array}{c}
k+d-1\\
k
\end{array}
\right)^{-1} & 
\begin{array}{c}
k!\\ \hline
d^2
\end{array}  
\end{array}
\end{align}
At large $d$, the Haar value for pure states averaged over the Haar ensemble scales like $\sim k! / d^k$, which for $k>2$ is much smaller than for the Haar average of a mixed state. It's also interesting to point out that if we consider the time average of the generalized frame potential on a pure state $\ket{+}$, this is exactly equal to $k! / d^k$. Thus, in comparison to \S\ref{sec:F-time-average}, the time average of this generalized frame potential can actually obtain an almost Haar value with respect to $\ket{+}$. 

\subsection*{Thermal frame potentials}

The above generalization is incorrect for a system in a thermal state that evolves in time by an ensemble of Hamiltonians. This is because the thermal state $\rho_{\beta} = e^{-\beta H} / \tr \, \{ e^{-\beta H} \}$ depends on the Hamiltonian and thus the ensemble itself. Instead, we will define a \emph{thermal} frame potential $\mathcal{W}^{(k)}_{\beta}(t)$ by taking an average over the squares of $2k$-point OTO correlators as the definition. Define a collection of thermally regulated $2k$-point OTO correlators as 
\begin{align}
\langle A_{1}(0)B_{1}(t)\cdots A_{k}(0)B_{k}(t)\rangle =   \ \frac{\tr \, \big\{ e^{-\frac{\beta H}{2k}} A_{1} \,  e^{-\frac{\beta H}{2k}}  B_{1}(t) \dots  e^{-\frac{\beta H}{2k}} \, A_{k} \,  e^{-\frac{\beta H}{2k}} B_{k}(t) \big\}}{\tr \, \{e^{-\beta H} \}} .
\end{align}
By first averaging over Hamiltonians and then taking an average over all choices of operators of the norm squared of these correlators, we find 
\begin{align}
\mathcal{W}^{(k)}_{\beta}(t)=  \iint dG dH \ \frac{ 
\big|\tr \, \{e^{-(\beta/2k-it )G } e^{-(\beta/2k+it )H }\}\big|^{2k}
}{\tr \, \{e^{-\beta G} \}\tr \, \{e^{-\beta H} \}}.
\end{align}
For $\beta=0$ (infinite temperature) the thermal frame potential reduces to the original frame potential $F^{(k)}$ up to a factor of proportionality
\begin{align}
\mathcal{W}^{(k)}_{\beta=0}(t)= \frac{1}{d^2}\iint dG dH \ 
\big| \tr \, \big\{ e^{iGt}e^{-iHt} \big\} \big|^{2k} = \frac{1}{d^2} F^{(k)} = \mathcal{F}^{(k)}\Big(\frac{I}{d}\Big),
\end{align}
where $F^{(k)}$ is the original frame potential, $\mathcal{F}^{(k)}(I/d)$ is the generalized frame potential from the previous section, and both are evaluated on the ensemble $\mathcal{E}(t)=\{e^{-iHt}\}$.

We can also use $\mathcal{W}^{(k)}_{\beta}(t)$ to make a bound on the cardinality of the ensemble. For simplicity, let's assume that the ensemble $\mathcal{E}(t)$ is a discrete set of $e^{-iHt}$ taken with a uniform weight. This let's us make a bound
\begin{align}
\mathcal{W}^{(k)}_{\beta}(t) \geq \frac{1}{|\mathcal{E}(t)|^2} \sum_{H}  \frac{ 
\tr \, \big\{e^{-\beta H /k }  \big\}^{2k}
}{\tr \, \{e^{-\beta H} \}^2},
\end{align}
or in terms of the cardinality, 
\begin{align}
|\mathcal{E}(t)|^2 \geq
 \frac{1}{\mathcal{W}^{(k)}_{\beta}(t) }\sum_{H} \ \frac{ 
\tr \, \big\{e^{-\beta H /k }  \big\}^{2k}
}{\tr \, \{e^{-\beta H} \}^2}.
\end{align}
For $k=1$, the bound takes a particularly simple form
\begin{align}
|\mathcal{E}(t)| \geq
 \frac{1}{\mathcal{W}^{(k=1)}_{\beta}(t) }.
\end{align}

One nice property of the thermal frame potential is that even when $t=0$ we may obtain a non-trivial bound. For example, with $k=1$ we can express the thermal frame potential as
\begin{align}
\mathcal{W}^{(1)}_{\beta}(0)= \iint dG dH \ 
\frac{ 
\tr \, \big\{e^{-\frac{\beta}{2} G } e^{-\frac{\beta}{2} H }\big\}^{2}
}{\tr \, \{e^{-\beta G} \}\tr \, \{e^{-\beta H} \}}.\label{eq:thermal-t0-bound}
\end{align}
If $\beta=0$, the bound is trivial $|\mathcal{E}(0)|\geq 1$. However, if $\beta >0$ the thermal frame potential can be smaller than unity, which gives a non-trivial lower bound on $|\mathcal{E}(0)|$! We can see this straightforwardly by applying the Cauchy-Schwartz inequality to the integrand of Eq.~\eqref{eq:thermal-t0-bound}
\begin{align} 
\frac{ 
\tr \, \big\{e^{-\frac{\beta}{2} G } e^{-\frac{\beta}{2} H }\big\}^{2}
}{\tr \, \{e^{-\beta G} \}\tr \, \{e^{-\beta H} \}} \leq 1.
\end{align}
The intuition for such a lower bound at $t=0$ comes from the fact that $e^{-\beta G},e^{-\beta H}$ contain imaginary time evolution. We suspect that this will also let us bound the ``complexity of formation'' of such states $\rho_{\beta}$ with respect to an ensemble of Hamiltonians.

\section{More chaos in quantum channels}\label{sec:more-chaos}

Finally, in this appendix we expand on some of the results of \cite{Hosur:2015ylk} that are otherwise somewhat outside the main focus of the current paper: 
\begin{itemize}
\item In \S\ref{sec:more:HP}, we revisit the black hole thought experiment of Hayden and Preskill \cite{Hayden07} and discuss its relationship to the decay of OTO four-point functions.
\item Next, in \S\ref{sec:more:op} we provide an operational meaning to taking averages over OTO correlation functions as introduced in \cite{Hosur:2015ylk}. The argument is partly inspired by a course taught by Kitaev~\cite{Kitaev-Haar}.
\item Finally, in \S\ref{sec:more:renyi} we generalize the relationship between OTO correlators and R\'enyi entropy obtained in \cite{Hosur:2015ylk} relating an average over $2k$-point functions and to an expression involving the $k$th R\'enyi entropy.
\end{itemize}

\subsection{OTO correlators and black holes as mirrors}\label{sec:more:HP}

In the conventional black hole information thought experiment, Alice throws some quantum state (or herself!) into a black hole, the system evolves by some chaotic unitary operator $U$, and then Bob attempts to reconstruct Alice's quantum state (or Alice!) by collecting the outgoing Hawking radiation emitted by the black hole.

Hayden and Preskill \cite{Hayden07} added an interesting twist to this classic setup by assuming that Bob knows both the initial state and the dynamics $U$ of the black hole. In this scenario, they showed that if the dynamics $U$ are sampled from a $2$-design, Alice's $m$-qubit quantum state can be immediately reconstructed by collecting $m+\epsilon$ qubits of the Hawking radiation.
The black hole acts as if it is a quantum information mirror, ``reflecting'' Alice's state almost immediately.

To be more precise, let's split the input state into subsystems $A$ and $B$, and let's split the output state into different subsystems $C$ and $D$. We let $A$ represent Alice or her input quantum state, $B$ represents initial black hole state, $C$ represents the remaining black hole state after some evolution and evaporation, and $D$ represents the emitted Hawking radiation. The black hole dynamics $U$ take $AB$ to $CD$. This setup is shown in Fig~\ref{fig_unitary_state}. In the conventional setting, Bob  has an access to only $D$. However, in the Hayden-Preskill modification, Bob has access to both $B$ and $D$.

\begin{figure}[htb!]
\centering
\includegraphics[width=0.4\linewidth]{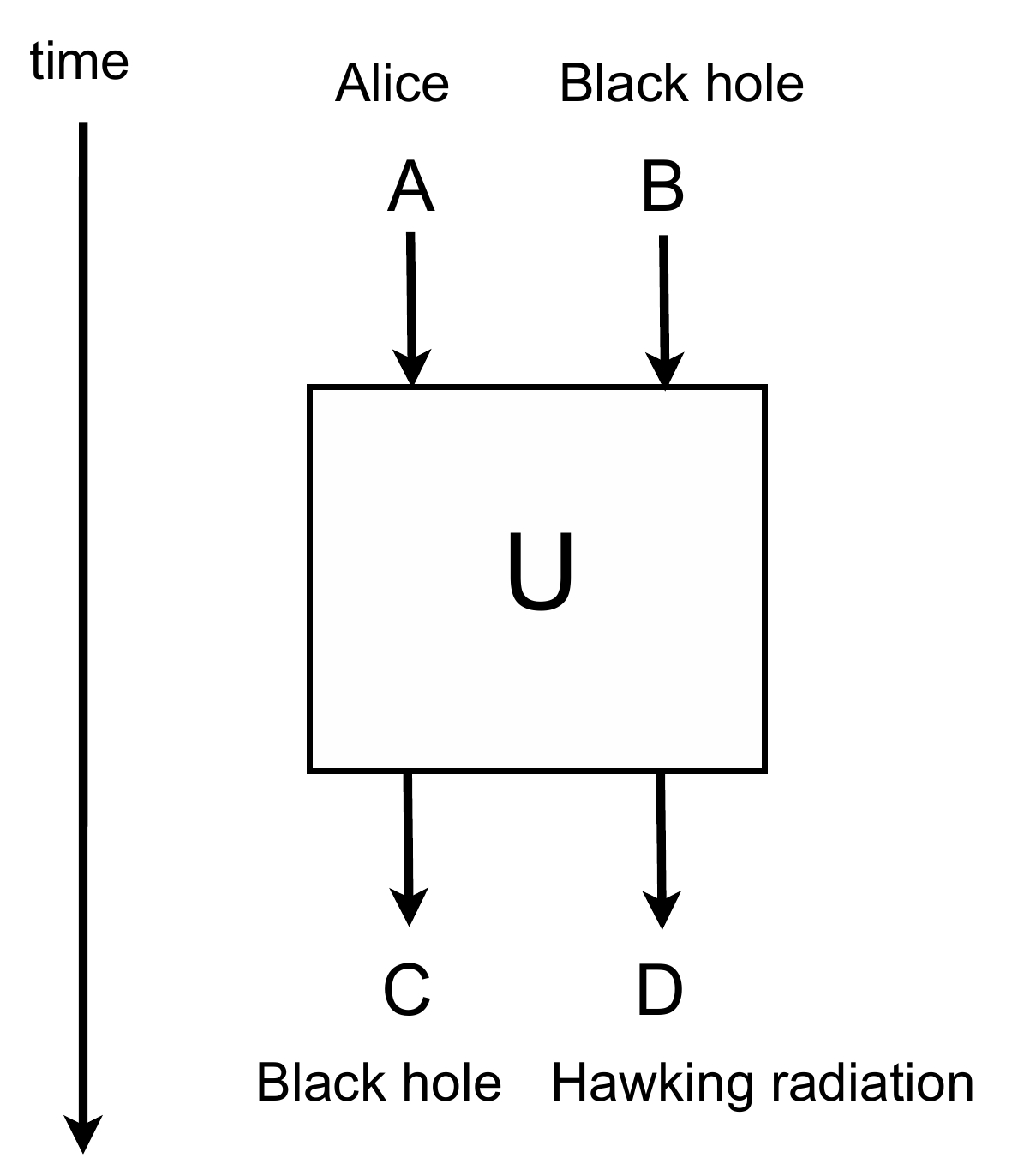}
\caption{A unitary operator $U$ representing the dynamics in the Hayden-Preskill thought experiment.
} 
\label{fig_unitary_state}
\end{figure}

The mirror phenomenon can be better understood by interpreting the unitary operator $U$ as a state via the Choi-Jamilkowski isomorphism~\cite{Hosur:2015ylk}
\begin{align}
U = \sum_{i,j}u_{ij}|i\rangle \langle j | \quad \rightarrow \quad  |U\rangle =  \frac{1}{2^{n/2}} \sum_{i,j}u_{ij}|j\rangle \otimes |i \rangle.\label{eq:channel-state-iso}
\end{align}
The original operator $U$ acts unitarily on $n$-qubit states, while the state $|U\rangle$ is defined on a $2n$-qubit Hilbert space. This interpretation lets us compute entropies and informations between the input and the output. In particular, whether Bob is able to reconstruct Alice's quantum state $A$ can be quantified by the mutual information
\begin{align}
I(A:BD) = S_{A} + S_{BD} - S_{ABD},
\end{align}
where e.g. $S_A = -\tr \, \{ \rho_A \log \rho_A \}$ is the entanglement entropy evaluated on the density matrix $\rho_A =\tr_{BCD}\, \{\ket{U}\bra{U} \}$.
If $I(A:BD)$ is near its maximal value $2S_A$, then Bob can reconstruct Alice's unknown quantum state.\footnote{Here we assumed $S_A\leq S_B,S_D$. For a more precise criteria, see Ref.~\cite{Hayden07}.} It's easy to show that for a $2$-design unitary operator $U$ that the mutual information is close to its maximum $I(A:BD) \approx 2 S_A$, thus enabling reconstruction of Alice's quantum state \cite{Hayden07}.

The information reconstruction problem is closely related to four-point OTO correlators. It was shown in~\cite{Hosur:2015ylk} that four-point OTO correlation functions averaged over all Pauli operators in $A$ and $D$ is related to the second R\'enyi entropy by the formula
\begin{align}
\int dA dD \, \langle A(0)D(t)A^{\dagger}(0)D^{\dagger}(t) \rangle = \frac{d}{d_{A}d_{D}} 2^{- S_{AC}^{(2)}}, \label{eq:answer}
\end{align}
where $\int dA dD$ means an average over all Pauli operators in $A$ and $D$, $S_{AC}^{(2)}$ is the R\'{e}nyi-$2$ entropy for the joint region $AC$, and $d_{A}=2^{S_A}$ is the Hilbert space dimension of $A$, etc. From a simple calculation, we obtain
\begin{align}
- \log_{2} \bigg\{ \int dA dD \, \langle A(0)D(t)A^{\dagger}(0)D^{\dagger}(t) \rangle \bigg\} = I^{(2)}(A:BD)  \leq I^{(1)}(A:BD),
\end{align}
where $I^{(2)}(A:BD)= S_{A}^{(2)} + S_{BD}^{(2)} - S_{ABD}^{(2)}$ is the R\'{e}nyi-$2$ mutual information. This inequality holds since $S_{A}=S_{A}^{(2)}$, $S_{ABD}=S_{ABD}^{(2)}=S_C$, and $S^{(2)}_{BD}\leq S_{BD}$. The decay of all the four-point OTO correlators between Alice $A$ and the Hawking radiation $D$ implies a strong correlation between Alice $A$ and $BD$ (the subsystem that Bob has access to), thus establishing a direct link between four-point OTO correlators and the information reconstruction problem.

\subsection{Operational interpretation}\label{sec:more:op}

In this subsection, we attempt to provide an operational interpretation to sum over out-of-time-order correlation functions discussed in \cite{Hosur:2015ylk} and in this work. 

Consider the following classical analogy. Alice and Bob are playing a game of catch, and Alice is about to throw the ball to Bob. However, Alice is known to cheat sometimes by perturbing the ball (she throws a mean spitball!), and so Bob would like to determine whether Alice has cheated. One way for Bob to check is to ask someone that he trusts, Charlie, to throw an identical ball to Bob. Then, Bob can compare the two balls and determine whether Alice has modified the ball. This setup is shown in Fig.~\ref{fig_catch_ball}. As we will explain, the average of all OTO correlators over the operators being correlated is closely related to the probability of Bob detecting Alice's perturbation.

\begin{figure}[htb!]
\centering
\includegraphics[width=0.75\linewidth]{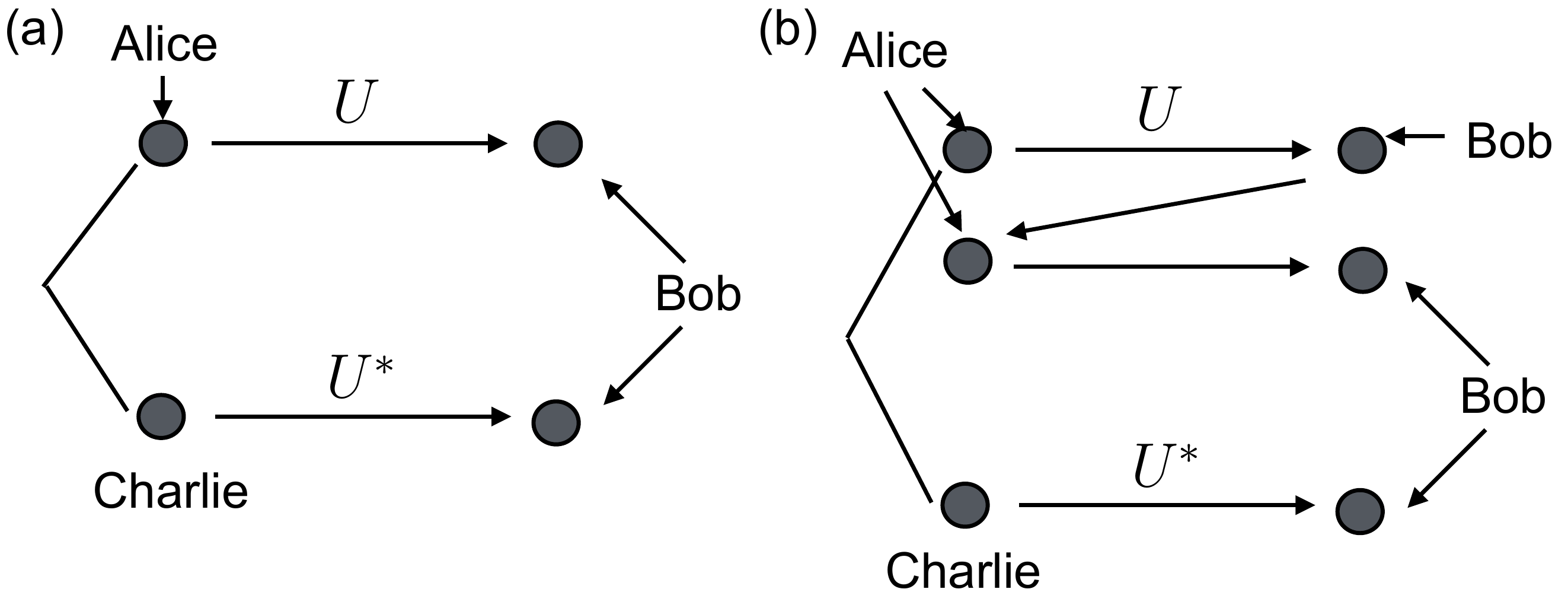}
\caption{
Alice and Bob play a game of catch, and Alice modifies the ball with some nonzero probability. {\bf (a)} Charlie throws an identical but unmodified ball to Bob. This represents an average of $4$-point OTO correlation functions. 
{\bf (b)} Bob throws the ball back to Alice who then throws it back to Bob. This represents an average of $8$-point OTO correlaton functions.  
}
\label{fig_catch_ball}
\end{figure}

In a quantum setting, we prepare the state $\ket{EPR}$ of $n$ EPR pairs and give Alice one half of all the pairs so that Alice's density matrix $\rho_{A}$ is a maximally mixed state. We give the other half of the pairs to Charlie. Thus, Charlie has a perfectly correlated copy of Alice's initial state for Bob to later compare against. Now, Alice applies a perturbation, the Pauli operators $A_{i}$ with probability $p_i$, which
 corresponds to the following superoperator
\begin{align}
\sum_{i} p_i \, A_{i}(\cdot)A_{i}^{\dagger}.
\end{align}
Of course, included in this set is the identity operator $I$ and an associated probability $p_I$ that Alice does not modify her state.\footnote{If $p_{I}=0$, then Bob would already know that Alice always applies a perturbation.}

Next, Alice throws her ball to Bob by applying $U$, and Charlie throws his ball to Bob by applying $U^{*}$.\footnote{The conjugation appears because Charlie's copy of Alice's original state is actually a CPT conjugate. The application of $U^{*}$ can be also interpreted as Bob's \emph{uncomputation}.} The overall state is described by the statistical ensemble
\begin{align}
\rho(p)=\sum_{j}p_j \,(U \otimes U^{*})(A_{i}\otimes I)|EPR \rangle \langle EPR |(A_{i}^{\dagger}\otimes I)(U^{\dagger} \otimes U^{T}).\label{eq:bobs-stat-ensemble}
\end{align}
Bob would like to compare the two quantum states from Alice and Charlie to determine whether Alice applied a perturbation. Note that if Alice did not apply a perturbation $A_{i}=I$, then the final state is equal to the initial state $|EPR\rangle$ because 
\begin{align}
U \otimes U^{*} |EPR\rangle  = |EPR\rangle, \qquad \forall U. \label{eq-U-on-epr-relation}
\end{align}
Therefore, Bob's strategy is to perform a projective measurement $\Pi = |EPR\rangle \langle EPR|$ on his state.
The projection operator $|EPR\rangle \langle EPR|$ can be represented as an average over all the Pauli operators on Bob's state:
\begin{align}
\frac{1}{2^{2n}}\sum_{B_{j} \in \text{Pauli}} B_{j} \otimes B_{j}^{*}  = |EPR\rangle \langle EPR|.
\end{align}
Thus, the probability of Bob's measuring of $|EPR\rangle$ is given by
\begin{align}
\frac{1}{2^{2n}}\sum_{i,j} p_i \, \langle EPR | (A_{i}^{\dagger}\otimes I) (U^{\dagger} \otimes U^{T}) (B_{j}\otimes B_{j}^{*})  (U \otimes U^{*})(A_{i}\otimes I)| EPR \rangle = p_{I},\label{eq-bob-measurement-outcome}
\end{align}
which is easy to check by making use of Eq.~\eqref{eq-U-on-epr-relation}. 
Eq.~\eqref{eq-bob-measurement-outcome} is depicted graphically in Fig.~\ref{fig_full_diagram}, which makes it clear that we can interpret it as an average over four-point OTO correlators of the form\footnote{Note that the result does not depend on the time evolution operator $U$ since Bob performed measurements on the whole system.}
\begin{align}
\frac{1}{2^{2n}}\sum_{i,j} p_i \,  \langle  \tilde{B}^{\dagger}_{j} A_{i}^{\dagger}  \tilde{B}_{j} A_{i}  \rangle = p_{I}.
\end{align}
In the Hayden-Preskill setup, Bob performs a projective measurement $|\text{EPR}\rangle \langle \text{EPR}|$ only on qubits on $D$. Then, assuming that Alice applies perturbations $A_{i}$ to qubits on $A$ with equal probabilities, the probability of Bob's measuring $|\text{EPR}\rangle$ on $D$ is exactly given by Eq.~\eqref{eq:answer}.

\begin{figure}[htb!]
\centering
\includegraphics[width=0.45\linewidth]{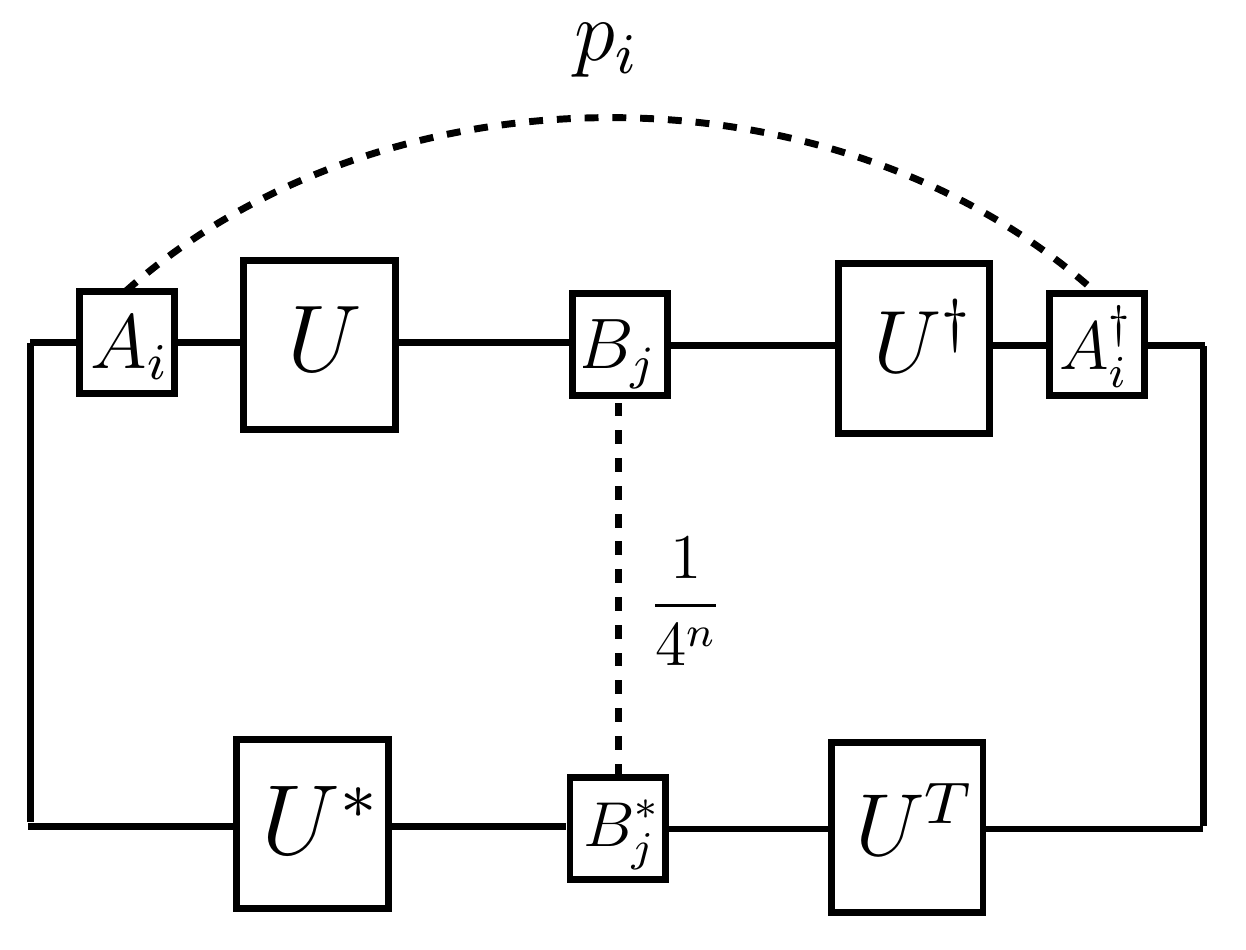}
\caption{A graphical depiction of Eq.~\eqref{eq-bob-measurement-outcome} equal to the probability that Alice did not apply a perturbation, $p_i = p_I$ for $A_i = I$. Starting from the top left and reading clockwise, we see that this corresponds to an average over all choices of $A_i, B_j$ of $4$-point OTO correlators of the form $\langle A_{i}\tilde{B}_{j}A_{i}^{\dagger}\tilde{B}_{j}^{\dagger}\rangle$. Note that the average over $A_i$ is with probabilities $p_i$, but the average over $B_j$ is uniform with probabilities $p_j = 2^{-2n}$.
} 
\label{fig_full_diagram}
\end{figure}

    This game of ``catch'' can be ``generalized'' by considering multiple rounds of throwing the ball back and forth between Alice and Bob. In this case, both Alice and Bob have the option of applying perturbations at any round of the game (except for the final time Bob receives the ball), but Charlie always faithfully throws a copy of the original ball to Bob for comparison at the end. This process turns out to be equal to an average over $4m$-point OTO correlators with the same ordering as those studied in \S\ref{sec:k-point-averages}.
    This setup, with $m=2$ rounds of catch, is shown in Fig~\ref{fig_catch_ball}(b).

\subsection{R\'{e}nyi-$k$ entropy}\label{sec:more:renyi}

Finally, we generalize the relationship between the second R\'enyi entropy and an average over four-point OTO correlators that was obtained in~\cite{Hosur:2015ylk}.\footnote{We thank Xiao-Liang Qi for providing useful insights leading to this generalization. Additionally, the possibility of such a generalization is briefly mentioned in~\cite{fan2016out}.}  We will consider $2k$-point OTO correlation functions of the form
\begin{align}
\left\langle A_{1}\tilde{D}_{1}A_{2}\tilde{D}_{2} \cdots A_{k}\tilde{D}_{k} \right\rangle = \frac{1}{d}\tr\left\{ A_{1}\tilde{D}_{1}A_{2}\tilde{D}_{2} \cdots A_{k}\tilde{D}_{k}\right\}. \label{eq:renyi-form-oto-to-average}
\end{align}
We assume that $A_{j}$ and $D_{j}$ only act on subsets of qubits, and we will denote these subregions by $A$ and $D$, respectively.

We are interested in taking an average of correlators of the form Eq.~\eqref{eq:renyi-form-oto-to-average}, where we will let one of the operators from each subregion $A_k, D_k$ depend on the other operators such that the average gives a permutation operator (see Eq.~\eqref{eq:pauli-decomposition-of-cyc} and the discussion in \S\ref{sec:review}).
In particular, to get a cyclic permutation $\pi_{\text{cyc}}=\pi_{23\ldots k 1}$, we will take an average in the following way
\begin{align}
\frac{1}{{d_{A}}^{2(k-1)}}
\sum_{\substack{A_{1},\ldots,A_{k-1}\in \mathcal{P}, \\ A_{k} = A_{1}^{\dagger}A_{2}^{\dagger}\cdots A_{k-1}^{\dagger} } }, \label{eq:form-of-the-average}
\end{align}
where ${d_{A}}^{2(k-1)}$ is the total number of Pauli operators $A_{1},\ldots,A_{k-1}$ that we average over. We will denote such an average as follows
\begin{align}
|\text{OTO}|_{\pi_{A},\pi_{D}}=\left| \frac{1}{d}\tr \left\{ A_{1}\tilde{D}_{1}A_{2}\tilde{D}_{2} \cdots A_{k}\tilde{D}_{k} \right\}  \right|_{\pi_{A},\pi_{D}},
\end{align}
where the subscript notation $\pi_{A},\pi_{D}$ mean that we took the average such that $A_{1}\otimes A_{2} \otimes \cdots \otimes A_{k}$ and $D_{1}\otimes D_{2} \otimes \cdots \otimes D_{k}$ form the permutation operators associated with $\pi_{A},\pi_{D}$. 

Now, let us represent $U$ as a state $\ket{U}$ by using the channel-state isomorphism Eq.~\eqref{eq:channel-state-iso} discussed in \S\ref{sec:more:HP}. We divide the input into subsystems $A$ and $B$, and we divide the output into subsystems $C$ and $D$. Our key result is a formula relating an average over $2k$-point OTO correlators with operators in $A$ and $D$ to the exponential of the $k$-th R\'enyi entropy of the subsystem $AC$ 
\begin{align}
|\text{OTO}|_{{\pi_{\text{cyc}}}^{-1},\pi_{\text{cyc}}}= \left(\frac{d}{d_{A}d_{D}}\right)^{k-1} 2^{-(k-1)S_{AC}^{(k)}},
\end{align}
with $S_{AC}^{(k)}$ the R\'enyi $k$-entropy of $AC$.  To derive this, we also use the following relation
\begin{align}
\tr\left\{A_{1}\tilde{D}_{1} \cdots A_{k}\tilde{D}_{k} \right\}
= \tr\left\{ (A_{1}\otimes \cdots \otimes A_{k})\cdot (\tilde{D}_{1}\otimes \cdots \otimes \tilde{D}_{k}) \cdot W_{\pi_{\text{cyc}}} \right\},
\end{align}
discussed in the beginning of \S\ref{sec:OTO_channel}, or graphically
\begin{align}
\includegraphics[width=0.6\linewidth]{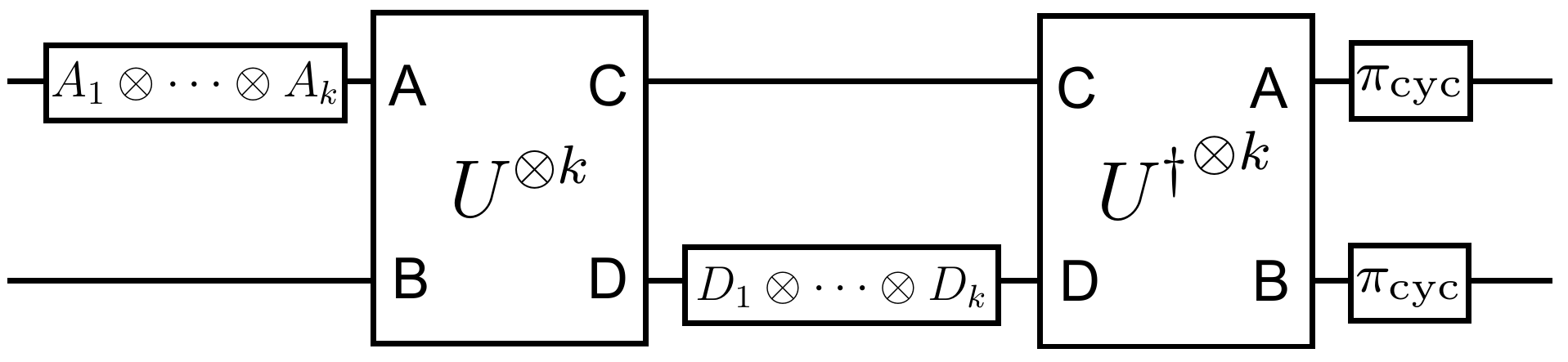}.
\end{align}
From Eq.~\eqref{eq:pauli-decomposition-of-cyc} we know that taking an average of the form Eq.~\eqref{eq:form-of-the-average} replaces $(A_{1}\otimes \cdots \otimes A_{k})$ and $(\tilde{D}_{1}\otimes \cdots \otimes \tilde{D}_{k})$  by permutation operators $W_{\pi_{\text{cyc}^{-1}}}$ and $W_{\pi_{\text{cyc}}}$, which act on $A$ and $D$, respectively. Graphically, we have 
\begin{align}
\includegraphics[width=0.85\linewidth]{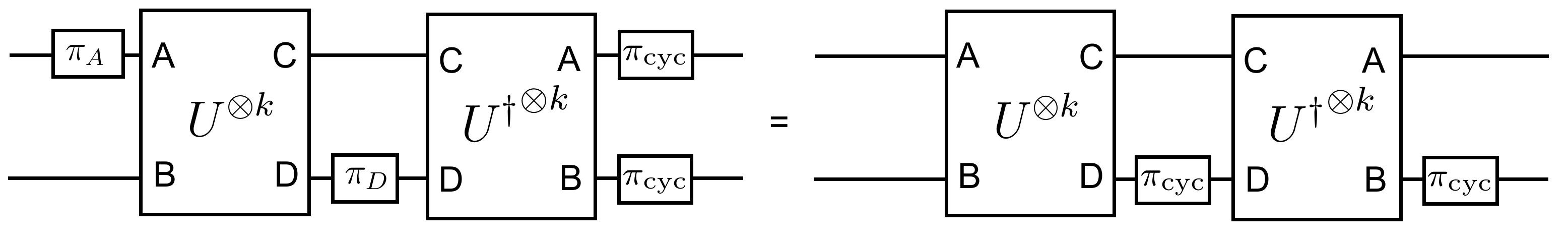}
\end{align}
where $\pi_{A}=\pi_{\text{cyc}}^{-1}$ and $\pi_{D}=\pi_{\text{cyc}}$. Careful consideration of this picture elucidates that it is proportional to $\tr \{ {\rho_{AC}}^{k}\}$.

\mciteSetMidEndSepPunct{}{\ifmciteBstWouldAddEndPunct.\else\fi}{\relax}
\bibliographystyle{utphys}
\bibliography{design}{}
\end{document}